\newcommand{\indep}{\perp \!\!\! \perp}
\DeclareMathOperator*{\argmax}{arg\,max}
\DeclareMathOperator*{\argmin}{arg\,min}
\DeclareMathOperator*{\Argmax}{Arg\,max}
\DeclarePairedDelimiter\ceil{\lceil}{\rceil}
\theoremstyle{plain}
\newtheorem{theorem}{Theorem}[section]
\newtheorem{lemma}[theorem]{Lemma}
\newtheorem{corollary}[theorem]{Corollary}
\theoremstyle{definition}
\newtheorem{definition}[theorem]{Definition}
\newtheorem{assumption}[theorem]{Assumption}
\theoremstyle{remark}
\algrenewcommand\algorithmicindent{1.0em}
\title{
Conformal Robust Control of Linear Systems
}
\author{
  Yash Patel \\
  Department of Statistics \\
  University of Michigan \\
  Ann Arbor, MI 48104 \\
  \texttt{yppatel@umich.edu}
  \And
  Sahana Rayan \\
  Department of Statistics \\
  University of Michigan \\
  Ann Arbor, MI 48104 \\
  \texttt{srayan@umich.edu}
  \And
  Ambuj Tewari \\
  Department of Statistics \\
  University of Michigan \\
  Ann Arbor, MI 48104 \\
  \texttt{tewaria@umich.edu}
}
\begin{document}

\maketitle

%

%





\begin{abstract}
 End-to-end engineering design pipelines, in which designs are evaluated using concurrently defined optimal controllers, are becoming increasingly common in practice. To discover designs that perform well even under the misspecification of system dynamics, such end-to-end pipelines have now begun evaluating designs with a robust control objective. Current approaches of specifying such robust control subproblems, however, rely on hand specification of perturbations anticipated to be present upon deployment or margin methods that ignore problem structure, resulting in a lack of theoretical guarantees and overly conservative empirical performance. We, instead, propose a novel methodology for LQR systems that leverages conformal prediction to specify such uncertainty regions in a data-driven fashion. Such regions have distribution-free coverage guarantees on the true system dynamics, in turn allowing for a probabilistic characterization of the regret of the resulting robust controller. We then demonstrate that such a controller can be efficiently produced via a novel policy gradient method that has convergence guarantees. We then demonstrate the superior empirical performance of our method over alternate robust control specifications, such as $\mathcal{H}_{\infty}$ and LQR with multiplicative noise, across a collection of engineering tasks.
\end{abstract}

\section{Introduction}\label{section:intro}

Seeking control over a family of dynamical systems is a problem often encountered in engineering \citep{killian2016transfer,wu2018learning,aksland2023approach}. One prevalent application of this is in cases where engineering designs and their respective controllers are being concurrently developed, known as control co-design (CCD) \citep{garcia2019control}. Traditional engineering design loops operated sequentially, first proposing a design and then developing a controller \citep{reyer2001comparison,friedland1995advanced}. Such workflows, however, sacrificed the improved optimality possible in their coupling, hence the increasing interest in leveraging end-to-end co-control design pipelines \citep{fathy2001coupling,falck2021dymos}. 

Initial works in CCD studied optimal design assuming perfectly specified, deterministic system dynamics \citep{allison2014co,azad2018comprehensive,azad2019phev,behtash2018decomposition}. Such assumptions have, however, become overly restrictive, resulting in interest in robust extensions of the CCD formulation, referred to as uncertain CCD (UCCD) \citep{azad2022control,azad2023overview,bird2023set}. Such uncertainty can arise from many sources in the design process, such as noise in the controllers, uncertainties in the design parameters, or unmodeled dynamics. The UCCD specification also differs depending on the risk tolerance in the downstream application. For instance, in risk-neutral settings, stochastic specifications are appropriate \citep{azad2020single,cui2022control,behtash2024comparative}, whereas in risk-averse settings, probabilistic \citep{cui2020comparative,cui2020reliability,nguyen2022reliability} or worst-case forms \citep{azad2021robust,nash2021robust,azad2020robust} are used. 

We focus on the worst-case robust UCCD formulation (WCR-UCCD), specifically on dynamics misspecification. WCR-UCCD requires specifying a dynamics uncertainty region. Existing methods of specification, however, tend to be ad-hoc and, thus, fail to provide any guarantees of the robust solution as it relates to the selection of this uncertainty set, rendering its choice often difficult and resulting in suboptimal controller synthesis \citep{azad2020single}.

We, thus, focus herein on providing a principled distribution-free specification of the \textit{robust control subproblem} in WCR-UCCD and an associated solution method with convergence guarantees. One special case of interest in UCCD is in the setting of linear quadratic regulators (LQRs), where the underlying system dynamics take on a linear structure \citep{ahmadi2023lqr,fathy2003nested,jiang2016iterative}. LQR systems are of broad interest both due to their analytic tractability and widespread applicability to practical engineering systems \citep{zhao2024kalman,mamakoukas2019local,bevanda2022towards}. We, therefore, propose a method for specifying the LQR WCR-UCCD control subproblem that lends itself to efficient solution by leveraging conformal prediction on observed design information. 
A related use of conformal prediction for predict-then-optimize problems was recently studied in \citep{patel2024conformal}. Unlike their setting, however, the application of conformal prediction to control has complications related to the stability of controllers under model uncertainty. Our contributions are as follows:

\begin{itemize}
    \item Providing a framework to define robust LQR control problems with distribution-free probabilistic regret guarantees, across deterministic or stochastic and discrete- or continuous-time dynamics, and demonstrating empirical improvements over alternative robust control schemes.
    \item Extending conformalized predict-then-optimize to cases where calibration data is observed with noise and where the domains of both the maximization \textit{and} minimization components of the robust formulation depend on the conformalized predictor.
    \item Providing a novel policy subgradient method for robust controller synthesis with convergence guarantees proven via subgradient dominance.
\end{itemize}

\section{Background}\label{section:background}

\subsection{Conformal Prediction}
Conformal prediction is a principled, distribution-free approach of uncertainty quantification \citep{angelopoulos2021gentle, shafer2008tutorial}. ``Split conformal,'' the most common variant of conformal prediction, is used as a wrapper around predictors $\widehat{f} : \mathcal{X}\rightarrow\mathcal{Y}$ such that prediction \textit{regions} $\mathcal{C}(x)$ are returned in place of the point predictions $\widehat{f}(x)$. Prediction regions $\mathcal{C}(x)$ are sought to have coverage guarantees on the true $y := f(x)$. That is, for some prespecified $\alpha$, we wish to have $\mathcal{P}_{X,Y}(Y\in \mathcal{C}(X))\ge1-\alpha$.

To achieve this, split conformal partitions the overall dataset $\mathcal{D}$ into two subsets, $\mathcal{D}_{T}\cup\mathcal{D}_{C}$, respectively the training and calibration datasets. After fitting $\widehat{f}$ on the training subset, the calibration set is used to measure the anticipated ``prediction error'' for future test points. Formally, this error is quantified via a score function $s(x,y)$, which generalizes the classical notion of a residual. In particular, scores are evaluated on the calibration dataset to define $\mathcal{S} := \{s(x,y)\mid (x,y)\in\mathcal{D}_{C}\}$. Denoting the $\ceil{(|\mathcal{D}_{C}|+1)(1-\alpha)}/|\mathcal{D}_{C}|$ empirical quantile of $\mathcal{S}$ as $\widehat{q}$, conformal prediction defines $\mathcal{C}(x)$ to be $\{y \mid s(x, y) \le \widehat{q}\}$. Such $\mathcal{C}(x)$ satisfies the aforementioned coverage guarantees under the exchangeability of future test points $(x',y')$ with points from $\mathcal{D}_{C}$. While the coverage guarantee holds for any $s(x,y)$, the sizes of the resulting prediction regions, known as the procedure's ``predictive efficiency,'' is dependent on its choice \citep{shafer2008tutorial}. 

\subsection{Robust Predict-Then-Optimize}\label{section:bg_pred_opt}
Predict-then-optimize problems are nominally formulated as $w^{*}(x) := \min_{w\in\mathcal{W}} \mathbb{E}[f(w, C)\mid x],$ where $w$ are decision variables, $C$ an \textit{unknown} cost parameter, $x$ observed contextual variables, $\mathcal{W}$ a compact feasible region, and $f(w, c)$ an objective function that is convex-concave and $L$-Lipschitz in $c$ for any fixed $w$. The nominal approach defines a predictor $\widehat{g} : \mathcal{X}\rightarrow\mathcal{C}$, where the prediction $\widehat{c} := \widehat{g}(x)$ is leveraged, i.e. taking $w^{*}(x) := \min_{w} f(w, \widehat{c})$. 
Such an approach, however, is inappropriate in safety-critical settings, given that $\widehat{g}$ will likely be misspecified and, thus, result in decisions that are suboptimal under the true cost parameter, $c$. For this reason, robust alternatives to the nominal formulation have become of interest \citep{chenreddy2022data,sadana2024survey,chenreddy2024end}. We focus on the following formulation from \citep{patel2024conformal}:
\begin{equation}\label{eqn:reform_obj}
w^{*}(x) := \min_{w} \max_{\widehat{c}\in\mathcal{U}(x)} \quad f(w, \widehat{c})
\end{equation}
where $\mathcal{U} : \mathcal{X}\rightarrow\mathcal{F}$ is a uncertainty region predictor, with $\mathcal{F}$ being the $\sigma$-field of $\mathcal{C}$, such that $\mathcal{P}_{X,C}(C\in\mathcal{U}(X)) \ge 1-\alpha$. Works in this field typically study the suboptimality gap, defined as
$\Delta(x, c) := \min_{w} \max_{\widehat{c}\in\mathcal{U}(x)} f(w, \widehat{c}) - \min_{w} f(w, c)$. 
For instance,
in \citep{patel2024conformal}, $\mathcal{U}(x)$ was specifically constructed via conformal prediction to provide probabilistic guarantees; that is, by taking $\mathcal{U}(x) := \mathcal{C}(x)$ to be the prediction region produced by conformalizing the predictor $\widehat{g}$, they demonstrated $\mathcal{P}_{X,C}\left(0\le \Delta(X, C) \le L \mathrm{\ diam}(\mathcal{U}(X))\right) \ge 1 - \alpha$. 

\subsection{LQR \& Control Co-Design}\label{section:lqr_ccd_bg}   
The field of control has a long history in engineering physics and robotics \citep{zabczyk2020mathematical}. In the linear quadratic regulator (LQR) setup, the state dynamics have a linear form $\dot{x} = Ax + Bu + w$, where $x$ is the state, $u$ the control inputs, and $w\sim\mathcal{D}_{w}$ the noise. Optimal control is then posed as an optimization problem, with the objective $J(u)$ weighing both the deviation from a target state and the necessary control input. LQR optimal controllers take a linear feedback form, namely $u^{*}(x) = -K^{*}x$ where $K^{*}$ is known as the ``optimal gain matrix'' and solves
\begin{align}\label{eqn:lqr}
K^{*}(A, B) &:= \argmin_{K\in\mathcal{K}(A, B)} \mathbb{E}[J(K, A, B)] \\
\quad \mathrm{where\ } J(K, A, B) &:= \int_{0}^{\infty} (x^{\top} Q x + (Kx)^{\top} R (Kx)) dt \nonumber
\end{align}
where $\mathcal{K}(A,B) := \{K : \mathrm{Re}(\lambda_i(A - BK)) < 0\ \forall i\}$ is known as the set of ``stabilizing controllers'' for the $A,B$ system dynamics and $\dot{x} = (A - BK)x + w$.
Variations, where the integral is replaced by a discretized sum or considered to some finite $T$, are also of interest. Solving this problem is often done either by solving the algebraic Ricatti equation (ARE) \citep{willems1971least} or via policy gradient \citep{sun2021learning}. 

We now briefly summarize the relevant pieces of uncertain co-control design to highlight the robust control problem subproblem therein; for a full survey, refer to \citep{azad2022control}. Engineering designs can often be specified by parameters $\theta$, which could capture, for instance, the dimensions of an airfoil or material properties of a DC battery grid. The dynamics are highly dependent on the design; for example, an airfoil with a shape $\theta_{1}$ will fly differently from one given by $\theta_{2}$. Worst-case robust UCCD with dynamics misspecification, thus, solves
$\min_{K,\theta} \max_{\widehat{A}\in\mathcal{A}(\theta),\widehat{B}\in\mathcal{B}(\theta)} \mathbb{E}[o(K, \widehat{A}, \widehat{B}, \theta)]$, where $\dot{x} = \widehat{A}x + \widehat{B}u + w$ and
$(\mathcal{A}(\theta),\mathcal{B}(\theta))$ are uncertainty sets of the dynamics for such a design and $o$ is the objective.  

Often, the objective takes a decomposable form, namely with one term relating to system control and the other depending on the design parameter, i.e. $o(K, A, B, \theta) := \ell(\theta) + J(K, A(\theta), B(\theta))$ \citep{chanekar2018co,ahmadi2023lqr}. 
One commonly applied solution technique in this setting is via bilevel optimization, in which an
outer optimization loop is performed over design parameters and an inner one over controllers for the current design iterate \citep{herber2019nested,kamadan2017co}. For this reason, the specification of the robust control subproblem can be studied independently of the outer design optimization loop, as done herein. 



\section{Methodology}\label{section:method}

We now discuss conformally robust LQR, providing the formulation in \Cref{section:problem_formulation}, regret guarantees in \Cref{section:cov_guar} and \Cref{section:ambiguous_ground_truth}, and a controller synthesis algorithm with convergence guarantees in \Cref{section:optimization}. 

\subsection{Problem Formulation}\label{section:problem_formulation}
For the presentation below, let $x_{t}\in\mathbb{R}^{n}$, $u_{t}\in\mathbb{R}^{m}$, $A\in\mathbb{R}^{n\times n}$, and $B\in\mathbb{R}^{n\times m}$. Let $C$ denote the full dynamics matrix $C := [A, B]\in\mathbb{R}^{n\times(n+m)}$. We additionally assume a linear control scheme, namely $u_{t} = -K x_{t}$ for some gain matrix $K$. Additionally, denote $W := [I_{n\times n} \mathrm{\ -}K^{\top}]^{\top}\in\mathbb{R}^{(n+m)\times n}$, such that the closed-loop dynamics are given by $CW = A - BK$. 
As discussed in \Cref{section:lqr_ccd_bg}, we assume a dataset of designs and associated trajectories is observed. We assume such a dataset $\mathcal{D}$ consists of $N$ samples $(\theta^{(i)}, C^{(i)})\sim\mathcal{P}(\Theta,C)$ and $K^{(i)}\sim\mathcal{P}(K)$, where $\mathcal{P}(\Theta,C)$ is an unknown joint distribution over designs and dynamics and $\mathcal{P}(K)$ an unknown distribution on gain matrices. We make no assumptions on such distributions other than that each gain matrix $K^{(i)}$ is a stabilizing controller for the respective $C^{(i)}$ dynamics. Note that these underlying true dynamics $C^{(i)}$ are never observed directly by the learning algorithm; only the resulting trajectories are observed. Such trajectories are generated by evolving the state via $x_{t+1}^{(i)} = (C^{(i)} W^{(i)}) x_{t}^{(i)}$ over a time horizon $T$. 
The final dataset, therefore, takes the form $\mathcal{D} = \{\theta^{(i)}, \{(x_{t}^{(i)}, u_{t}^{(i)})\}_{t=1}^{T}\}_{i=1}^{N}$.
We are interested in studying a risk-sensitive formulation of LQR:
\begin{equation}\label{eqn:robust_objective}
\begin{gathered}
    K^{*}_{\mathrm{rob}}(\mathcal{U}(\theta)) := \argmin_{K\in\mathcal{K}(\mathcal{U}(\theta))} \max_{[\widehat{A},\widehat{B}]:=\widehat{C}\in\mathcal{U}(\theta)} \mathbb{E}[J(K, \widehat{A}, \widehat{B})] \\
    \textrm{s.t.} \quad \dot{x} = \widehat{A}x + \widehat{B}u + w
    \qquad \mathcal{P}_{\Theta,C}(C \in \mathcal{U}(\Theta)) \ge 1 - \alpha \nonumber,    
\end{gathered}
\end{equation}
where $J$ is the objective function particular to the setting of interest, differing between infinite and finite time horizons and continuous and discrete time dynamics, and $\mathcal{U}(\theta)$ is an uncertainty set over dynamics. Notably, the notion of stabilizing controllers must be generalized in this robust formulation, since the nominal formulation is for a specific $C$. We, thus, consider those controllers that stabilize the entire uncertainty set, which we refer to as the ``universal stabilizing set,'' formally $\mathcal{K}(\mathcal{U}(\theta)) := \bigcap_{\widehat{C}\in\mathcal{U}(\theta)} \mathcal{K}(\widehat{C})$, where $\mathcal{K}(\widehat{C})$ is \Cref{eqn:lqr} evaluated for a particular $\widehat{A},\widehat{B}$. 

\subsection{Score Function}\label{section:score}

From the trajectories in $\mathcal{D}$, we can perform system identification using least squares estimation to recover estimates of the system dynamics, $(\widetilde{A}^{(i)}, \widetilde{B}^{(i)})$ \citep{ljung1987theory}. With this, we obtain a final dynamics dataset $\widetilde{\mathcal{D}} = \{\theta^{(i)}, \widetilde{C}^{(i)}\}_{i=1}^{N}$, which we then leverage in the standard manner of split conformal prediction. That is, we split $\widetilde{\mathcal{D}} = \widetilde{\mathcal{D}}_{\mathcal{T}} \cup \widetilde{\mathcal{D}}_{\mathcal{C}}$, the former of which we use to train a system parameters predictor $\widehat{C} := f(\theta)$. Notably, leveraging split conformal in this setting has the complication that the ground truth used, namely in $\widetilde{\mathcal{D}}_{\mathcal{C}}$, is itself an estimate $\widetilde{C}$ even though coverage is sought on $C$. We assume for this initial discussion that for a fixed coverage level $\alpha$, we can obtain prediction regions with the desired coverage, satisfying $\mathcal{P}_{\Theta,C}(C\in\mathcal{U}(\Theta))\ge 1-\alpha$, using $\widetilde{\mathcal{D}}_{\mathcal{C}}$. The treatment of this gap between $\widetilde{C}$ and $C$ is discussed in \Cref{section:ambiguous_ground_truth}.

We take the score to be 
$s(\theta, C) = ||f(\theta) - C ||_{\mathrm{op}}$,
where $||\cdot||_{\mathrm{op}}$ is the matrix \textit{operator} norm, i.e. $|| A ||_{\mathrm{op}} = \sigma_{\mathrm{max}}(A)$, from which the resulting prediction regions take on the form of $\mathcal{B}_{\widehat{q}}(f(\theta))$, namely a ball of radius $\widehat{q}$, the conformal quantile, under the $||\cdot||_{\mathrm{op}}$ metric. 

\subsection{Coverage Guarantee Consequences}\label{section:cov_guar}
We now characterize the regret induced by the robustness across LQR setups, that is $\mathcal{R}(\theta, C) := \mathbb{E}[J(K^{*}_{\mathrm{rob}}(\mathcal{U}(\theta)), C) - J(K^{*}(C), C)]$,
where the randomness is over stochastics in the \textit{true} system dynamics $C := [A,B]$ and in the $\mathcal{P}(C\mid\theta)$ map. We explicitly note $C$ in the regret notation to emphasize that, while the controller $K$ is defined using \textit{estimated} system dynamics, the final evaluation over the \textit{true} $C$ dynamics. 

We provide the regret statements for the continuous, infinite time horizon cases below and defer the discrete-time and finite time horizon cases to the Appendix. Both settings require a mild assumption that the problem parameters have bounded norms, formalized in \Cref{assump:opt_technical_1}; this will hold for any realistic problem setup. Notably, however, the two settings differ in that the stochastic dynamics requires $Q(t)$ and $R(t)$ be discounted over $t$, while the deterministic case is fully compatible with non-discounted rewards. Intuitively, this discounting is necessary, as stability alone in the stochastic setting does not ensure a bounded objective; the state can continue to oscillate and result in an unbounded accumulation of error if the terms tied to the state covariance matrix do not decay. Other works frame this assumption as ``mean-square stability,'' (see e.g. \cite{gravell2020learning}). We formally pose this as \Cref{assump:opt_technical_2}. 

\begin{assumption}\label{assump:opt_technical_1}
    For any $\theta$, $K\in\mathcal{K}(\mathcal{B}_{\widehat{q}}(f(\theta)))$, and $\widehat{C}\in\mathcal{B}_{\widehat{q}}(f(\theta))$, $D(K) := \sqrt{n} || Q+K^\top  R K||_{\infty} || x_{0} ||_{\infty}^{2} || W ||_{\mathrm{op}} < \infty$
\end{assumption}

\begin{assumption}\label{assump:opt_technical_2}
    For any $\theta$, $\exists$ constants $\alpha_{1},\beta_{1} >0$ such that for all $\widehat{C}\in\mathcal{B}_{\widehat{q}}(f(\theta))$, $K\in\mathcal{K}(\widehat{C})$, and $t\ge 0$, $||Q(t) + K^\top R(t) K|| \le \beta_{1} e^{-\alpha_{1} t}$ and $\min_{\widehat{C}\in\mathcal{B}_{\widehat{q}}(f(\theta))} (2 \alpha_{2}(\widehat{C}) + \alpha_{1}) > 0$ where $\alpha_{2}(\widehat{C}) := \max_{K\in\mathcal{K}(\widehat{C})} (-\max_{i} \mathrm{Re}(\lambda_i(\widehat{C}W))) >0$.
\end{assumption}

The regret bound below decomposes into two terms. The first captures the suboptimality in designing a controller with the conformal dynamics set instead of against the true dynamics; this coincides with the suboptimality characterized in previous works described in \Cref{section:bg_pred_opt}. The other is a novel aspect that arises in this controls setting: since the robust control problem optimizes over a \textit{restricted} set of controllers, namely those that universally stabilize the full conformal dynamics set instead of those that only stabilize the true dynamics, there is an additional ``domain gap'' suboptimality. Intuitively, if the true optimal controller falls in $\mathcal{K}(\mathcal{B}_{\widehat{q}}(f(\theta)))$, this latter term should vanish. Towards this end, we introduce the following notion.

\begin{definition}\label{def:margin}
Let $M(C,K^*(C)) := A - B K^*(C)$ be diagonalizable. Define $r(C,K^*(C)) := \frac{\min_i\big(-\Re(\lambda_i(M))\big)}{\kappa(U)\,\|W\|_{\mathrm{op}}}$, where $M = U \Lambda U^{-1}$, $\kappa(U)$ is the condition number of $U$, and $W=[I\;-\!K^*(C)^\top]^\top$.
\end{definition}

Across the theorems stated below, therefore, if the conformal radius is smaller than this $r(C,K^*(C))$ margin term, the ``domain gap'' term vanishes. Intuitively, this property follows as the stability of the system can be characterized by the closed-loop eigenvalues, whose values change by a bounded amount in considering the perturbations captured in the conformal region. We additionally see that, as $\widehat{q}\to0$, the suboptimality vanishes, as we would expect in recovering the true dynamics. Thus, users should seek to produce prediction regions with coverage that are as small as possible to produce informative upper bounds on the nominal optimal value. Notably, the statements below are given in terms of the objective Lipschitz constants $L$: explicit expressions of $L$ along with the discrete-time and finite time horizons theorems and proofs are given in \Cref{section:coverage_bound_control_disc_det,section:coverage_bound_control_cont_det,section:coverage_bound_control_disc_stoch,section:coverage_bound_control_cont_stoch}.

\begin{theorem}[Deterministic, continuous-time]
    Let $J(K, C) := \int_{0}^{\infty} (x(t)^\top (Q + K^\top R K) x(t)) dt$ for $w=0$. Assume that $\mathcal{P}_{\Theta,C}(C\in\mathcal{B}_{\widehat{q}}(f(\Theta))) \ge 1 - \alpha$. Then, under \Cref{assump:opt_technical_1},
    \begin{equation*}
        \mathcal{P}_{\Theta,C}\left(0\le \mathcal{R}(\Theta, C) \le 2L \widehat{q} + \Delta_{\mathrm{dom}}(\Theta,C)\right) \ge 1 - \alpha,
    \end{equation*}
    where $L$ is the Lipschitz constant of $J(K,\widehat{C})$ in $\widehat{C}\in\mathcal{B}_{\widehat{q}}(f(\theta))$ under the operator norm. Further, if $\widehat{q} < r(C,K^*(C))$, (see \Cref{def:margin}) $\Delta_{\mathrm{dom}}(\Theta,C)=0$.
\end{theorem}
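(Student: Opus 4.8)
The plan is to condition throughout on the conformal coverage event $E:=\{C\in\mathcal{B}_{\widehat{q}}(f(\Theta))\}$, which by hypothesis has $\mathcal{P}_{\Theta,C}(E)\ge 1-\alpha$; since $w=0$ the regret $\mathcal{R}(\theta,C)=J(K^{*}_{\mathrm{rob}}(\mathcal{U}(\theta)),C)-J(K^{*}(C),C)$ is a deterministic function of $(\theta,C)$, so it suffices to prove the two-sided bound pointwise on $E$ and the probabilistic statement is then immediate. I would take the domain-gap term to be $\Delta_{\mathrm{dom}}(\theta,C):=\min_{K\in\mathcal{K}(\mathcal{B}_{\widehat{q}}(f(\theta)))}J(K,C)-J(K^{*}(C),C)$, the suboptimality, \emph{measured against the true $C$}, of the best universally stabilizing controller relative to the globally optimal $K^{*}(C)$ --- exactly the quantity described in the paragraph preceding \Cref{def:margin}. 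The lower bound $\mathcal{R}\ge 0$ then follows because on $E$ we have $C\in\mathcal{B}_{\widehat{q}}(f(\theta))$, hence $\mathcal{K}(\mathcal{B}_{\widehat{q}}(f(\theta)))=\bigcap_{\widehat{C}}\mathcal{K}(\widehat{C})\subseteq\mathcal{K}(C)$, so $K^{*}_{\mathrm{rob}}\in\mathcal{K}(C)$ and $K^{*}(C)$, being the minimizer of $J(\cdot,C)$ over all of $\mathcal{K}(C)$, satisfies $J(K^{*}_{\mathrm{rob}},C)\ge J(K^{*}(C),C)$; the same containment also shows $\Delta_{\mathrm{dom}}\ge 0$.

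For the upper bound I would chain three inequalities. First, the definition of $K^{*}_{\mathrm{rob}}$ as the minimizer of $\max_{\widehat{C}}J(K,\widehat{C})$ over $\mathcal{K}(\mathcal{B}_{\widehat{q}}(f(\theta)))$, together with $C$ lying in the ball, gives $J(K^{*}_{\mathrm{rob}},C)\le\max_{\widehat{C}\in\mathcal{B}_{\widehat{q}}(f(\theta))}J(K^{*}_{\mathrm{rob}},\widehat{C})=\min_{K\in\mathcal{K}(\mathcal{B}_{\widehat{q}}(f(\theta)))}\max_{\widehat{C}}J(K,\widehat{C})$, the robust optimum. Second, using the minimizer $K^{\circ}$ of $\min_{K\in\mathcal{K}(\mathcal{B}_{\widehat{q}}(f(\theta)))}J(K,C)$ as a feasible point of the outer minimization bounds the robust optimum by $\max_{\widehat{C}}J(K^{\circ},\widehat{C})$. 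Third, since $K^{\circ}$ stabilizes every element of the ball, $J(K^{\circ},\cdot)$ is finite and $L$-Lipschitz on $\mathcal{B}_{\widehat{q}}(f(\theta))$ in the operator norm (with \Cref{assump:opt_technical_1} supplying finiteness and the explicit $L$), while every $\widehat{C}$ in the ball obeys $\|\widehat{C}-C\|_{\mathrm{op}}\le\mathrm{diam}(\mathcal{B}_{\widehat{q}}(f(\theta)))=2\widehat{q}$ because $C$ lies in the ball too; hence $\max_{\widehat{C}}J(K^{\circ},\widehat{C})\le J(K^{\circ},C)+2L\widehat{q}=\min_{K\in\mathcal{K}(\mathcal{B}_{\widehat{q}}(f(\theta)))}J(K,C)+2L\widehat{q}$. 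Chaining the three and subtracting $J(K^{*}(C),C)$ gives $\mathcal{R}(\theta,C)\le 2L\widehat{q}+\Delta_{\mathrm{dom}}(\theta,C)$. (If the infimum defining $K^{\circ}$ is only approached, because the stabilizing set is open, a routine $\epsilon$-argument suffices.)

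It remains to show $\widehat{q}<r(C,K^{*}(C))$ forces $\Delta_{\mathrm{dom}}=0$; by the definition above this reduces to proving $K^{*}(C)\in\mathcal{K}(\mathcal{B}_{\widehat{q}}(f(\theta)))$, since then $K^{*}(C)$ is itself feasible for the inner minimization defining $\Delta_{\mathrm{dom}}$, forcing $\Delta_{\mathrm{dom}}\le 0$, while $\Delta_{\mathrm{dom}}\ge 0$ always. Fix $\widehat{C}\in\mathcal{B}_{\widehat{q}}(f(\theta))$; the closed loop under $K^{*}(C)$ is $\widehat{C}W=M+E$ with $M:=M(C,K^{*}(C))=CW$ and $E:=(\widehat{C}-C)W$, so $\|E\|_{\mathrm{op}}\le\|\widehat{C}-C\|_{\mathrm{op}}\|W\|_{\mathrm{op}}\le 2\widehat{q}\,\|W\|_{\mathrm{op}}$. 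Writing $M=U\Lambda U^{-1}$, the Bauer--Fike theorem yields, for every eigenvalue $\mu$ of $M+E$, $\min_i|\mu-\lambda_i(M)|\le\kappa(U)\|E\|_{\mathrm{op}}$, hence $\Re(\mu)\le\max_i\Re(\lambda_i(M))+\kappa(U)\|E\|_{\mathrm{op}}$, which is strictly negative whenever $\kappa(U)\|E\|_{\mathrm{op}}<-\max_i\Re(\lambda_i(M))=\min_i(-\Re(\lambda_i(M)))$, i.e.\ whenever $\widehat{q}<\min_i(-\Re(\lambda_i(M)))/\big(2\kappa(U)\|W\|_{\mathrm{op}}\big)$ --- the margin of \Cref{def:margin}, up to the harmless radius-versus-diameter factor of two. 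So $\widehat{C}W$ is Hurwitz for every $\widehat{C}$ in the ball, i.e.\ $K^{*}(C)$ universally stabilizes, completing the argument.

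I expect the crux to be this eigenvalue-perturbation step together with the choice of intermediate quantity for the regret split. The Bauer--Fike bound forces the margin to scale with $1/\kappa(U)$, so it genuinely degrades when the true closed loop is far from normal; and making the chain tight enough that the Lipschitz slack surfaces as a clean $2L\widehat{q}$ rather than leaking into $\Delta_{\mathrm{dom}}$ depends on the precise form of $\Delta_{\mathrm{dom}}$ chosen above and on exploiting the containment $\mathcal{K}(\mathcal{B}_{\widehat{q}}(f(\theta)))\subseteq\mathcal{K}(C)$ both for the lower bound and to make $K^{*}(C)$ feasible once the margin holds. Deriving the explicit $L$ for this deterministic continuous-time cost from \Cref{assump:opt_technical_1} is standard Lyapunov bookkeeping and I would defer it.
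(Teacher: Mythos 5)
Your proposal is correct and follows essentially the same route as the paper's: the same decomposition of the regret into a $2L\widehat{q}$ ``statistical robustness'' term plus a domain-gap term $\Delta_{\mathrm{dom}}=\min_{K\in\mathcal{K}(\mathcal{B}_{\widehat{q}}(f(\theta)))}J(K,C)-J(K^{*}(C),C)$, the same Lipschitz estimate over the ball using $\|\widehat{C}-C\|_{\mathrm{op}}\le 2\widehat{q}$, and the same Bauer--Fike perturbation of the closed loop $CW$ to show $K^{*}(C)$ universally stabilizes and hence that the domain gap vanishes (your use of the feasible point $K^{\circ}$ instead of a sup over $K$ is an immaterial variation). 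Two remarks: on the margin you are in fact more careful than the paper --- since the ball is centered at $f(\theta)$ rather than at $C$, the perturbation of the closed loop is only bounded by $2\widehat{q}\,\|W\|_{\mathrm{op}}$, so the condition your argument (and, strictly, the paper's) supports is $\widehat{q}<r(C,K^{*}(C))/2$, whereas the paper's appendix silently writes $\widehat{C}=C+\Delta$ with $\|\Delta\|_{\mathrm{op}}\le\widehat{q}$; and the piece you defer as ``standard Lyapunov bookkeeping'' --- bounding $\nabla_{\widehat{C}}J$ via $x(t)=e^{\widehat{C}Wt}x(0)$, the decay bound $\|e^{\widehat{C}Wt}\|\le\beta(\widehat{C})e^{-\alpha(\widehat{C})t}$ furnished by universal stabilization, and \Cref{assump:opt_technical_1} --- is exactly the theorem-specific content of the paper's proof (the general decomposition being its \Cref{lemma:coverage_bound}), so a complete write-up would need that finiteness-of-$L$ computation spelled out rather than assumed.
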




\begin{theorem}[Stochastic, continuous-time]
    Let $J(K, C) := \mathbb{E}\left[\int_{0}^{\infty} (x(t)^\top (Q(t) + K^\top R(t) K) x(t)) dt \right]$ with $w(t)$ a white noise process with spectral density $\Sigma$ such that $D_{2}(K) := || \Sigma ||_{\mathrm{op}} || W ||_{\mathrm{op}} < \infty$. Assume further that $\mathcal{P}_{\Theta,C}(C\in\mathcal{B}_{\widehat{q}}(f(\Theta))) \ge 1 - \alpha$. Then, under \Cref{assump:opt_technical_1} and \Cref{assump:opt_technical_2},
    \begin{equation*}
        \mathcal{P}_{\Theta,C}\left(0\le \mathcal{R}(\Theta, C) \le 2L \widehat{q} + \Delta_{\mathrm{dom}}(\Theta,C)\right) \ge 1 - \alpha,
    \end{equation*}
    where $L$ is the Lipschitz constant of $J(K,\widehat{C})$ in $\widehat{C}\in\mathcal{B}_{\widehat{q}}(f(\theta))$ under the operator norm. Further, if $\widehat{q} < r(C,K^*(C))$, (see \Cref{def:margin}) $\Delta_{\mathrm{dom}}(\Theta,C)=0$.
\end{theorem}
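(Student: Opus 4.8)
The plan is to mirror the decomposition strategy already used for the deterministic continuous-time theorem, isolating the two sources of regret and bounding each separately. Write $\mathcal{R}(\Theta,C) = \big(\mathbb{E}[J(K^{*}_{\mathrm{rob}}(\mathcal{U}(\Theta)), C)] - \min_{K}\max_{\widehat C\in\mathcal{U}(\Theta)} \mathbb{E}[J(K,\widehat C)]\big) + \big(\min_{K\in\mathcal{K}(\mathcal{U}(\Theta))}\max_{\widehat C\in\mathcal{U}(\Theta)} \mathbb{E}[J(K,\widehat C)] - \min_{K\in\mathcal{K}(C)} \mathbb{E}[J(K,C)]\big)$. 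The first grouping is nonpositive on the event $C\in\mathcal{U}(\Theta)$ — indeed, since $K^{*}_{\mathrm{rob}}$ is the minimizer of the inner max and $C$ lies in $\mathcal{U}(\Theta)$, $\mathbb{E}[J(K^{*}_{\mathrm{rob}},C)] \le \max_{\widehat C\in\mathcal{U}(\Theta)}\mathbb{E}[J(K^{*}_{\mathrm{rob}},\widehat C)]$ — so the whole expression is $\le$ the second grouping, which I then split further as in \Cref{section:bg_pred_opt}: a ``suboptimality'' term bounded by $L\,\mathrm{diam}(\mathcal{U}(\Theta)) = 2L\widehat q$ via the Lipschitz property of $J(K,\cdot)$ in the operator norm over $\mathcal{B}_{\widehat q}(f(\Theta))$, plus a residual ``domain gap'' term $\Delta_{\mathrm{dom}}(\Theta,C)$ capturing the fact that the robust minimization ranges over the restricted universal stabilizing set $\mathcal{K}(\mathcal{U}(\Theta))$ rather than $\mathcal{K}(C)$. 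Nonnegativity of $\mathcal{R}(\Theta,C)$ follows since $K^{*}(C)$ is by definition optimal against the true $C$.

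The step that genuinely differs from the deterministic case, and where the two new assumptions enter, is establishing that $L$ is finite and that $\mathbb{E}[J(K,\widehat C)]$ is well-defined and Lipschitz in $\widehat C$. In the stochastic continuous-time setting the state does not decay to zero; the stationary (or time-varying) state covariance $P(t) := \mathbb{E}[x(t)x(t)^\top]$ is driven by the white noise, and $\mathbb{E}[J(K,\widehat C)] = \int_0^\infty \mathrm{tr}\big((Q(t)+K^\top R(t)K)P(t)\big)\,dt$. I would solve the Lyapunov equation $\dot P = (\widehat C W)P + P(\widehat C W)^\top + W^\top \Sigma W$ and bound $\|P(t)\|_{\mathrm{op}}$ using the exponential decay rate $\alpha_2(\widehat C)$ of $e^{\widehat C W t}$ from \Cref{assump:opt_technical_2} — this controls the homogeneous part — together with $D_2(K) = \|\Sigma\|_{\mathrm{op}}\|W\|_{\mathrm{op}}$ for the forcing term, giving $\|P(t)\|_{\mathrm{op}} \le \|P(0)\|_{\mathrm{op}}e^{-2\alpha_2 t} + \tfrac{D_2(K)}{2\alpha_2}(1-e^{-2\alpha_2 t})$ or similar. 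The crucial point is that $Q(t)+K^\top R(t)K$ decays like $\beta_1 e^{-\alpha_1 t}$ while $P(t)$ is only bounded (not decaying), so the product is integrable precisely because $\alpha_1 > 0$; the condition $\min_{\widehat C}(2\alpha_2(\widehat C)+\alpha_1) > 0$ ensures the bound is uniform over the conformal ball, which is what lets us extract a single finite $L$. Lipschitz continuity in $\widehat C$ then follows from perturbation bounds on solutions of Lyapunov equations — a change in $\widehat C$ of size $\delta$ in operator norm perturbs $\widehat C W$ by $\|W\|_{\mathrm{op}}\delta$, which perturbs $P(t)$ and hence the integral by $O(\delta)$, with the constant depending on $\alpha_1,\alpha_2,\beta_1,D_2(K)$ — and the explicit form of $L$ is deferred to \Cref{section:coverage_bound_control_cont_stoch}.

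For the domain-gap term, I would argue exactly as sketched in the paragraph preceding \Cref{def:margin}: if $\widehat q < r(C,K^*(C))$, then $K^*(C)$ itself lies in $\mathcal{K}(\mathcal{B}_{\widehat q}(f(\Theta)))$ on the coverage event. This uses the Bauer–Fike-type estimate: for $\widehat C \in \mathcal{B}_{\widehat q}(f(\Theta))$ we have $\|\widehat C W - C W\|_{\mathrm{op}} \le \widehat q\,\|W\|_{\mathrm{op}}$, and diagonalizing $M = CW = U\Lambda U^{-1}$, every eigenvalue of $\widehat C W$ lies within $\kappa(U)\,\widehat q\,\|W\|_{\mathrm{op}}$ of the spectrum of $M$; since $\max_i \Re(\lambda_i(M)) \le -\min_i(-\Re(\lambda_i(M)))$ and $\widehat q < r$ forces $\kappa(U)\widehat q\|W\|_{\mathrm{op}} < \min_i(-\Re(\lambda_i(M)))$, all eigenvalues of $\widehat C W$ retain negative real part, so $K^*(C)$ stabilizes the whole ball. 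Then $\min_{K\in\mathcal{K}(\mathcal{U}(\Theta))}\max_{\widehat C}\mathbb{E}[J(K,\widehat C)] \le \max_{\widehat C}\mathbb{E}[J(K^*(C),\widehat C)] \le \mathbb{E}[J(K^*(C),C)] + 2L\widehat q$ (the last step reusing the Lipschitz bound since $C\in\mathcal{U}(\Theta)$), which collapses the second grouping into the $2L\widehat q$ term and gives $\Delta_{\mathrm{dom}}(\Theta,C)=0$. Finally, I intersect everything with the coverage event $\{C\in\mathcal{B}_{\widehat q}(f(\Theta))\}$, which has probability at least $1-\alpha$ by hypothesis, to get the stated probabilistic bound.

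The main obstacle I anticipate is the finiteness-and-Lipschitzness argument for $L$ in the stochastic case: unlike the deterministic setting where $x(t)\to 0$ makes everything trivially integrable, here one must carefully track the interplay between the decaying cost weights and the non-decaying, noise-driven state covariance, and — more delicately — make the Lyapunov perturbation bounds \emph{uniform} over all $\widehat C$ in the conformal ball and all admissible $K$, which is exactly the technical role of the two-sided condition in \Cref{assump:opt_technical_2}. The domain-gap / Bauer–Fike part is comparatively routine once \Cref{def:margin} is in hand.
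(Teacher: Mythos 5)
Your proposal is correct and follows essentially the same route as the paper: the paper factors the regret through a general lemma (\Cref{lemma:coverage_bound}) that performs exactly your decomposition --- the coverage-event comparison yielding the $2L\widehat{q}$ statistical-robustness term and the Bauer--Fike eigenvalue-perturbation argument showing $\Delta_{\mathrm{dom}}(\Theta,C)=0$ whenever $\widehat{q}<r(C,K^{*}(C))$ --- so the only theorem-specific work is verifying non-negativity and $L$-Lipschitzness of $J(K,\cdot)$ over $\mathcal{B}_{\widehat{q}}(f(\theta))$ for universally stabilizing $K$. For that step the paper rewrites $J$ via the SDE solution as a mean term (handled exactly as in the deterministic continuous-time proof) plus the covariance term $\int_{0}^{\infty}\int_{0}^{t}\mathrm{Tr}\bigl((Q(t)+K^{\top}R(t)K)\,e^{\widehat{C}Wk}\Sigma\,e^{\widehat{C}Wk\top}\bigr)\,dk\,dt$ and bounds $\nabla_{\widehat{C}}J$ explicitly using \Cref{assump:opt_technical_1} and \Cref{assump:opt_technical_2}, whereas you appeal to perturbation bounds for the covariance Lyapunov ODE --- an acceptable alternative that uses the assumptions in the same way, though note the forcing term in your covariance equation should be $\Sigma$ (the noise enters the closed-loop state equation directly, not through $W$), and your bound on $\|P(t)\|_{\mathrm{op}}$ should carry the non-normality factor $\beta_{2}(\widehat{C})^{2}$ from $\|e^{\widehat{C}Wt}\|_{\mathrm{op}}\le\beta_{2}(\widehat{C})e^{-\alpha_{2}(\widehat{C})t}$.
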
 


\subsection{Ambiguous Ground Truth}\label{section:ambiguous_ground_truth}
We now discuss the complication of obtaining coverage guarantees on $C$ despite only observing estimates $\widetilde{C} = C + \epsilon$ in the dataset, where $\epsilon\sim\mathcal{N}(0,\Sigma)$. This form of the estimation error can be shown to hold asymptotically under mild assumptions by classical results from least squares estimation, as shown for LTI system identification in \citep{ljung1987theory}.

The coverage guarantee result given in \Cref{thm:amb_ground_truth} is the multivariate extension of Theorem A.5 from \citep{feldman2023cp} and is a novel contribution to the broader space of conformal prediction. Intuitively, we show that if, for all $\theta$, the density $\mathcal{P}(C\mid\theta)$ peaks in $\mathcal{U}(\theta)$, we retain marginal coverage guarantees. If $\mathcal{P}(C\mid\theta)$ is unimodal and radially symmetric about its mode, this condition is satisfied so long as $\mathcal{U}(\theta)$ captures the mode. The map between design parameters $\theta$ and $A,B$ is often unimodal, making such a structural assumption reasonable; this was true classically, where a deterministic map was parametrically given by physics (discussed more in \Cref{section:unimodal_assump}), and remains true of data-driven surrogates in UCCD \citep{azad2023concurrent,azad2024site}. $\mathcal{U}(\theta)$ capturing the mode is also a weak assumption assuming a zero-centered distribution for $\epsilon$, since it then amounts to capturing the mode of $\mathcal{P}(\widetilde{C}\mid\theta)$, which holds for any sufficiently accurate predictor. We empirically demonstrate that such assumptions hold and, thus, that the coverage guarantees are retained in \Cref{section:experiments}. The full proof of this theorem is deferred to \Cref{section:coveragefornoisyobs}. 

\begin{theorem}\label{thm:amb_ground_truth}
    Let $\widetilde{C} = C + \epsilon$ where $\text{vec}(\epsilon) \sim \mathcal{N}(0, \Sigma)$, where $\epsilon\indep (\Theta, C)$. Assume $\mathcal{U}(\theta) = \{C' \mid || f(\theta) - C' ||_{\mathrm{op}} \le \widehat{q}\}$ satisfies $\mathcal{P}_{\Theta,\widetilde{C}}(\widetilde{C} \in \mathcal{U}(\Theta)) \ge 1 - \alpha$, where $|| \cdot ||_{\mathrm{op}}$ denotes the matrix operator norm. If for any $\theta\in\Theta$ and $\delta > 0$, $\mathcal{P}(\widehat{q}^2 - \delta \leq \|C - f(\theta)\|_{\mathrm{op}}^{2}   \leq \widehat{q}^2 \mid\Theta = \theta) > \mathcal{P}(\widehat{q}^2 \leq \|C - f(\theta)\|_{\mathrm{op}}^{2}   \leq \widehat{q}^2 + \delta \mid\Theta = \theta)$, then
    \begin{align*}
        \mathcal{P}_{\Theta, C}(C \in \mathcal{U}(\Theta)) \geq \mathcal{P}_{\Theta,\widetilde{C}}(\widetilde{C} \in \mathcal{U}(\Theta)) \geq 1 - \alpha.
    \end{align*}
\end{theorem}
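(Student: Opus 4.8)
The right-hand inequality $\mathcal{P}_{\Theta,\widetilde{C}}(\widetilde{C}\in\mathcal{U}(\Theta))\ge 1-\alpha$ is precisely the hypothesis, so the whole content is the left-hand inequality $\mathcal{P}_{\Theta,C}(C\in\mathcal{U}(\Theta))\ge\mathcal{P}_{\Theta,\widetilde{C}}(\widetilde{C}\in\mathcal{U}(\Theta))$. The plan is to condition on the calibration data (so that $f$ and $\widehat{q}$ are frozen) and on $\Theta=\theta$; by the tower property it then suffices to prove, for each $\theta$, the conditional inequality $\mathcal{P}(C\in\mathcal{U}(\theta)\mid\theta)\ge\mathcal{P}(\widetilde{C}\in\mathcal{U}(\theta)\mid\theta)$ and integrate back. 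Fix $\theta$ and write $\mathcal{U}(\theta)=f(\theta)+\widehat{q}\,\mathcal{B}$ with $\mathcal{B}=\{M:\|M\|_{\mathrm{op}}\le 1\}$ (convex, symmetric about the origin); let $\mu$ be the law of $C$ given $\theta$ and $\nu$ the law of $\epsilon$, which is symmetric about the origin since $\mathrm{vec}(\epsilon)\sim\mathcal{N}(0,\Sigma)$. Since $\widetilde{C}=C+\epsilon$ with $\epsilon$ independent of $C$ given $\theta$, one has $\mathcal{P}(\widetilde{C}\in\mathcal{U}(\theta)\mid\theta)=(\mu*\nu)(\mathcal{U}(\theta))=\int\psi\,d\mu$, where $\psi:=\mathbbm{1}_{\mathcal{U}(\theta)}*\nu$ (using $\nu(-\cdot)=\nu$), so the goal becomes $\int\psi\,d\mu\le\mu(\mathcal{U}(\theta))=\int\mathbbm{1}_{\mathcal{U}(\theta)}\,d\mu$.

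The next step exploits the geometry of $\psi$. The function $g:=\mathbbm{1}_{\mathcal{U}(\theta)}-\psi$ is nonnegative on $\mathcal{U}(\theta)$, nonpositive on its complement, and has zero Lebesgue integral (because $\int\psi\,d\mathrm{Leb}=\mathrm{Leb}(\mathcal{U}(\theta))$ by translation-invariance), so $\int\psi\,d\mu\le\mu(\mathcal{U}(\theta))$ is equivalent to $\int g\,d\mu\ge0$, i.e. to $\mu$ being ``more concentrated inside $\mathcal{U}(\theta)$ than the uniform reference.'' Equivalently, using the symmetry of $\nu$ to symmetrize the convolution, it suffices to show $\tfrac12\big[\mu(\mathcal{U}(\theta)-\epsilon)+\mu(\mathcal{U}(\theta)+\epsilon)\big]\le\mu(\mathcal{U}(\theta))$ in $\epsilon$-expectation. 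I would then decompose $\mu(\mathcal{U}(\theta)\pm\epsilon)-\mu(\mathcal{U}(\theta))=\mu\big((\mathcal{U}(\theta)\pm\epsilon)\setminus\mathcal{U}(\theta)\big)-\mu\big(\mathcal{U}(\theta)\setminus(\mathcal{U}(\theta)\pm\epsilon)\big)$ and use that translation by $\mp\epsilon$ maps the ``gained'' outer sliver $(\mathcal{U}(\theta)\pm\epsilon)\setminus\mathcal{U}(\theta)$ bijectively and volume-preservingly onto the ``lost'' inner sliver $\mathcal{U}(\theta)\setminus(\mathcal{U}(\theta)\mp\epsilon)$. This rewrites the symmetrized difference as an integral over the inner slivers of the $\mu$-mass of the points $y\pm\epsilon$, which by construction lie just outside $\mathcal{U}(\theta)$.

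Finally, averaging over the slivers and over the Gaussian $\epsilon$, the net contribution is bounded by a mixture (over a distribution of widths $\delta=\delta(\epsilon)$ induced by the noise) of terms of the form $\mathcal{P}\big(\widehat{q}^2\le\|C-f(\theta)\|_{\mathrm{op}}^2\le\widehat{q}^2+\delta\mid\theta\big)-\mathcal{P}\big((\widehat{q}^2-\delta)_+\le\|C-f(\theta)\|_{\mathrm{op}}^2\le\widehat{q}^2\mid\theta\big)$ --- ``outer-shell mass minus inner-shell mass'' --- each of which is negative by the hypothesis. Hence $\int g\,d\mu\ge0$, which gives the conditional inequality; integrating over $\theta$ and over the calibration data yields $\mathcal{P}_{\Theta,C}(C\in\mathcal{U}(\Theta))\ge\mathcal{P}_{\Theta,\widetilde{C}}(\widetilde{C}\in\mathcal{U}(\Theta))$, and chaining with the hypothesis closes the theorem.

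\textbf{Main obstacle.} The hard part is exactly this last reduction from the multivariate translated-ball masses to the scalar radial shell probabilities: in the scalar setting of \citep{feldman2023cp} the translated ``ball'' is just a shifted interval and the inner/outer slivers telescope \emph{exactly} into the radial CDF, whereas for a matrix operator-norm ball with anisotropic noise $\nu$ the translates $\mathcal{U}(\theta)\pm\epsilon$ are not sublevel sets of $\|\cdot-f(\theta)\|_{\mathrm{op}}$, so one must show that the directional and anisotropic dependence averages out and leaves only a functional of the law of $\|C-f(\theta)\|_{\mathrm{op}}^2$ with shell widths matching those in the hypothesis. The degenerate regime where $\|\epsilon\|$ is large enough that $\mathcal{U}(\theta)$ and its translate are disjoint (so the entire mass $\mathcal{P}(C\in\mathcal{U}(\theta))$ is ``lost'' and the hypothesis is used at $\delta$ with $\widehat{q}^2-\delta\le0$), together with null-set boundary technicalities, needs separate but routine attention.
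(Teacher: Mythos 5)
Your overall skeleton (condition on $\Theta=\theta$, prove the conditional domination $\mathcal{P}(C\in\mathcal{U}(\theta)\mid\theta)\ge\mathcal{P}(\widetilde{C}\in\mathcal{U}(\theta)\mid\theta)$, integrate, and chain with the hypothesis) matches the paper, but the core of your argument is not closed, and you say so yourself: the step that converts the convolution/sliver geometry of translated operator-norm balls into the scalar radial-shell comparison of the hypothesis is exactly the hard part, and your sketch does not supply it. Two concrete problems. First, the sliver bookkeeping you describe is a Lebesgue-measure statement: translation maps $(\mathcal{U}(\theta)\pm\epsilon)\setminus\mathcal{U}(\theta)$ volume-preservingly onto an inner sliver, but the quantity you need to control is the $\mu$-mass of these sets, and $\mu$ (the law of $C$ given $\theta$) is neither translation invariant nor assumed to have a density, so the claimed rewriting is not justified. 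Second, even granting a density, the slivers of a translated ball are not radial shells of $\|\cdot-f(\theta)\|_{\mathrm{op}}$: a point of the inner sliver and its translate do not have radial coordinates differing by a direction-independent amount, whereas the hypothesis only compares shells $[\widehat{q}^2-\delta,\widehat{q}^2]$ and $[\widehat{q}^2,\widehat{q}^2+\delta]$ of \emph{matched} width. Your proposal asserts that ``the directional and anisotropic dependence averages out,'' but gives no mechanism for this, and it is not obvious that it does; as written, the proof cannot be completed along these lines without a substantial new argument.

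The paper bridges this gap with a much more elementary scalar reduction that avoids set geometry entirely. Writing $\|\widetilde{C}-f(\theta)\|_{\mathrm{op}}^2=\sup_{\|x\|=1}\|(C-f(\theta))x+\epsilon x\|_2^2$, one lower-bounds the supremum by evaluating at $x^*:=\argmax_{\|x\|=1}\|(C-f(\theta))x\|_2^2$ and discarding $\|\epsilon x^*\|_2^2\ge0$, giving
\begin{equation*}
\|\widetilde{C}-f(\theta)\|_{\mathrm{op}}^2 \;\ge\; \|C-f(\theta)\|_{\mathrm{op}}^2 + \delta,
\qquad \delta := 2 (x^*)^\top \epsilon^\top (C-f(\theta)) x^*.
\end{equation*}
Hence $\{\widetilde{C}\in\mathcal{U}(\theta)\}\subseteq\{\|C-f(\theta)\|_{\mathrm{op}}^2+\delta\le\widehat{q}^2\}$, and conditionally on $(C,\Theta)$ the scalar $\delta$ is a centered Gaussian (so symmetric about $0$, with $\mathcal{P}(\delta\le 0\mid\Theta=\theta)=1/2$, using $\epsilon\indep(\Theta,C)$). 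Splitting on the sign of $\delta$, using this symmetry, and invoking the shell hypothesis with the random width $\delta>0$ then gives $\mathcal{P}(\widetilde{C}\in\mathcal{U}(\theta)\mid\theta)\le\mathcal{P}(\|C-f(\theta)\|_{\mathrm{op}}^2\le\widehat{q}^2\mid\theta)$ directly — note the shells appearing are exactly radial shells of $\|C-f(\theta)\|_{\mathrm{op}}^2$, which is precisely why the hypothesis is stated in that form. If you want to salvage your route, you would need to prove an analogue of this reduction; otherwise your ``main obstacle'' paragraph is an accurate admission that the proof is incomplete.
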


\subsection{Optimization Algorithm}\label{section:optimization}

Due to our generalization over traditional approaches to robust control (discussed in detail in \Cref{section:related_works}), the standard approaches of solution used in those cases, namely generalized algebraic Riccati equations (GARE) or policy gradient, cannot be applied without modification. We, thus, now discuss how policy gradient can be adapted to efficiently solve the problem of interest and then demonstrate corresponding convergence results in \Cref{section:opt_conv_guar}. Given the novelty of the framing of \Cref{eqn:robust_objective} over previous framings, specifically in the geometry of the uncertainty regions, the policy gradient expressions derived here too are novel, highlighted below.
We frame this discussion around the deterministic, discrete-time, infinite time horizon setting, in which $x_{t+1} = (A - BK) x_{t}$. We assume that the initial state is drawn from a known distribution $x(0)\sim\mathcal{N}(0, X_0)$. 
Naively, computing the gradient would require estimation of the infinite sum in $J$; however, it is well known that the gradient can be computed using a Lyapunov formulation, given by
\begin{equation}\label{eqn:policy_grad}
    \nabla_{K} J(K, A, B) = 2((R + B^{\top}P_{K}B)K - B^{\top} P_{K} A) X_K,
\end{equation}
where $X_K$ and $P_{K}$ respectively solve the two Lyapunov equations $\ell_X(X_K, \Delta_{K}) = 0$ and $\ell_P(P_K, \Delta_{K}, K) = 0$ for specified $Q, R, K$, and $\Delta_{K} := A - BK$, where
\begin{gather}\label{eqn:cts_lyapunov}
    \ell_X(X_K, \Delta_{K}) := \Delta_{K} X_K \Delta^{\top}_{K} - X_K + X_0 \\
    \ell_P(P_K, \Delta_{K}, K) := \Delta^{\top}_{K} P_{K} \Delta_{K} + Q + K^{\top} R K - P_{K} \nonumber 
\end{gather}
Note that, while $\ell_P$ also depends on the choice of $Q$ and $R$, we do not explicitly note this in the notation as they remain fixed throughout the problem. If the continuous-time setting is of interest instead,
there are analogous Lyapunov equations and gradient expressions to those respectively in \Cref{eqn:policy_grad} and \Cref{eqn:cts_lyapunov}. To solve \Cref{eqn:robust_objective}, we wish to perform gradient updates on $K$ instead with respect to $\phi(K) := \max_{\widehat{C} \in \mathcal{U}(\theta)} J(K, \widehat{C})$. Naively, one could proceed through the remaining analysis by leveraging Danskin's Theorem to compute the gradient of $\nabla_{K} \phi(K)$, which would result in the expression $\nabla_{K} \phi(K) = \nabla_{K} J(K, C^{*}(K))$, where $C^{*}(K) := \argmax_{\widehat{C} \in \mathcal{U}(\theta)} J(K, \widehat{C})$; however, the existence of such a gradient requires that $C^{*}(K)$ be the unique maximizer. Such an assumption is unlikely to hold in practice; for this reason, we instead relax this assumption and proceed using subgradients. That is, we suppose $C^{*}(K) := \Argmax_{\widehat{C} \in \mathcal{U}(\theta)} J(K, \widehat{C})$ instead is a \textit{set} and denote by $\partial_{K} \phi(K) := \{\nabla_{K} J(K, C_K) : C_K\in C^{*}(K)\}$ the \textit{vertices} of the subdifferential.


Thus, robust policy optimization proceeds by iteratively updating $K$ with \textit{any} vertex of the subdifferential, namely by evaluating \Cref{eqn:policy_grad} with $[A_K,B_K] := C_K$ for some $C_K\in C^{*}(K)$ and $(X_K^{*},P_{K}^{*})$, the solutions to the Lyapunov equations when $\Delta^{*}_K := A_K - B_KK$. We initialize this procedure with the optimal controller for the nominally predicted dynamics, i.e. $K^{(0)} := K^{*}(f(\theta))$. Extending LQR policy gradient methods to the robust setting, therefore, reduces to being able to efficiently solve the maximization problem of $C_K$ over $\mathcal{U}(\theta) := \mathcal{B}_{\widehat{q}}(f(\theta))$. This can be estimated with gradient ascent, where the Lyapunov expression 
$\nabla_{C} J(K, C) = 2 P_{K} C W X_K W^{\top}$ is derived in \Cref{section:lyap_c_grad}. The use of subgradients for policy optimization and the derivation of the explicit $\nabla_{C} J(K, C)$ expression in its Lyapunov formulation are novel contributions to the robust LQR space; these aspects were heretofore unstudied as previously studied robust formulations (in \Cref{section:related_works}) could be translated into GAREs and, therefore, did not require algorithmic innovation.
The algorithm is given in \Cref{alg:crc}.

\begin{algorithm}
  \caption{\label{alg:crc} \textproc{Conformalized Predict-Then-Control (CPC)}}
  \begin{algorithmic}[1]
    \Procedure{CPC}{$\theta, f(\theta), \widehat{q}, \eta_{K}, \eta_{C}, T_{K}, T_{C}$}
    \Statex \textbf{Inputs: } Design $\theta$, Predictor $f(\theta)$, Conformal quantile $\widehat{q}$, Step sizes $\eta_{K},\eta_{C}$, Max steps $T_{K},T_{C}$
    \State $\widehat{C} := f(\theta), K^{(0)} \gets \textsc{SolveARE}(\widehat{C})$
    \For{$t_{K} \in\{0,\ldots T_{K}-1\}$}
      \State $[A^{(0)},B^{(0)}] := C^{(0)}\gets\widehat{C}$
      \For{$t_{C} \in\{0,\ldots T_{C}-1\}$}
        \State $\Delta^{(t_{C})} := A^{(t_{C})} - B^{(t_{C})} K^{(t_{K})}$
        \State $X^{(t_{C})} \gets \mathrm{Solve}(\ell_{X}(X, \Delta^{(t_{C})}) = 0; X)$
        \State $P^{(t_{C})} \gets \mathrm{Solve}(\ell_{P}(P, \Delta^{(t_{C})}, K^{(t_{K})}) = 0; P)$
        \State $\begin{aligned}[t]
            C&^{(t_{C}+1)}  \gets {}\Pi_{\mathcal{B}_{\widehat{q}}(\widehat{C})}\bigl(
               C^{(t_{C})} +  \\
            & \quad\eta_{C}\,(2 P^{(t_{C})} C^{(t_{C})} W^{(t_{K})} X^{(t_{C})}(W^{(t_{K})})^{\top})\bigr)
            \end{aligned}$
      \EndFor
      \State $\Delta^{*} := A_K - B_K K^{(t_{K})} \qquad\triangleright C^{*} := C^{(T_C)}$
      \State $X^{*} \gets\mathrm{Solve}(\ell_{X}(X, \Delta^{*}) = 0; X)$
      \State $P^{*} \gets \mathrm{Solve}(\ell_{P}(P, \Delta^{*}, K^{(t_{K})}) = 0; P)$
      \State $\begin{aligned}[t]
            K&^{(t_{K}+1)} \gets K^{(t_{K})} - \eta_{K} \bigl(2((R \\
            & + (B_K)^{\top}P^{*}B_K)K^{(t_{K})} - (B_K)^{\top} P^{*} A_K) X^{*}\bigr)
            \end{aligned}$
    \EndFor
    \State \textbf{Return} $K^{(T_K)}$
    \EndProcedure
  \end{algorithmic}
\end{algorithm}

\subsection{Policy Gradient Convergence Guarantees}\label{section:opt_conv_guar}
We now wish to demonstrate this policy gradient approach retains the desired convergence properties it satisfies in the nominal case. Convergence guarantees surprisingly hold in the standard case despite the nonconvexity of the problem in $K$ due to a property known as ``gradient dominance'' \citep{gravell2020learning}. A function $f : \mathbb{R}^{d_{1}\times d_{2}}\rightarrow\mathbb{R}$ is gradient-dominated if, for some $\mu>0$, $f(x) - f(x^{*})\le\mu|| \nabla_{x} f(x) ||_{F}^{2}$, where $x^{*} := \argmin_{x} f(x)$. 

We proceed through the analysis similarly leveraging gradient dominance; however, our analysis has the novel problem of having to handle the non-uniqueness of the subgradient being used, namely that our algorithm may perform updates with one of the collection of subdifferential vertices than using the uniquely defined gradient. For this reason, we instead consider a generalized notion of \textit{subgradient} domination, defined as $\exists$ some $\mu>0$ such that $f(x) - f(x^{*})\le\mu \min_{g\in\partial f(x)} || g ||_{F}^{2}$. We show that $\phi$ satisfies subgradient dominance and that this then produces convergence guarantees for \Cref{alg:crc} in \Cref{lemma:grad_dom}. 

The full proof is deferred to \Cref{section:pg_conv_guar} and parallels the proof strategy presented in \citep{fazel2018global}; the main technical challenges are in demonstrating that bounds on expressions related to $J(K, C)$ and $\nabla_{K} J(K, C)$ are retained in our robust setting and that the non-uniqueness of the maximizer does not interfere with convergence. Intuitively, the non-uniqueness manifests as a looser gradient dominance constant and, thus, convergence decay rate, since $\mu$ must be taken to be the loosest constant amongst those of the maximizing set. In line with \citep{fazel2018global}, we assume $X_{K}\succcurlyeq 0$ across $\widehat{C}\in\mathcal{C}$ and $K\in\mathcal{K}(\mathcal{C})$. This is true if the system is controllable for any $\widehat{C}\in\mathcal{C}$, which holds if the nominal dynamics are well-behaved and the predictor $f(\theta)$ is sufficiently accurate, resulting in a small $\mathcal{C}$ set. The statements below are made for a general set of dynamics $\mathcal{C}$, though we are interested in $\mathcal{C} := \mathcal{U}(\theta)$. We defer the presentation of the explicit poly-expression in \Cref{thm:pol_grad_convergence} to \Cref{section:pg_conv_guar}.


\begin{theorem}\label{thm:pol_grad_convergence}
    Let $J(K, C) := \sum_{t=0}^{\infty} (x_{t}^\top (Q + K^\top R K) x_{t})$ for $w=0$. Let $K^{(t)}$, $\phi(K) := \max_{C \in \mathcal{C}} J(K, C)$, and $K^{*}_{\mathrm{rob}}(\mathcal{C}) := \argmin_{K\in\mathcal{K}(\mathcal{C})} \phi(K)$ be the $t$-th iterate of \Cref{alg:crc}. Assume for each iterate $t$, the optimization over $C$ converges, i.e. $C^{(T_C)} = C^{*}(K^{(t)})$, that $K^{(t)}\in\mathcal{K}(\mathcal{C})$, and that $X_{K}\succcurlyeq0$ for all $\widehat{C}\in\mathcal{C}$ and $K\in\mathcal{K}(\mathcal{C})$. Denote $\nu := \min_{\widehat{C}\in\mathcal{C}} \min_{K\in\mathcal{K}(\mathcal{C})} \sigma_{\min}(X_{K})$. If in \Cref{alg:crc} 
        $\eta_{K}\le\min_{[\widehat{A},\widehat{B}] \in \mathcal{C}} \mathrm{poly}(\frac{\nu\sigma_{\min}(Q)}{J(K^{(0)}, C)}, \frac{1}{|| \widehat{A} ||}, \frac{1}{|| \widehat{B} ||}, \frac{1}{|| R ||}, \sigma_{\min}(R)),$
    then, there exists a $\gamma > 0$ such that $\phi(K^{(T)}) - \phi(K^{*}_{\mathrm{rob}}(\mathcal{C}))
    \le (1 - \gamma)^{T} (\phi(K_{0}) - \phi(K^{*}_{\mathrm{rob}}(\mathcal{C}))).$
\end{theorem}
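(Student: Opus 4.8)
The plan is to follow the Fazel–Ge–Kakade–Mesbahi strategy for LQR policy gradient convergence, adapting each ingredient to the robust objective $\phi(K) = \max_{C\in\mathcal{C}} J(K,C)$ and to the fact that \Cref{alg:crc} updates $K$ using an arbitrary subgradient vertex $\nabla_K J(K, C_K)$ for some $C_K \in C^*(K)$. The argument has three components: (i) establishing that $\phi$ is \emph{subgradient-dominated}, i.e.\ $\phi(K) - \phi(K^*_{\mathrm{rob}}(\mathcal{C})) \le \mu \min_{g\in\partial\phi(K)} \|g\|_F^2$ for some explicit $\mu>0$; (ii) establishing a ``smoothness''-type descent inequality showing that one step of projected subgradient update with the stated step size $\eta_K$ decreases $\phi$ by a multiplicative factor controlled by $\min_{g\in\partial\phi(K)}\|g\|_F^2$; and (iii) combining (i) and (ii) to extract the linear rate $(1-\gamma)^T$.

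For step (i), I would first recall the exact-gradient-domination result for the nominal cost: for any fixed $C\in\mathcal{C}$, $J(K,C) - J(K^*(C),C) \le \mu_C \|\nabla_K J(K,C)\|_F^2$ with $\mu_C$ expressible in terms of $\sigma_{\min}(X_0)$, $\sigma_{\min}(R)$, $\|X_{K^*(C)}\|$, as in \cite{fazel2018global}. Now fix $K\in\mathcal{K}(\mathcal{C})$ and let $C_K$ achieve the max in $\phi(K)$, so $\phi(K) = J(K, C_K)$. The key observation is that for the comparator $K^*_{\mathrm{rob}}(\mathcal{C})$ we have $\phi(K^*_{\mathrm{rob}}) \ge J(K^*_{\mathrm{rob}}, C_K) \ge \min_{K'} J(K', C_K) = J(K^*(C_K), C_K)$, so $\phi(K) - \phi(K^*_{\mathrm{rob}}) \le J(K,C_K) - J(K^*(C_K),C_K) \le \mu_{C_K}\|\nabla_K J(K,C_K)\|_F^2$. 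Since $\nabla_K J(K,C_K) \in \partial_K\phi(K)$ is one of the subdifferential vertices, this is bounded by $\mu_{C_K}\min_{g\in\partial\phi(K)}\|g\|_F^2$ only if the algorithm happens to pick \emph{that} vertex; to make the bound uniform over \emph{any} chosen vertex, I take $\mu := \max_{C\in\mathcal{C}}\mu_C$ (finite under \Cref{assump:opt_technical_1} and the assumed $X_K\succcurlyeq 0$, $\nu>0$), and note the inequality $\phi(K)-\phi(K^*_{\mathrm{rob}})\le \mu\|g\|_F^2$ holds for the \emph{particular} vertex $g = \nabla_K J(K,C_K)$ used by the algorithm — which is all the descent argument in step (iii) needs. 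This is where the ``looser constant'' remark in the text comes in: $\mu$ is the worst-case dominance constant over $\mathcal{C}$.

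For step (ii), I would reuse the local quasi-smoothness estimates for $K\mapsto J(K,C)$ from \cite{fazel2018global}: there is a radius and a constant (both polynomial in $\|A\|,\|B\|,\|R\|,1/\sigma_{\min}(R)$, $J(K^{(0)},C)/(\nu\sigma_{\min}(Q))$) such that if $\|K'-K\|$ is small then $J(K',C) \le J(K,C) + \langle \nabla_K J(K,C), K'-K\rangle + (\text{const})\|K'-K\|^2$, and the stated cap on $\eta_K$ guarantees each iterate stays in this region and stays in $\mathcal{K}(\mathcal{C})$. Applying this with $C = C_{K^{(t)}}$, $K = K^{(t)}$, $K' = K^{(t+1)} = K^{(t)} - \eta_K \nabla_K J(K^{(t)}, C_{K^{(t)}})$, and using $\phi(K^{(t+1)}) = \max_C J(K^{(t+1)},C)$ needs care: I only have a descent bound on $J(\cdot, C_{K^{(t)}})$, not on $\phi$ directly. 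The standard fix (used implicitly for minimax policy gradient) is that the subgradient step on the single active $C$ still certifies $J(K^{(t+1)}, C_{K^{(t)}}) \le \phi(K^{(t)}) - c\eta_K\|\nabla_K J(K^{(t)},C_{K^{(t)}})\|_F^2$; one then needs an argument that $\phi(K^{(t+1)})$ does not exceed this — here I would invoke the assumption that the inner maximization converges ($C^{(T_C)} = C^*(K^{(t)})$) together with a continuity/one-step bound showing the max does not jump up along a sufficiently small step, or alternatively absorb the discrepancy into the constant. Combining with step (i): $\phi(K^{(t+1)}) - \phi(K^*_{\mathrm{rob}}) \le (1 - c\eta_K/\mu)(\phi(K^{(t)}) - \phi(K^*_{\mathrm{rob}}))$, and iterating gives the claim with $\gamma = c\eta_K/\mu \in (0,1)$.

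The main obstacle I anticipate is precisely this last gluing step in (ii): controlling $\phi(K^{(t+1)})$ — a maximum over all of $\mathcal{C}$ — from a descent inequality that only holds for the previously-active worst-case dynamics $C_{K^{(t)}}$. Unlike the nominal case, where smoothness of $J(\cdot,C)$ immediately gives a per-step decrease, here the identity of the maximizer can switch between iterations, so I must either (a) show the switch can only help (the new maximizer's value is still dominated because the step was along a descent direction for the old one and $\eta_K$ is small enough that $J(\cdot,C)$ changes by at most $O(\eta_K\cdot\text{grad})$ uniformly in $C$, using \Cref{assump:opt_technical_1} for a uniform Lipschitz-in-$K$ bound on $J(\cdot,C)$ over $\mathcal{C}$), or (b) pass to an envelope/Danskin-type subgradient descent lemma for max-of-smooth functions. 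Option (a) is the cleaner route and reduces the whole theorem to: nominal gradient domination + nominal local smoothness + a uniform-over-$\mathcal{C}$ Lipschitz bound + the trivial comparator inequality $\phi(K^*_{\mathrm{rob}}) \ge J(K^*(C),C)$ for every $C$. The remaining steps are the routine (if lengthy) verification that all the constants from \cite{fazel2018global} remain finite and uniformly bounded over $\mathcal{C}$ under \Cref{assump:opt_technical_1}, $X_K\succcurlyeq 0$, and $\nu = \min_{\widehat{C},K}\sigma_{\min}(X_K) > 0$.
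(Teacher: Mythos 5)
Your overall architecture is the same as the paper's: your step (i) is precisely the paper's \Cref{lemma:danskin_grad_dom} specialized via Lemma 5 of \citep{fazel2018global} (i.e.\ \Cref{lemma:grad_dom}), including the device of taking a worst-case dominance constant over the maximizer set so the bound applies to whichever subdifferential vertex the algorithm uses; your step (iii) is the same inductive contraction with $\gamma$ the worst per-iterate ratio; and your step (ii) is the same appeal to the one-step descent estimates (Lemmas 3 and 24 of \citep{fazel2018global}) applied to the fixed dynamics $C^{*}(K^{(t)})$, with the step-size cap ensuring those lemmas apply uniformly over $\mathcal{C}$. So there is no methodological divergence to report.

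The genuine gap is exactly the ``gluing'' step you flag in (ii), and your proposal does not close it. A one-step decrease of $J(\cdot, C^{*}(K^{(t)}))$ does not by itself bound $\phi(K^{(t+1)})$: since $\phi(K^{(t+1)}) = \max_{C\in\mathcal{C}} J(K^{(t+1)},C) \ge J(K^{(t+1)}, C^{*}(K^{(t)}))$, a switch of the maximizer can only push $\phi(K^{(t+1)})$ \emph{up} relative to the certified value. Your fix (a) --- a uniform Lipschitz-in-$K$ bound so the max ``does not jump up by more than $O(\eta_K\|g_t\|)$'' --- is not enough, because the certified decrease of the active cost is itself first order in the step, $c\,\eta_K\|g_t\|_F^2$, so a first-order bound on the possible increase from near-active $C$'s does not leave a net decrease proportional to $\|g_t\|_F^2$; one needs either stability of the worst-case $C$ along the iterates, a descent inequality for $J(K^{(t+1)},\cdot)$ uniform over the (near-)active set, or a weak-convexity/Moreau-envelope argument (your option (b)), which would change the form of the convergence statement. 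You should also know that the paper's own proof of \Cref{thm:opt_conv} passes over this same point by asserting $\phi(K^{(t)})-\phi(K^{(t+1)}) \ge J(K^{(t)},C^{*}(K^{(t)})) - J(K^{(t+1)},C^{*}(K^{(t)}))$, which is the reverse of the inequality the max structure supplies (it holds with equality only when the maximizer does not move); so the obstacle you identified as the main difficulty is real, is the crux of the theorem, and is not resolved by either your sketch or the printed argument as written.
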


Formally, such convergence is guaranteed only if iterates $K^{(t)}$
remain within $\mathcal{K}(\mathcal{U}(\theta))$. One modification to \Cref{alg:crc} would involve projecting intermediate iterates to this stabilizing set by solving
\begin{equation}\label{eqn:projection_stab}
\begin{gathered}
    \Pi_{\mathcal{K}(\mathcal{U}(\theta))}(\widetilde{K}^{(t)}) := 
    \argmin_{K} ||K - \widetilde{K}^{(t)}||_{\mathrm{op}} \\
    \textrm{\ \ s.t.\ \ } \max_{[\widehat{A},\widehat{B}] := \widehat{C}\in\mathcal{U}(\theta)} \max_{i} \mathrm{Re}(\lambda_i(\widehat{A} + \widehat{B}K)) < 0 \nonumber.
\end{gathered}
\end{equation}
There, however, is no known efficient algorithm to solve this projection step. Despite being of theoretical concern, this instability issue fails to be practically relevant, since the controller iterates remain well within the set of stabilization for sufficiently accurate predictors $f(\theta)$. 
If instabilities arise, an approximate solution can be obtained by replacing $\max_{\widehat{C}\in\mathcal{U}(\theta)}$ of \Cref{eqn:projection_stab} with a finite sampling $\{\widehat{C}^{(i)}\}$ over $\mathcal{U}(\theta)$. 

\section{Related Works}\label{section:related_works}
Robust control can be broadly categorized into trajectory-based and trajectory-free robustness. The former adjusts an initially posited control scheme in an \textit{online} fashion based on feedback measurements \citep{azad2022control,seiler2020introduction,paraskevopoulos2017modern}, whereas the latter directly incorporates desired robustness into the optimization problem \textit{prior} to deployment \citep{gravell2020robust,gravell2022robust}. 
Given that control co-design seeks to identify a controller \textit{prior} to deploying a design, we specifically highlight methods of trajectory-free robust control.

A popular classical trajectory-free method is $\mathcal{H}_{\infty}$ control. $\mathcal{H}_{\infty}$ is typically formulated as minimizing $|| T_{wz} ||_{\infty}$, i.e. the frequency space transfer function from $w\to z$ for some performance state $z$. By defining $z = \begin{bmatrix}
    Q^{1/2}x & R^{1/2}u 
\end{bmatrix}$, we recover a recognizable LQR formulation, with the objective replaced by
\begin{equation}\label{eqn:h_inf}
    u^{*} = \min_{\{u_t\}} \max_{\{w_t\}} \sum_{t=0}^{\infty} (x_t^{\top} Q x_t + u_t^{\top} R u_t - \gamma^2 w_t^{\top} w_t),
\end{equation}
which can be solved via generalized Riccati equations \citep{bacsar2008h}. Here, $\gamma$ can either be fixed to perform suboptimal $\mathcal{H}_{\infty}$ synthesis or it can be determined via bisection to identify the smallest $\gamma$ such that a solution exists. Notably, the nominal $\mathcal{H}_{\infty}$ formulation seeks additive, unstructured disturbance rejection. Of interest herein, however, was robustness to \textit{multiplicative} uncertainties through the system dynamics. Towards this end, $\mu$-synthesis offers an extension to $\mathcal{H}_{\infty}$ control by allowing users to specify norm-bounded uncertainties on system dynamics \citep{bevrani2015robust,chen2014mu}.

This need for manual specification in $\mu$-synthesis, however, incurs conservatism or controller instability if poorly specified, resulting in increasing interest in data-driven specifications. In this vein, a formulation known as LQR with multiplicative noise (LQRm), has recently become of interest, where the controller is:
\begin{gather}\label{eqn:lqrm}
K^* := \argmin_{K} \mathbb{E}_{\{\delta_i\},\{\gamma_i\}}[J(K, A, B)] \\
\dot{x} := \left(A + \sum_{i=1}^{p} \delta_{i} A_i\right)x + \left(B + \sum_{i=1}^{q} \gamma_{i} B_i\right)u + w, \nonumber
\end{gather}
where $\{A_i\}$ and $\{B_i\}$ and the distributions of $\{\delta_i\}\sim\mathcal{D}_{\delta}$ and $\{\gamma_i\}\sim\mathcal{D}_{\gamma}$ can be specified, either with data-free or with data-driven estimation. Most common among data-free specifications are so-called ``margin methods.'' Briefly, margin methods specify $\{\delta_i\}$ and $\{\gamma_i\}$ by finding those $\{\delta_i\}$ and $\{\gamma_i\}$ that result in borderline-stable dynamics when paired with the corresponding, manually specified $(\{A_i\},\{B_i\})$ and some choice of controller: the particular controller varies across margin strategies. A full description of the margin methods considered is given in \Cref{section:exp_setup}. 

As with $\mathcal{H}_{\infty}$, such data-free LQRm methods sacrifice stability or risk conservatism in ignoring the nature of the dynamics predictor misspecification, resulting in recent works that give data-driven parameterizations \citep{gravell2020robustdesign}. Here, two approaches were proposed to learn the margin parameters, which we refer to as the ``Shared Lyapunov'' and ``Auxiliary Stabilizer'' approaches, described fully in their paper. While such approaches improve upon the conservatism of classical data-free margin methods, they still require the hand specification of the perturbation matrices $\{A_i\}$ and $\{B_i\}$.


\section{Experiments}\label{section:experiments} 
\begin{table*}[t]
\caption{\label{table:p_values} Each of the results below are the p-values of paired t-tests conducted pairwise between methods testing $H_{1} : \mathcal{R}^{(\mathrm{CPC})}_{\%} < \mathcal{R}^{(\mathrm{alt})}_{\%}$ over 1,000 i.i.d. test samples. For any comparison method with $>80\%$ unstable cases (see \Cref{table:stab_perc} for percentages), we have marked the entry with ``---''.}
\centering
\resizebox{\textwidth}{!}{%
\begin{tabular}{lccccc}
\toprule
 & Airfoil & Load Positioning & Furuta Pendulum & DC Microgrids & Fusion Plant \\
\midrule
Random Critical & --- & --- & --- & --- & --- \\
Random OL MSS (Weak) & \textbf{0.0003} & --- & --- & --- & --- \\
Random OL MSUS & --- & --- & --- & --- & --- \\
Row-Col Critical & --- & --- & --- & --- & --- \\
Row-Col OL MSS (Weak) & \textbf{0.0117} & --- & --- & --- & --- \\
Row-Col OL MSUS & \textbf{0.0009} & --- & --- & --- & --- \\
Shared Lyapunov & \textbf{0.0001} & \textbf{0.0000} & \textbf{0.0112} & 0.0913 & \textbf{0.0023} \\
Auxiliary Stabilizer & \textbf{0.0001} & \textbf{0.0005} & \textbf{0.0055} & \textbf{0.0428} & 0.0630 \\
$\mathcal{H}_{\infty}$ & \textbf{0.0009} & \textbf{0.0000} & \textbf{0.0071} & \textbf{0.0004} & \textbf{0.0013} \\
\bottomrule
\end{tabular}
}
\end{table*}

We now study five setups of interest in the infinite horizon, discrete-time, deterministic setting, namely LQR control of an airfoil \citep{chrif2014aircraft}, a load positioning system \citep{ahmadi2023lqr,jiang2016iterative}, a Furuta pendulum \citep{arulmozhi2022kalman}, a DC microgrid \citep{liu2023novel}, and a nuclear plant \citep{kirgni2023lqr}. The dimensions of $(\theta,A,B)$ are $(\mathbb{R}^{15},\mathbb{R}^{4\times 4},\mathbb{R}^{4\times 2})$ for the airfoil, $(\mathbb{R}^{5},\mathbb{R}^{4\times 4},\mathbb{R}^{4\times 1})$ for the load positioning system, $(\mathbb{R}^{9},\mathbb{R}^{4\times 4},\mathbb{R}^{4\times 1})$ for the Furuta pendulum, $(\mathbb{R}^{17},\mathbb{R}^{9\times 9},\mathbb{R}^{9\times 1})$ for the DC microgrid, and $(\mathbb{R}^{26},\mathbb{R}^{8\times 8},\mathbb{R}^{8\times 1})$ for the nuclear plant. The full setup details are provided in \Cref{section:exp_system_descriptions}.

We compare against $\mathcal{H}_{\infty}$ control with $\gamma$ bisection, the data-free margin methods, and the data-based methods for the LQRm setup as discussed in \Cref{section:related_works}. The data-free margin methods are as implemented by \citep{gravell2020robust} and are fully described in \Cref{section:exp_setup}, of which we specifically consider ``Random Critical,'' ``Random OL MSS (Weak),'' ``Random OL MSUS,'' ``Row-Col Critical,'' ``Row-Col OL MSS (Weak),'' and ``Row-Col OL MSUS.'' The data-based methods are the ``Shared Lyapunov'' and ``Auxiliary Stabilizer'' approaches from \citep{gravell2020robustdesign}. As discussed, we are considering the trajectory-\textit{free} setting, so we do not compare against methods that achieve robustness adaptively over trajectories, such as those in \citep{gravell2020robust,gravell2022robust}. 


In the experiments, we construct $\mathcal{D}$ as per \Cref{section:problem_formulation}, using random gain matrices $K^{(i)}$.
$N$ was taken to be $2,000$ with $|\mathcal{D}_{\mathcal{C}}| = 400$ and the remaining $\mathcal{D}_{T}$ used to train $f(\theta)$, taken to be feed-forward neural networks. 

\subsection{Robust Control Regret \& Stability}\label{section:experiments_regret}

We first study the empirical regret across the aforementioned systems and robust control methods over $1,000$ i.i.d. test points from $\mathcal{P}(\Theta, C)$. To make results comparable across $\theta^{(i)}$, we normalize each trial by its nominal objective, i.e., $\mathcal{R}_{\%} = \mathcal{R}(\Theta, C) / J(K^{*}(C), C)$, as in \citep{sun2023predict}. If the uncertainty regions of the robust problems are poorly specified, i.e. if the regions of robustness do not capture the true dynamics, the resulting robust controller may have unbounded cost, i.e. $J(K^{*}_{\mathrm{rob}}, C) = \infty$. We, thus, only compute $\mathcal{R}_{\%}$ over the stabilizing controllers and separately report the proportion of destabilized cases. Lower values are desirable for both. 

For each comparison method, we report the result of a one-side paired t-test of $H_{1} : \mathcal{R}^{(\mathrm{CPC})}_{\%} < \mathcal{R}^{(\mathrm{alt})}_{\%}$ in \Cref{table:p_values}. We defer the presentation of the raw regret values and the percent of cases with stabilized dynamics to \Cref{section:addn_exp_results} due to space constraints. Notably, the alternative approaches generally incur greater regret than CPC. For $\mathcal{H}_{\infty}$, this is expected as the misspecification here is in the dynamics matrices, differing from the adversarial exogenous noise that $\mathcal{H}_{\infty}$ is designed to protect against. Similarly, the data-free margin methods protect against perturbations that are misaligned with the true dynamics misspecification, which result in significant instability for the higher-dimensional problems (i.e. the ``Furuta Pendulum,'' ``DC Microgrids,'' and ``Fusion Plant'' tasks). The data-driven LQRm methods improve significantly upon these margin approaches in stability, yet they are too conservative as they do not make use of the anticipated structures of the errors made in the predictions by $\widehat{f}(\theta)$.

\begin{figure}
  \centering
  \includegraphics[width=.5\columnwidth]{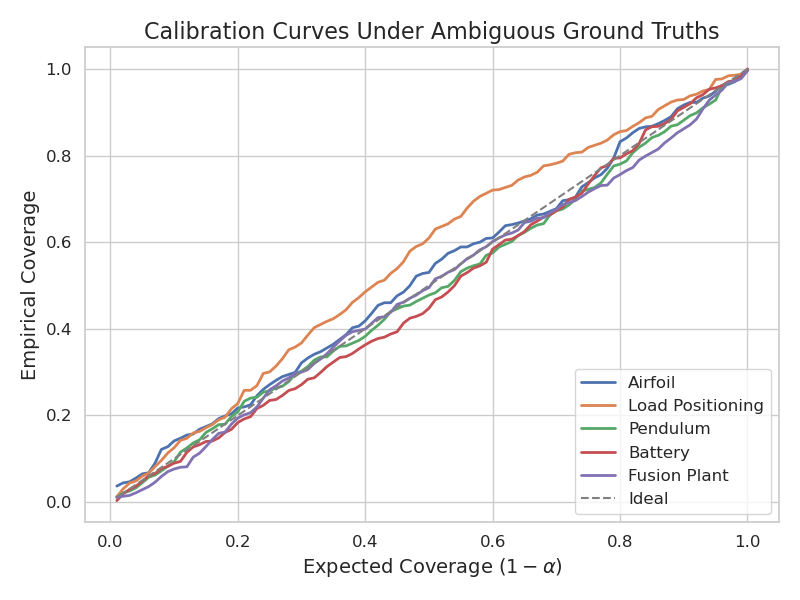}
  \caption{Calibration plots for the tasks, assessed on 1,000 i.i.d. test samples of $C$ with calibration performed using the estimated $\widetilde{C}$, affirming \Cref{thm:amb_ground_truth}.}
  \label{fig:cal}
\end{figure}


\subsection{Ambiguous Ground Truth Calibration}

To validate the results of \Cref{thm:amb_ground_truth} and demonstrate the empirical validity of the associated assumption, we computed the empirical coverages across various levels of desired coverage $\alpha\in(0,1)$ for the experimental setups. As previously discussed, the calibration here was performed using a calibration set of \textit{estimated} dynamics $\widetilde{\mathcal{D}} = \{(\theta^{(i)}, \widetilde{C}^{(i)})\}$ but coverage was assessed on the \textit{true} dynamics $\{C^{(i)}\}$. We computed this in the manner described in \Cref{section:experiments} for $\alpha$ varying by increments of 0.05. For assessing coverage, we again used $1,000$ test points drawn i.i.d. from $\mathcal{P}(\Theta, C)$ and measured the proportion of samples for which $s(\theta^{(i)},C^{(i)})\le\widehat{q}$. The results are shown in \Cref{fig:cal}, where we see the desired calibration under calibration with estimated dynamics.  

\section{Discussion}\label{section:discussion}
We have presented CPC, a principled framework for specifying the LQR robust control subproblem in a UCCD setting, suggesting many directions for extension. 
The most immediate would involve integrating this framework fully into a UCCD pipeline: we focused herein on the robust control subproblem but characterizing the end-to-end workflow is of great interest. 
In addition, nonlinear extension by leveraging Koopman operator theory or nonparametric neural operator models would be interesting \citep{brunton2021modern,mauroy2020koopman,qian2020lift} as would the extension to MDPs \citep{wang2021learning}. 


\printbibliography

\clearpage
\appendix
\thispagestyle{empty}

\onecolumn

\section{Prediction Region Validity Lemma}\label{section:coverage_bound}
Given that we characterize \textit{both} the continuous- and discrete-time settings below, we produce a generalized definition to that presented in \Cref{def:margin}.

\begin{definition}\label{def:margin_extended}
Let $M(C,K^*(C)) := A - B K^*(C)$ be the optimal closed-loop matrix. Define
\begin{equation*}
    r(C,K^*(C)) :=
        \begin{cases}
        \frac{\min_i -\mathrm{Re}(\lambda_i(M))}{\kappa(U)\,||W||_2} & \text{Continuous time setting}\\
        \frac{\min_i (1-|\lambda_i(M)|)}{\kappa(U)\,||W||_2} & \text{Discrete time setting}
        \end{cases}
\end{equation*}
where $M = U \Lambda U^{-1}$, $\kappa(U)$ is the condition number of $U$, and $W=[I\;-\!K^*(C)^\top]^\top$.
\end{definition}

\begin{lemma}\label{lemma:coverage_bound}
    Let $J(K, C)$ be a function such that, for any fixed $\theta$, it is non-negative and $L$-Lipschitz in $\widehat{C}\in\mathcal{B}_{\widehat{q}}(f(\theta))$ under the operator norm for any $K\in\mathcal{K}(\mathcal{U}(\theta))$,
    where $\mathcal{K} : \Omega(\Theta)\rightarrow\Omega(\mathcal{C})$ and $\Omega_{1}\subset\Omega_{2}\implies\mathcal{K}(\Omega_{2})\subset \mathcal{K}(\Omega_{1})$. Further assume that $\mathcal{P}_{\Theta,C}(C \in \mathcal{U}(\Theta)) \ge 1 - \alpha$. Then:
    \begin{equation}
        \mathcal{P}_{\Theta,C}\left(0\le \mathcal{R}(\Theta, C) \le 2L \widehat{q} + \Delta_{\mathrm{dom}}(\Theta,C)\right) \ge 1 - \alpha.
    \end{equation}
    Further, if $\widehat{q} < r(C,K^*(C))$, (see \Cref{def:margin_extended}) $\Delta_{\mathrm{dom}}(\Theta,C)=0$.
\end{lemma}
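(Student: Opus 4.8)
The plan is to carry out all the work on the conformal coverage event $E := \{C \in \mathcal{U}(\Theta)\} = \{\|f(\Theta)-C\|_{\mathrm{op}} \le \widehat q\}$, which has probability at least $1-\alpha$ by hypothesis, and to verify both inequalities pointwise on $E$ (off $E$ nothing is claimed). I would define the domain-gap term as the slack left over after the $2L\widehat q$ term,
\[
\Delta_{\mathrm{dom}}(\Theta,C) := \Big(\ \min_{K\in\mathcal{K}(\mathcal{U}(\Theta))}\ \max_{\widehat C\in\mathcal{U}(\Theta)} J(K,\widehat C)\ -\ J(K^{*}(C),C)\ -\ 2L\widehat q\ \Big)_{+},
\]
so that $\Delta_{\mathrm{dom}}\ge 0$ always and the claimed upper bound on $\mathcal{R}$ becomes, by construction, equivalent to bounding the inner min--max value. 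Two structural facts then drive everything: (i) the inclusion-reversing property of $\mathcal{K}$ assumed in the lemma, which on $E$ (where $\{C\}\subseteq\mathcal{U}(\Theta)$) gives $\mathcal{K}(\mathcal{U}(\Theta))\subseteq\mathcal{K}(\{C\})$, the nominal stabilizing set for $C$; and (ii) the stipulation that $J(K,\cdot)$ is non-negative and, whenever $K\in\mathcal{K}(\mathcal{U}(\Theta))$, $L$-Lipschitz on $\mathcal{B}_{\widehat q}(f(\Theta))$ in operator norm.

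For the lower bound, on $E$ the robust controller $K^{*}_{\mathrm{rob}}(\mathcal{U}(\Theta))\in\mathcal{K}(\mathcal{U}(\Theta))\subseteq\mathcal{K}(\{C\})$ is a feasible (if suboptimal) controller for the nominal LQR problem $\min_{K\in\mathcal{K}(\{C\})}J(K,C)$, so optimality of $K^{*}(C)$ forces $J(K^{*}_{\mathrm{rob}},C)\ge J(K^{*}(C),C)$, i.e. $\mathcal{R}(\Theta,C)\ge0$. For the upper bound, since $C\in\mathcal{U}(\Theta)$ on $E$,
\[
J(K^{*}_{\mathrm{rob}},C)\ \le\ \max_{\widehat C\in\mathcal{U}(\Theta)} J(K^{*}_{\mathrm{rob}},\widehat C)\ =\ \min_{K\in\mathcal{K}(\mathcal{U}(\Theta))}\ \max_{\widehat C\in\mathcal{U}(\Theta)} J(K,\widehat C),
\]
the equality being the defining property of $K^{*}_{\mathrm{rob}}$; subtracting $J(K^{*}(C),C)$ and reading off the definition of $\Delta_{\mathrm{dom}}$ gives $\mathcal{R}(\Theta,C)\le 2L\widehat q+\Delta_{\mathrm{dom}}(\Theta,C)$. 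Both inequalities thus hold on $E$, which proves the probabilistic statement.

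The substantive part is showing $\Delta_{\mathrm{dom}}(\Theta,C)=0$ when $\widehat q<r(C,K^{*}(C))$, and the key claim is that under this condition $K^{*}(C)$ is itself universally stabilizing, $K^{*}(C)\in\mathcal{K}(\mathcal{U}(\Theta))$. Fixing any $\widehat C=[\widehat A,\widehat B]\in\mathcal{B}_{\widehat q}(f(\Theta))$, its closed-loop matrix under $K^{*}(C)$ is $\widehat C W = M + (\widehat C-C)W$, where $M := A-BK^{*}(C)=CW$ is diagonalizable as $U\Lambda U^{-1}$ by the standing hypothesis of \Cref{def:margin_extended}, and $\|(\widehat C-C)W\|_{\mathrm{op}}\le\|\widehat C-C\|_{\mathrm{op}}\|W\|_{\mathrm{op}}$ with $\|\widehat C-C\|_{\mathrm{op}}\le\|\widehat C-f(\Theta)\|_{\mathrm{op}}+\|f(\Theta)-C\|_{\mathrm{op}}$ of order $\widehat q$ on $E$. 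I would then invoke the Bauer--Fike theorem: every eigenvalue of $\widehat C W$ lies within $\kappa(U)\,\|(\widehat C-C)W\|_{\mathrm{op}}$ of the spectrum of $M$, and $r(C,K^{*}(C))$ in \Cref{def:margin_extended} is calibrated precisely so that $\widehat q<r$ keeps this perturbation radius strictly below the stability margin $\min_i(-\mathrm{Re}\,\lambda_i(M))$ (continuous time) or $\min_i(1-|\lambda_i(M)|)$ (discrete time), pushing the entire perturbed spectrum into the open left half-plane, resp. the open unit disk. Hence $K^{*}(C)\in\mathcal{K}(\mathcal{U}(\Theta))$, so $\min_{K\in\mathcal{K}(\mathcal{U}(\Theta))}\max_{\widehat C} J(K,\widehat C)\le\max_{\widehat C\in\mathcal{U}(\Theta)}J(K^{*}(C),\widehat C)$; and because $K^{*}(C)$ now lies in $\mathcal{K}(\mathcal{U}(\Theta))$, fact (ii) applies to $J(K^{*}(C),\cdot)$, giving $\max_{\widehat C\in\mathcal{U}(\Theta)}J(K^{*}(C),\widehat C)\le J(K^{*}(C),C)+2L\widehat q$. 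The argument of the positive part defining $\Delta_{\mathrm{dom}}$ is therefore non-positive, so $\Delta_{\mathrm{dom}}(\Theta,C)=0$.

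The hard part will be the margin step: stating the eigenvalue-perturbation bound uniformly for the continuous- and discrete-time stability regions, controlling the non-normality of $M$ through $\kappa(U)$, and matching the perturbation radius cleanly to the constant in \Cref{def:margin_extended} (the diameter-of-ball estimate $\|\widehat C-C\|_{\mathrm{op}}\le 2\widehat q$ must be reconciled with how $r$ is normalized). A secondary, more mechanical difficulty -- deferred to the places where this abstract lemma is instantiated in the four concrete theorems -- is verifying that each specific $J(K,\cdot)$ is finite, non-negative, and $L$-Lipschitz on $\mathcal{B}_{\widehat q}(f(\theta))$ with an explicit $L$, which in the stochastic settings is exactly where \Cref{assump:opt_technical_1} and the mean-square/discounting condition \Cref{assump:opt_technical_2} are needed.
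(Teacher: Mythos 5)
Your proposal is correct and takes essentially the same route as the paper's proof: condition on the coverage event, use the inclusion-reversing property of $\mathcal{K}$ to get $\mathcal{K}(\mathcal{U}(\Theta))\subseteq\mathcal{K}(C)$ (which also gives your explicit $\mathcal{R}\ge 0$ step), bound the robustness term by $2L\widehat{q}$ via Lipschitzness, and apply Bauer--Fike to conclude $K^{*}(C)\in\mathcal{K}(\mathcal{U}(\Theta))$ when $\widehat{q}<r(C,K^{*}(C))$, so the domain-gap term vanishes; your positive-part packaging of $\Delta_{\mathrm{dom}}$ is only a cosmetic difference, since the lemma statement leaves that term unspecified. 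The one caveat you flag is genuine: the paper's own proof writes $\widehat{C}=C+\Delta$ with $\|\Delta\|_{\mathrm{op}}\le\widehat{q}$ even though the ball is centered at $f(\theta)$, whereas the triangle inequality only gives $\|\widehat{C}-C\|_{\mathrm{op}}\le 2\widehat{q}$ on the coverage event, so the margin condition should strictly read $2\widehat{q}<r$ (or $r$ should be defined with an extra factor of two) --- a slack the paper glosses over and your write-up correctly identifies as the point needing reconciliation.
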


\begin{proof}
    We consider the event of interest conditionally on a pair $(\theta, C)$ where $\widehat{C}\in\mathcal{B}_{\widehat{q}}(f(\theta))$. By assumption, we then have that $\mathcal{K}(\mathcal{B}_{\widehat{q}}(f(\theta)))\subset\mathcal{K}(C)$. As previously noted, the suboptimality here is defined over the \textit{true} $C$ matrix, meaning, unlike previous works, we here wish to bound $J(K^{*}(\mathcal{B}_{\widehat{q}}(f(\theta))), C) - \min_{K\in\mathcal{K}(C)} J(K, C)$ in place of $ \min_{K\in\mathcal{K}(C)} \max_{\widehat{C}\in\mathcal{B}_{\widehat{q}}(f(\theta))} J(K, \widehat{C}) - \min_{K\in\mathcal{K}(C)} J(K, C)$, where $K^{*}(\mathcal{B}_{\widehat{q}}(f(\theta))) := \argmin_{K\in\mathcal{K}(\mathcal{B}_{\widehat{q}}(f(\theta)))} \max_{\widehat{C}\in\mathcal{B}_{\widehat{q}}(f(\theta))} J(K, \widehat{C})$. We begin by matching the minimization sets in the terms as follows:
    \begin{gather*}
        \left|J(K^{*}(\mathcal{B}_{\widehat{q}}(f(\theta))), C) - \min_{K\in\mathcal{K}(C)} J(K, C)\right| \\
        = \left|J(K^{*}(\mathcal{B}_{\widehat{q}}(f(\theta))), C) - \min_{K\in\mathcal{K}(\mathcal{B}_{\widehat{q}}(f(\theta)))} J(K, C) + \min_{K\in\mathcal{K}(\mathcal{B}_{\widehat{q}}(f(\theta)))} J(K, C) - \min_{K\in\mathcal{K}(C)} J(K, C)\right| \\
        \le \underbrace{\left|J(K^{*}(\mathcal{B}_{\widehat{q}}(f(\theta))), C) - \min_{K\in\mathcal{K}(\mathcal{B}_{\widehat{q}}(f(\theta)))} J(K, C)\right|}_{\text{statistical robustness cost}} + \underbrace{\left|\min_{K\in\mathcal{K}(\mathcal{B}_{\widehat{q}}(f(\theta)))} J(K, C) - \min_{K\in\mathcal{K}(C)} J(K, C) \right|}_{\text{universal stabilization cost}}
    \end{gather*}
    As discussed in the main text, the error decomposes into two terms: the first from making the controller robust to adversarial dynamics matrices and the second from requiring that such a controller stabilize the whole collection of dynamics. We now bound each term separately, starting with the first term. We first note:
    \begin{gather*}
        J(K^{*}(\mathcal{B}_{\widehat{q}}(f(\theta))), C)
        \le \max_{\widehat{C}\in\mathcal{B}_{\widehat{q}}(f(\theta))} J(K^{*}(\mathcal{B}_{\widehat{q}}(f(\theta))), \widehat{C})
        =: \min_{K\in\mathcal{K}(\mathcal{B}_{\widehat{q}}(f(\theta)))} \max_{\widehat{C}\in\mathcal{B}_{\widehat{q}}(f(\theta))} J(K, \widehat{C})
    \end{gather*}
    where the first step follows by the assumption $C\in\mathcal{B}_{\widehat{q}}(f(\theta))$ and second by definition of $K^{*}(\mathcal{B}_{\widehat{q}}(f(\theta)))$. From here,
    \begin{align*}
        \left|J(K^{*}(\mathcal{B}_{\widehat{q}}(f(\theta))), C) - \min_{K\in\mathcal{K}(\mathcal{B}_{\widehat{q}}(f(\theta)))} J(K, C) \right| 
        &\le 
        \left| \min_{K\in\mathcal{K}(\mathcal{B}_{\widehat{q}}(f(\theta)))} \max_{\widehat{C}\in\mathcal{B}_{\widehat{q}}(f(\theta))} J(K, \widehat{C}) - \min_{K\in\mathcal{K}(\mathcal{B}_{\widehat{q}}(f(\theta)))} J(K, C) \right|  \\
        &\le \max_{K\in\mathcal{K}(\mathcal{B}_{\widehat{q}}(f(\theta)))} | \max_{\widehat{C} \in \mathcal{B}_{\widehat{q}}(f(\theta))} J(K, \widehat{C}) - J(K, C) | \\ 
        &\le L \max_{\widehat{C} \in \mathcal{B}_{\widehat{q}}(f(\theta))} || \widehat{C} - C ||_{\mathrm{op}} 
        \le 2L \widehat{q}.
    \end{align*}
    We now demonstrate that the second term vanishes within a radius of ``safety,'' which we do through perturbation analysis of the closed-loop margin. In particular, notice that, since $\mathcal{K}(\mathcal{B}_{\widehat{q}}(f(\theta)))\subset\mathcal{K}(C)$, this term vanishes if the minimizer over $\mathcal{K}(C)$ lies in $\mathcal{K}(\mathcal{B}_{\widehat{q}}(f(\theta)))$. That is, this difference vanishes if $K^{*}(C) := \argmin_{K\in\mathcal{K}(C)} J(K, C)$ satisfies $K^{*}(C)\in\mathcal{K}(\mathcal{B}_{\widehat{q}}(f(\theta)))$. Notably, this is equivalent to finding a condition under which $K^{*}(C)$ stabilizes all the dynamics matrices $\widehat{C}\in\mathcal{B}_{\widehat{q}}(f(\theta))$. 
    
    To procure a test of this property, we consider an approach from the theory of switched linear systems, where controllers are sought that stabilize a collection of adjusting dynamics matrices. In particular, recall that $\mathcal{B}_{\widehat{q}}(f(\theta))$ is constructed under an operator norm and, therefore, any $\widehat{C}\in\mathcal{B}_{\widehat{q}}(f(\theta))$ can be viewed as a bounded perturbation to the nominal prediction corresponding to $\theta$. That is, that $\widehat{C} = C + \Delta$ for $|| \Delta ||_{\mathrm{op}}\le \widehat{q}$. Thus, we can equivalently view the task as seeking a condition that guarantees that, if $|| C - \widehat{C} ||_{\mathrm{op}}\le\widehat{q}$, $K^{*}(C)$ stabilizes $\widehat{C}$.

    We now consider the $W := [I_{n\times n} \mathrm{\ -}K^{*}(C)^{\top}]^{\top}$ matrix and analyze the closed-loop stability of $\widehat{C}W$. By definition, we have that $\widehat{C}W := (C + \Delta)  W =: CW + E$, where we know that $CW$ is stabilized by $K^{*}(C)$, since $K^{*}(C)\in\mathcal{K}(C)$. By this latter point, we know $CW$ satisfies one of two properties, depending on whether the system being analyzed is a continuous- or discrete-time setting. In the case of continuous time, we have that $\min_i -\mathrm{Re}(\lambda_i(CW)) > 0$ and that, for discrete time, $\min_i (1-|\lambda_i(CW)|) > 0$. 

    Under the additional assumption of $CW$ being diagonalizable, we have that $CW = U \Lambda U^{-1}$ for $\Lambda = \mathrm{diag}(\lambda_1, ..., \lambda_n)$. With this, we now return to analyzing the perturbed $CW + E$. By the Bauer–Fike bound, we know that for any eigenvalue $\lambda_j'$ of $CW + E$, $\min_{i} | \lambda_j' - \lambda_i | \le \kappa(U)|| E ||_{\mathrm{op}}$, where $\kappa(U)$ is the condition number of $U$. Thus, if these perturbed eigenvalues remain within the stabilized regions, $\widehat{C}$ is stabilized by $K^{*}(C)$. 

    In the continuous-time setting, this is guaranteed if $\kappa(U)|| E ||_{\mathrm{op}} < \min_i -\mathrm{Re}(\lambda_i(CW))$ or, by the fact that $|| E || := || \Delta W || \le || \Delta || \cdot || W || \le \widehat{q} || W ||$, if
    \begin{equation*}
        \kappa(U) (\widehat{q} || W ||) < \min_i -\mathrm{Re}(\lambda_i(CW)) 
        \iff
        \widehat{q} < \frac{\min_i -\mathrm{Re}(\lambda_i(CW)) }{\kappa(U) || W || }
    \end{equation*}
    The discrete-time setting follows analogously, simply with the stability condition replaced by the discrete-time analog, i.e. if $\kappa(U)|| E ||_{\mathrm{op}} < \min_i (1-|\lambda_i(CW)|)$. That is, a sufficient condition for stabilization is that
    \begin{equation*}
        \kappa(U) (\widehat{q} || W ||) < \min_i (1-|\lambda_i(CW)|)
        \iff
        \widehat{q} < \frac{\min_i (1-|\lambda_i(CW)|)}{\kappa(U) || W || }
    \end{equation*}
    Since we have the assumption that $\mathcal{P}_{\Theta,C}(C \in \mathcal{U}(\Theta)) \ge 1 - \alpha$, the result immediately follows.
\end{proof}

\section{Deterministic Discrete-Time Regret Analysis}\label{section:coverage_bound_control_disc_det}
\begin{theorem}\label{lemma:coverage_bound_control_disc_inf}[Deterministic, discrete-time]
    Let $J(K, C) := \sum_{t=0}^{\infty} (x_t^\top (Q + K^\top R K) x_t)$ with $w=0$. Assume that $\mathcal{P}_{\Theta,C}(C\in\mathcal{B}_{\widehat{q}}(f(\Theta))) \ge 1 - \alpha$. Then, under \Cref{assump:opt_technical_1},
    \begin{equation*}
        \mathcal{P}_{\Theta,C}\left(0\le \mathcal{R}(\Theta, C) \le 2L \widehat{q} + \Delta_{\mathrm{dom}}(\Theta,C)\right) \ge 1 - \alpha,
    \end{equation*}
    where $L$ is the Lipschitz constant of $J(K,\widehat{C})$ in $\widehat{C}\in\mathcal{B}_{\widehat{q}}(f(\theta))$ under the operator norm. Further, if $\widehat{q} < r(C,K^*(C))$, (see discrete-time in \Cref{def:margin_extended}) $\Delta_{\mathrm{dom}}(\Theta,C)=0$.
\end{theorem}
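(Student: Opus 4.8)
The plan is to obtain this theorem as a direct specialization of \Cref{lemma:coverage_bound}. That lemma already carries out the full argument --- decomposing $\mathcal{R}(\Theta,C)$ into a ``statistical robustness cost'' controlled by $2L\widehat q$ and a ``universal stabilization cost'' $\Delta_{\mathrm{dom}}$ that vanishes once $\widehat q < r(C,K^*(C))$ --- provided three hypotheses hold for the discrete-time objective $J(K,C) := \sum_{t=0}^\infty x_t^\top(Q + K^\top R K)x_t$ with $w=0$: (i) non-negativity of $J(K,\cdot)$; (ii) the anti-monotonicity $\Omega_1\subseteq\Omega_2 \implies \mathcal{K}(\Omega_2)\subseteq\mathcal{K}(\Omega_1)$ of the universal stabilizing map; and (iii) for each fixed $\theta$ and each $K\in\mathcal{K}(\mathcal{U}(\theta))$, $L$-Lipschitzness of $\widehat C\mapsto J(K,\widehat C)$ on $\mathcal{B}_{\widehat q}(f(\theta))$ under the operator norm. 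Items (i) and (ii) are immediate: $Q,R\succcurlyeq 0$ makes each summand $x_t^\top(Q+K^\top R K)x_t$ non-negative, and $\mathcal{K}(\mathcal{U}(\theta)) = \bigcap_{\widehat C\in\mathcal{U}(\theta)}\mathcal{K}(\widehat C)$ is an intersection, so enlarging the index set can only shrink it. Hence essentially all the work is in (iii): proving the Lipschitz bound and producing the explicit expression for $L$.

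For (iii) I would first rewrite the closed-loop trajectory as $x_t = (\widehat C W)^t x_0$, with $W=[I\;-K^\top]^\top$, so that $J(K,\widehat C) = x_0^\top P_K(\widehat C)\,x_0$ where $P_K(\widehat C) = \sum_{t\ge 0}\big((\widehat C W)^\top\big)^t(Q+K^\top R K)(\widehat C W)^t$ solves the discrete Lyapunov equation $\ell_P(P,\widehat C W,K)=0$. The key step is a perturbation estimate for $P_K$: writing $\widehat C' = \widehat C + \Delta$, the two closed-loop matrices differ by $\Delta W$ with $\|\Delta W\|_{\mathrm{op}}\le\|\Delta\|_{\mathrm{op}}\|W\|_{\mathrm{op}}$, and the telescoping identity $(\widehat C W)^t - (\widehat C' W)^t = \sum_{j=0}^{t-1}(\widehat C W)^j(\Delta W)(\widehat C' W)^{t-1-j}$, combined with a uniform geometric decay $\|(\widehat C W)^t\|_{\mathrm{op}}\le c\rho^t$ valid (for the fixed $K$) over all $\widehat C\in\mathcal{B}_{\widehat q}(f(\theta))$, lets me sum the resulting series. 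This yields $\|P_K(\widehat C)-P_K(\widehat C')\|_{\mathrm{op}}$ bounded by a constant (depending on $c$ and $(1-\rho)^{-1}$) times $\|Q+K^\top R K\|_{\mathrm{op}}\,\|W\|_{\mathrm{op}}\,\|\Delta\|_{\mathrm{op}}$, and therefore, multiplying by $\|x_0\|_2^2$ and using $\|x_0\|_2^2\le n\|x_0\|_\infty^2$, a Lipschitz constant $L$ of the same shape as the quantity $D(K)$ in \Cref{assump:opt_technical_1} times a geometric-decay factor. Feeding this $L$ and the discrete-time margin of \Cref{def:margin_extended} into \Cref{lemma:coverage_bound} then produces the stated bound, with $\Delta_{\mathrm{dom}}(\Theta,C)=0$ whenever $\widehat q < r(C,K^*(C))$.

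The main obstacle is the uniform geometric decay $\|(\widehat C W)^t\|_{\mathrm{op}}\le c\rho^t$ over the whole ball: the eigenvalue gap used via Bauer--Fike in \Cref{lemma:coverage_bound} controls the spectrum of $\widehat C W$ but not the transient amplification caused by its non-normality. Since (iii) is needed only for a \emph{fixed} $K\in\mathcal{K}(\mathcal{U}(\theta))$ --- for which, by definition, $\widehat C W$ is Schur-stable for every $\widehat C$ in the compact ball $\mathcal{B}_{\widehat q}(f(\theta))$ --- a clean resolution is to take $\rho$ strictly between $\sup_{\widehat C\in\mathcal{B}_{\widehat q}(f(\theta))}\rho_{\mathrm{spec}}(\widehat C W)<1$ and $1$, and then extract a single $c<\infty$ by a compactness/continuity argument, or equivalently by exhibiting a common quadratic certificate $P\succ 0$ with $(\widehat C W)^\top P(\widehat C W)\preccurlyeq\rho^2 P$ for all such $\widehat C$, which directly gives $\|(\widehat C W)^t\|_{\mathrm{op}}\le\sqrt{\kappa(P)}\,\rho^t$. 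A secondary, purely bookkeeping task is keeping the operator, $\ell_\infty$, and Frobenius norms consistent so that the final $L$ matches the $\sqrt n\,\|Q+K^\top R K\|_\infty\,\|x_0\|_\infty^2\,\|W\|_{\mathrm{op}}$ form of $D(K)$; the decomposition itself, the $2L\widehat q$ term, and the vanishing of the domain-gap term are all inherited verbatim from \Cref{lemma:coverage_bound}.
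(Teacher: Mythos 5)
Your proposal is correct and follows the paper's skeleton: reduce the theorem to \Cref{lemma:coverage_bound} by checking non-negativity, the anti-monotonicity of the universal-stabilizing map, and $L$-Lipschitzness of $\widehat C\mapsto J(K,\widehat C)$ on $\mathcal{B}_{\widehat q}(f(\theta))$, with all of the real work in the Lipschitz estimate. Where you diverge is in how that estimate is obtained. The paper writes $x_t=(\widehat CW)^t x_0$, differentiates the resulting series in $\widehat C$, and bounds $\|\nabla_{\widehat C}J\|_{\mathrm{op}}$ term by term, collecting the constant $D(K)=\sqrt n\,\|Q+K^\top RK\|_\infty\|x_0\|_\infty^2\|W\|_{\mathrm{op}}$ and summing $t\,\|(\widehat CW)^t\|_{\mathrm{op}}\|(\widehat CW)^{t-1}\|_{\mathrm{op}}$ via a quadratic Lyapunov certificate $(\widehat CW)^\top P(\widehat CW)\preccurlyeq\tau^2P$, yielding $L\le D(K)\kappa(P)^2\tau/(1-\tau^2)^2$. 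You instead bound the finite difference directly, via $J(K,\widehat C)=x_0^\top P_K(\widehat C)x_0$ and the telescoping identity $(\widehat CW)^t-(\widehat C'W)^t=\sum_j(\widehat CW)^j(\Delta W)(\widehat C'W)^{t-1-j}$ together with a uniform geometric decay $\|(\widehat CW)^t\|_{\mathrm{op}}\le c\rho^t$; summing gives a Lipschitz constant of the same shape. Both routes are valid and land on comparable constants; yours has the advantage of sidestepping the paper's somewhat delicate matrix-calculus gradient expression, and it is more explicit about a point the paper's proof leaves implicit, namely that the decay constants $(c,\rho)$ (respectively $\kappa(P),\tau$) must be taken uniformly over the compact ball $\mathcal{B}_{\widehat q}(f(\theta))$ before the final $\max_{\widehat C}$ can be pulled outside the series — your common-certificate or compactness argument supplies exactly that. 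The paper's gradient-bound route, for its part, transfers essentially verbatim to the continuous-time and stochastic variants in the appendix, which is presumably why the authors chose it.
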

\begin{proof}
We consider any fixed $\theta$ and demonstrate that $J(K, \widehat{C})$ is non-negative and $L$-Lipschitz in $\widehat{C}\in\mathcal{B}_{\widehat{q}}(f(\theta))$ under the operator norm for any $K\in\mathcal{K}(\mathcal{B}_{\widehat{q}}(f(\theta)))$, from which \Cref{lemma:coverage_bound} can be invoked to arrive at the desired conclusion. Given the assumed determinism of the dynamics, we have that $x_{t} = (\widehat{C}W)^{t}x_0$, meaning the above objective setup can equivalently be expressed as:
\begin{equation}
    \sum_{t=0}^{\infty} x_0^\top ((\widehat{C}W)^{t\top} (Q + K^\top R K) (\widehat{C}W)^{t})x_0, 
\end{equation} 
$J$ is clearly non-negative by construction. It, therefore, suffices to demonstrate this objective is Lipschitz continuous with an appropriate Lipschitz constant. Notice the Lipschitz constant can be obtained by bounding the magnitude of the gradient with respect to $\widehat{C}$, which we do as follows
\begin{align*}
    \nabla_{\widehat{C}} \left(\sum_{t=0}^{\infty} x_0^\top ((\widehat{C}W)^{t\top} (Q + K^\top R K) (\widehat{C}W)^{t})x_0 \right)
    &= \sum_{t=0}^{\infty} t \mathrm{diag}((Q+K^\top  R K) (\widehat{C} W)^{t} x_{0}) (\widehat{C} W)^{(t-1)} \mathrm{diag}(x_{0}) W^\top \\ 
    &+ \sum_{t=0}^{\infty} t \mathrm{diag}((Q^\top +(R K)^\top  K) (\widehat{C} W)^{t} x_{0}) (\widehat{C} W)^{(t-1)} \mathrm{diag}(x_{0}) W^\top 
\end{align*}
We now bound the magnitude of this quantity as follows:
\begin{align*}
    L 
    \le \max_{\widehat{C}\in\mathcal{B}_{\widehat{q}}(f(\theta))} || \sum_{t=0}^{\infty} t \mathrm{diag}((Q+K^\top  R K) (\widehat{C} W)^{t} x_{0}) (\widehat{C} W)^{(t-1)} \mathrm{diag}(x_{0}) W^\top \\ 
    + \sum_{t=0}^{\infty} t \mathrm{diag}((Q^\top +(R K)^\top  K) (\widehat{C} W)^{t} x_{0}) (\widehat{C} W)^{(t-1)} \mathrm{diag}(x_{0}) W^\top ||_{\mathrm{op}} \\
    \le \max_{\widehat{C}\in\mathcal{B}_{\widehat{q}}(f(\theta))} \sum_{t=0}^{\infty} t \left|\left| \mathrm{diag}((Q+K^\top  R K) (\widehat{C} W)^{t} x_{0}) (\widehat{C} W)^{(t-1)} \mathrm{diag}(x_{0}) W^\top \right|\right|_{\mathrm{op}} \\ 
    + \sum_{t=0}^{\infty} t \left|\left| \mathrm{diag}((Q^\top +(R K)^\top  K) (\widehat{C} W)^{t} x_{0}) (\widehat{C} W)^{(t-1)} \mathrm{diag}(x_{0}) W^\top \right|\right|_{\mathrm{op}},
\end{align*}
where we have used $\mathrm{diag}(x_{0})$ for a vector $x_{0}$ to denote a diagonal matrix with $x_{0}$ placed along its main diagonal. We now bound each of these two terms separately, although the structure of the two is the same, so we explicitly show steps for bounding the first, from which the same can be repeated on the second. Importantly, we make use of the fact $||\text{diag}(x_{0})||_{\mathrm{op}} = ||x_{0}||_{\infty}$ and $|| A ||_{\infty} \le \sqrt{n} || A ||_{\mathrm{op}}$ for $A\in\mathbb{R}^{n\times n}$ as follows:
\begin{align*}
    \max_{\widehat{C}\in\mathcal{B}_{\widehat{q}}(f(\theta))} &\sum_{t=0}^{\infty} t || \mathrm{diag}((Q+K^\top  R K) (\widehat{C} W)^{t} x_{0}) (\widehat{C} W)^{(t-1)} \mathrm{diag}(x_{0}) W^\top ||_{\mathrm{op}} \\
    &\le \max_{\widehat{C}\in\mathcal{B}_{\widehat{q}}(f(\theta))} \sum_{t=0}^{\infty} t || \mathrm{diag}((Q+K^\top  R K) (\widehat{C} W)^{t} x_{0}) ||_{\mathrm{op}} || (\widehat{C} W)^{(t-1)} ||_{\mathrm{op}} || \mathrm{diag}(x_{0})||_{\mathrm{op}} || W^\top ||_{\mathrm{op}} \\
    &= \max_{\widehat{C}\in\mathcal{B}_{\widehat{q}}(f(\theta))} \sum_{t=0}^{\infty} t || (Q+K^\top  R K) (\widehat{C} W)^{t} x_{0}||_{\infty} ||x_{0}||_{\infty} || (\widehat{C} W)^{(t-1)} ||_{\mathrm{op}} || W ||_{\mathrm{op}} \\
    &\le \max_{\widehat{C}\in\mathcal{B}_{\widehat{q}}(f(\theta))} \sum_{t=0}^{\infty} t || Q+K^\top  R K||_{\infty} || (\widehat{C} W)^{t}||_{\infty} || x_{0}||_{\infty} ||x_{0}||_{\infty} || (\widehat{C} W)^{(t-1)} ||_{\mathrm{op}} || W ||_{\mathrm{op}} \\
    &\le \max_{\widehat{C}\in\mathcal{B}_{\widehat{q}}(f(\theta))} \sum_{t=0}^{\infty} t (\sqrt{n} || Q+K^\top  R K||_{\infty} || x_{0}||^{2}_{\infty} || W ||_{\mathrm{op}}) || (\widehat{C} W)^{t}||_{\mathrm{op}} || (\widehat{C} W)^{(t-1)} ||_{\mathrm{op}}.
\end{align*}
We now collect all terms independent of $t$ into $D(K) = \sqrt{n} || Q+K^\top  R K||_{\infty} || x_{0}||_{\infty}^{2} || W ||_{\mathrm{op}}$:
\begin{align*}
    &\le D(K) \max_{\widehat{C}\in\mathcal{B}_{\widehat{q}}(f(\theta))} \sum_{t=0}^{\infty} t || (\widehat{C} W)^{t}||_{\mathrm{op}} || (\widehat{C} W)^{(t-1)} ||_{\mathrm{op}}.
\end{align*}
Critically, we can now demonstrate that this sum is bounded by virtue of $K$ being a universal stabilizer of the dynamics set $\mathcal{B}_{\widehat{q}}(f(\theta))$. By this stabilization, we know that $\widehat{C} W$ is Schur stable, i.e. $\min_i (1-|\lambda_i(\widehat{C}W)|) > 0$ or $\max_i |\lambda_i(\widehat{C}W)| < 1$. Thus, there exists a $P\succ 0$ such that for some $\tau\in(0,1)$, we have
\begin{align*}
    (\widehat{C}W)^\top P (\widehat{C}W) \preceq \tau^2 P. 
\end{align*}
We now denote the norm induced by such a $P$ as $|| \cdot ||_P$. Then, $|| \widehat{C}W ||_{P}\le \tau$, meaning $|| (\widehat{C}W)^{t} ||\le \tau^t$. By the norm equivalence between the induced matrix norm and the standard operator norm, we have that
\begin{align*}
    || (\widehat{C}W)^{t} ||_{\mathrm{op}}\le\kappa(P) || (\widehat{C}W)^{t} ||_P\le\kappa(P) \tau^t,
\end{align*}
where $\kappa(P) := \sqrt{\lambda_{\max}(P)/\lambda_{\min}(P)}$ is the condition number of $P$. We now see that the previous sum converges:
\begin{align*}
     D(K) \max_{\widehat{C}\in\mathcal{B}_{\widehat{q}}(f(\theta))} \sum_{t=0}^{\infty} t || (\widehat{C} W)^{t}||_{\mathrm{op}} || (\widehat{C} W)^{(t-1)} ||_{\mathrm{op}}
     &\le  D(K) \kappa(P)^{2} \max_{\widehat{C}\in\mathcal{B}_{\widehat{q}}(f(\theta))} \sum_{t=0}^{\infty} t \tau^{2t-1}\\
     &\le  D(K) \kappa(P)^{2} \frac{\tau}{(1-\tau^2)^2},
\end{align*}
completing the proof.
\end{proof}

The finite LQR case follows immediately as a corollary of the above, stated below for completeness.

\begin{corollary}\label{lemma:coverage_bound_control_disc_finite}[Deterministic, discrete-time, finite-horizon]
    Let $J(K, C) := \sum_{t=0}^{T} (x_t^\top (Q + K^\top R K) x_t)$ with $w=0$. Assume that $\mathcal{P}_{\Theta,C}(C\in\mathcal{B}_{\widehat{q}}(f(\Theta))) \ge 1 - \alpha$. Then:
    \begin{equation*}
        \mathcal{P}_{\Theta,C}\left(0\le \mathcal{R}(\Theta, C) \le 2L \widehat{q} + \Delta_{\mathrm{dom}}(\Theta,C)\right) \ge 1 - \alpha,
    \end{equation*}
    where $L$ is the Lipschitz constant of $J(K,\widehat{C})$ in $\widehat{C}\in\mathcal{B}_{\widehat{q}}(f(\theta))$ under the operator norm. Further, if $\widehat{q} < r(C,K^*(C))$, (see discrete-time in \Cref{def:margin_extended}) $\Delta_{\mathrm{dom}}(\Theta,C)=0$.
\end{corollary}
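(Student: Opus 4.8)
The plan is to verify that the finite-horizon objective $J(K,\widehat C)=\sum_{t=0}^{T}\big(x_t^\top(Q+K^\top R K)x_t\big)$ satisfies the two hypotheses of \Cref{lemma:coverage_bound} — non-negativity and $L$-Lipschitzness in $\widehat C\in\mathcal B_{\widehat q}(f(\theta))$ under the operator norm for every $K\in\mathcal K(\mathcal B_{\widehat q}(f(\theta)))$ — and then invoke that lemma directly. The margin claim will then follow without modification, because the perturbation argument establishing $\Delta_{\mathrm{dom}}=0$ in \Cref{lemma:coverage_bound} is purely a statement about closed-loop eigenvalues (via Bauer--Fike) and is insensitive to the horizon length.

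Non-negativity is immediate since $Q\succcurlyeq 0$ and $R\succcurlyeq 0$ give $Q+K^\top R K\succcurlyeq 0$. For Lipschitzness, I would substitute $x_t=(\widehat C W)^t x_0$ exactly as in the proof of \Cref{lemma:coverage_bound_control_disc_inf}, so that $J(K,\widehat C)=\sum_{t=0}^{T} x_0^\top\big((\widehat C W)^{t\top}(Q+K^\top R K)(\widehat C W)^{t}\big)x_0$ is a polynomial of degree $2T$ in the entries of $\widehat C$. A polynomial restricted to the compact ball $\mathcal B_{\widehat q}(f(\theta))$ is automatically Lipschitz, so a finite $L$ exists with no further hypotheses; this is precisely why the corollary, unlike \Cref{lemma:coverage_bound_control_disc_inf}, need not invoke \Cref{assump:opt_technical_1}. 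One can even be explicit: $\nabla_{\widehat C}J$ here is the $t\le T$ truncation of the gradient series computed in the proof of \Cref{lemma:coverage_bound_control_disc_inf}, so the identical chain of operator-norm estimates applies term by term and the infinite-horizon bound $L\le D(K)\,\kappa(P)^2\,\tau/(1-\tau^2)^2$ (for $K$ a universal stabilizer) remains a valid, if loose, upper bound.

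With non-negativity and $L$-Lipschitzness in hand, \Cref{lemma:coverage_bound} applies verbatim: $\mathcal K$ is order-reversing and $\mathcal P_{\Theta,C}(C\in\mathcal B_{\widehat q}(f(\Theta)))\ge 1-\alpha$ is assumed, which yields $\mathcal P_{\Theta,C}(0\le\mathcal R(\Theta,C)\le 2L\widehat q+\Delta_{\mathrm{dom}}(\Theta,C))\ge 1-\alpha$, and the Bauer--Fike step of that lemma forces $\Delta_{\mathrm{dom}}(\Theta,C)=0$ whenever $\widehat q<r(C,K^*(C))$ in the discrete-time sense of \Cref{def:margin_extended}.

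I do not expect a genuine obstacle: the statement is a corollary precisely because truncating the sum only simplifies the Lipschitz analysis. The single point deserving a line of care is that, in finite horizon, $J(K,\widehat C)$ is finite for \emph{every} $K$, so the restriction to $\mathcal K(\mathcal B_{\widehat q}(f(\theta)))$ is not needed for well-definedness; it is kept only so that \Cref{lemma:coverage_bound} can be applied as stated, and it is harmless.
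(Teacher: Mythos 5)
Your proposal is correct and matches the paper's treatment: the paper offers no separate argument, simply noting that the finite-horizon case follows immediately from the infinite-horizon theorem by the same verification of non-negativity and Lipschitzness followed by an appeal to \Cref{lemma:coverage_bound}, which is exactly what you spell out (including the correct observation that the truncated sum is Lipschitz on the compact ball without needing \Cref{assump:opt_technical_1}, and that the Bauer--Fike margin argument is horizon-independent).
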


\section{Deterministic Continuous-Time Regret Analysis}\label{section:coverage_bound_control_cont_det}
The proof follows in much the same manner as the discrete-time case, with modest adjustments to the exact specification of the assumptions regarding the dynamics.

\begin{theorem}\label{lemma:coverage_bound_control_cts_inf}[Deterministic, continuous-time]
    Let $J(K, C) := \int_{0}^{\infty} (x(t)^\top (Q + K^\top R K) x(t)) dt$ for $w=0$. Assume that $\mathcal{P}_{\Theta,C}(C\in\mathcal{B}_{\widehat{q}}(f(\Theta))) \ge 1 - \alpha$. Then, under \Cref{assump:opt_technical_1},
    \begin{equation*}
        \mathcal{P}_{\Theta,C}\left(0\le \mathcal{R}(\Theta, C) \le 2L \widehat{q} + \Delta_{\mathrm{dom}}(\Theta,C)\right) \ge 1 - \alpha,
    \end{equation*}
    where $L$ is the Lipschitz constant of $J(K,\widehat{C})$ in $\widehat{C}\in\mathcal{B}_{\widehat{q}}(f(\theta))$ under the operator norm. Further, if $\widehat{q} < r(C,K^*(C))$, (see continuous-time in \Cref{def:margin_extended}) $\Delta_{\mathrm{dom}}(\Theta,C)=0$.
\end{theorem}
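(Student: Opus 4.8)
The plan is to verify the hypotheses of \Cref{lemma:coverage_bound} for $J(K,\widehat{C})$ in the deterministic continuous-time setting and then invoke it directly, exactly as in the discrete-time proof of \Cref{lemma:coverage_bound_control_disc_inf}. Fix $\theta$. Since $w=0$, the closed-loop trajectory is $x(t) = e^{(\widehat{C}W)t} x_0$, so
\[
J(K,\widehat{C}) = \int_0^\infty x_0^\top e^{(\widehat{C}W)^\top t}(Q + K^\top R K)\, e^{(\widehat{C}W)t}\, x_0 \, dt,
\]
which is manifestly non-negative because $Q + K^\top R K \succeq 0$. It therefore remains only to produce a finite Lipschitz constant $L$ valid for every $K \in \mathcal{K}(\mathcal{B}_{\widehat{q}}(f(\theta)))$ and every $\widehat{C}$ in the ball, which — as in the discrete case — I would obtain by bounding $\sup_{\widehat{C} \in \mathcal{B}_{\widehat{q}}(f(\theta))} \|\nabla_{\widehat{C}} J(K,\widehat{C})\|_{\mathrm{op}}$.

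Second, I would compute the Fréchet derivative of $J$ with respect to $\widehat{C}$. The only non-elementary ingredient is differentiating the matrix exponential: for a perturbation direction $E$, $\tfrac{d}{d\epsilon} e^{(M+\epsilon E)t}\big|_{\epsilon=0} = \int_0^t e^{M(t-s)} E\, e^{Ms}\,ds$, applied with $M = \widehat{C}W$ and $E$ the perturbation of $\widehat{C}$ post-multiplied by $W$. Substituting this into the integrand and using the product rule produces, after collecting terms, an expression of the form $\int_0^\infty \!\!\int_0^t (\cdots)\, ds\, dt$ whose integrand is a product of $e^{(\widehat{C}W)\,\cdot}$ factors together with the fixed matrices $Q + K^\top R K$, $x_0 x_0^\top$, and $W^\top$ — structurally the continuous-time analogue of the finite sum appearing in the discrete-time proof.

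Third, I would exploit universal stabilization to obtain uniform exponential decay. Because $K \in \mathcal{K}(\mathcal{B}_{\widehat{q}}(f(\theta)))$, each $\widehat{C}W$ is Hurwitz; since $\mathcal{B}_{\widehat{q}}(f(\theta))$ is compact and $\widehat{C} \mapsto \max_i \mathrm{Re}(\lambda_i(\widehat{C}W))$ is continuous, there is a uniform constant $\alpha := -\sup_{\widehat{C}} \max_i \mathrm{Re}(\lambda_i(\widehat{C}W)) > 0$. Choosing a Lyapunov matrix $P \succ 0$ with $(\widehat{C}W)^\top P + P(\widehat{C}W) \preceq -2\alpha P$ gives $\|e^{(\widehat{C}W)t}\|_P \le e^{-\alpha t}$ and hence, by norm equivalence, $\|e^{(\widehat{C}W)t}\|_{\mathrm{op}} \le \kappa(P)\, e^{-\alpha t}$, the continuous-time replacement for the $\kappa(P)\tau^t$ bound. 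Feeding this into the double integral, peeling off the fixed factors into $D(K) = \sqrt{n}\,\|Q + K^\top R K\|_\infty \|x_0\|_\infty^2 \|W\|_{\mathrm{op}}$ (finite by \Cref{assump:opt_technical_1}), the remaining integral is controlled by $\kappa(P)^2 \int_0^\infty \!\!\int_0^t e^{-\alpha(t-s)} e^{-\alpha s}\,ds\,dt = \kappa(P)^2 \int_0^\infty t\, e^{-\alpha t}\,dt = \kappa(P)^2/\alpha^2$, which converges and yields a finite $L$. With $J$ shown to be non-negative and $L$-Lipschitz in $\widehat{C}$ under the operator norm for any universally stabilizing $K$, and with the monotonicity property $\Omega_1 \subset \Omega_2 \Rightarrow \mathcal{K}(\Omega_2) \subset \mathcal{K}(\Omega_1)$ holding for the universal stabilizing set, \Cref{lemma:coverage_bound} applies verbatim and gives the stated bound, with the vanishing-$\Delta_{\mathrm{dom}}$ condition governed by the continuous-time branch of \Cref{def:margin_extended}.

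The main obstacle is the differentiation step: unlike the discrete-time case, where $(\widehat{C}W)^t$ differentiates to a clean finite sum, here one must correctly handle the Fréchet derivative of $e^{(\widehat{C}W)t}$, track the inner integration variable $s$, and justify differentiation under the improper integral sign — the latter following from the uniform exponential decay established above but needing care. Everything after that is a routine norm computation mirroring the discrete-time argument.
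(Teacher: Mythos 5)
Your proposal is correct and follows essentially the same route as the paper's proof: show $J(K,\widehat{C})$ is non-negative and Lipschitz in $\widehat{C}$ by bounding its derivative, use the uniform exponential decay $\|e^{\widehat{C}Wt}\|_{\mathrm{op}}\le \beta e^{-\alpha t}$ afforded by universal stabilization to make the resulting integral finite, and then invoke \Cref{lemma:coverage_bound}. Your explicit Fréchet-derivative treatment of $e^{(\widehat{C}W)t}$ via $\int_0^t e^{M(t-s)}E\,e^{Ms}\,ds$ is in fact a more careful rendering of the differentiation step than the paper's formal gradient expression, and the remaining norm bookkeeping matches the paper's argument.
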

\begin{proof}
We consider any fixed $\theta$ and demonstrate the desired properties that $J(K, C)$ is non-negative and $L$-Lipschitz in $\widehat{C}\in\mathcal{B}_{\widehat{q}}(f(\theta))$ under the operator norm for any $K\in\mathcal{K}(\mathcal{B}_{\widehat{q}}(f(\theta)))$, from which \Cref{lemma:coverage_bound} can be invoked to arrive at the desired conclusion. Given the assumed determinism of the dynamics, we further have that $x(t) = e^{\widehat{C}Wt}x(0)$, meaning the above objective setup can equivalently be expressed with the uncertainty sets related to the objective function, namely as:
\begin{equation}
    \int_{0}^{\infty} x(0)^\top (e^{\widehat{C}Wt})^{\top} (Q + K^\top R K) (e^{\widehat{C}Wt})x(0) dt.
\end{equation}
$J$ is clearly non-negative by construction. It, therefore, suffices to demonstrate this objective is Lipschitz continuous with an appropriate Lipschitz constant. We again proceed by bounding the norm of the gradient as follows:
\begin{align*}
    \nabla_{\widehat{C}} \left(\int_{0}^{\infty} x(0)^\top (e^{\widehat{C}Wt})^{\top} (Q + K^\top R K) (e^{\widehat{C}Wt})x(0) dt \right)
    &= \int_{0}^{\infty} t \mathrm{diag}((Q+K^\top  R K) e^{\widehat{C} W t} x(0)) e^{\widehat{C} W t} \mathrm{diag}(x(0)) W^\top dt \\
    &+ \int_{0}^{\infty} t \mathrm{diag}((Q^\top +(R K)^\top  K) e^{\widehat{C} W t} x(0)) e^{\widehat{C} W t} \mathrm{diag}(x(0)) W^\top dt
\end{align*}
We again bound each of these two terms separately, as follows: 
\begin{align*}
    \max_{\widehat{C}\in\mathcal{B}_{\widehat{q}}(f(\theta))} &\left|\left| \int_{0}^{\infty} t \mathrm{diag}((Q+K^\top  R K) e^{\widehat{C} W t} x(0)) e^{\widehat{C} W t} \mathrm{diag}(x(0)) W^\top dt \right|\right|_{\mathrm{op}} \\
    &\le \max_{\widehat{C}\in\mathcal{B}_{\widehat{q}}(f(\theta))} \int_{0}^{\infty} t \left|\left| \mathrm{diag}((Q+K^\top  R K) e^{\widehat{C} W t} x(0)) \right|\right|_{\mathrm{op}} \left|\left| e^{\widehat{C} W t}\right|\right|_{\mathrm{op}} \left|\left| \mathrm{diag}(x(0))\right|\right|_{\mathrm{op}} \left|\left| W^\top\right|\right|_{\mathrm{op}} dt \\
    &= \max_{\widehat{C}\in\mathcal{B}_{\widehat{q}}(f(\theta))} \int_{0}^{\infty} t \left|\left| (Q+K^\top  R K) e^{\widehat{C} W t} x(0) \right|\right|_{\infty} \left|\left| e^{\widehat{C} W t}\right|\right|_{\mathrm{op}} \left|\left| x(0)\right|\right|_{\infty} \left|\left| W\right|\right|_{\mathrm{op}} dt \\
   &\le \max_{\widehat{C}\in\mathcal{B}_{\widehat{q}}(f(\theta))} \int_{0}^{\infty} t (\sqrt{n} \left|\left| Q+K^\top  R K \right|\right|_{\infty} \left|\left| W\right|\right|_{\mathrm{op}} \left|\left| x(0) \right|\right|^{2}_{\infty}) \left|\left| e^{\widehat{C} W t} \right|\right|^{2}_{\mathrm{op}} dt.
\end{align*}
Collecting all terms independent of $t$ into a constant $D(K) = \sqrt{n} || Q+K^\top  R K ||_{\infty} || W||_{\mathrm{op}} || x(0) ||^{2}_{\infty}$ and using the bound $||e^{\widehat{C} W t}||\le \beta(\widehat{C}) e^{-\alpha(\widehat{C}) t}$, we reach the conclusion as:
\begin{align*}
    = \max_{\widehat{C}\in\mathcal{B}_{\widehat{q}}(f(\theta))} D(K) \int_{0}^{\infty} t \left|\left| e^{\widehat{C} W t} \right|\right|^{2}_{\mathrm{op}} dt
    \le D(K) \beta(\widehat{C})^{2} \int_{0}^{\infty} t e^{-2\alpha(\widehat{C}) t} dt
    = \max_{\widehat{C}\in\mathcal{B}_{\widehat{q}}(f(\theta))} \frac{D(K)\beta(\widehat{C})^{2}}{4 \alpha(\widehat{C})^2}, 
\end{align*}
as desired.
\end{proof}
Once again, the proof in the finite time horizon case follows equivalently.

\begin{corollary}\label{lemma:coverage_bound_control_cts_finite}[Deterministic, continuous-time, finite-horizon]
    Let $J(K, C) := \int_{0}^{T} (x(t)^\top (Q + K^\top R K) x(t)) dt$ for $w= 0$. Assume that $\mathcal{P}_{\Theta,C}(C\in\mathcal{B}_{\widehat{q}}(f(\Theta))) \ge 1 - \alpha$. Then, under \Cref{assump:opt_technical_1},
    \begin{equation*}
        \mathcal{P}_{\Theta,C}\left(0\le \mathcal{R}(\Theta, C) \le 2L \widehat{q} + \Delta_{\mathrm{dom}}(\Theta,C)\right) \ge 1 - \alpha,
    \end{equation*}
    where $L$ is the Lipschitz constant of $J(K,\widehat{C})$ in $\widehat{C}\in\mathcal{B}_{\widehat{q}}(f(\theta))$ under the operator norm. Further, if $\widehat{q} < r(C,K^*(C))$, (see continuous-time in \Cref{def:margin_extended}) $\Delta_{\mathrm{dom}}(\Theta,C)=0$.
\end{corollary}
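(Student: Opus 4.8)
The plan is to reduce the finite-horizon claim to \Cref{lemma:coverage_bound} in exactly the same way as the infinite-horizon continuous-time result (\Cref{lemma:coverage_bound_control_cts_inf}), the only change being that the time integral is truncated at $T$. Fix $\theta$. I would verify the two structural hypotheses of \Cref{lemma:coverage_bound} for the map $\widehat{C}\mapsto J(K,\widehat{C})$ on $\mathcal{B}_{\widehat{q}}(f(\theta))$: non-negativity, which is immediate since the integrand $x(t)^\top(Q+K^\top R K)x(t)$ is a non-negative quadratic form at each $t$; and $L$-Lipschitz continuity in $\widehat{C}$ under the operator norm for every $K\in\mathcal{K}(\mathcal{B}_{\widehat{q}}(f(\theta)))$. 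The monotonicity requirement $\Omega_1\subset\Omega_2\implies\mathcal{K}(\Omega_2)\subset\mathcal{K}(\Omega_1)$ holds trivially here because $\mathcal{K}(\mathcal{U})=\bigcap_{\widehat{C}\in\mathcal{U}}\mathcal{K}(\widehat{C})$.

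For the Lipschitz bound, substitute $x(t)=e^{\widehat{C}Wt}x(0)$ to write $J(K,\widehat{C})=\int_0^T x(0)^\top(e^{\widehat{C}Wt})^\top(Q+K^\top R K)(e^{\widehat{C}Wt})x(0)\,dt$ and differentiate in $\widehat{C}$, obtaining the same two-term expression as in the proof of \Cref{lemma:coverage_bound_control_cts_inf} but with $\int_0^\infty$ replaced by $\int_0^T$. Pushing the operator norm inside the integral and using $\|\mathrm{diag}(v)\|_{\mathrm{op}}=\|v\|_\infty$, $\|A\|_\infty\le\sqrt{n}\,\|A\|_{\mathrm{op}}$, and submultiplicativity reduces the bound to $D(K)\int_0^T t\,\|e^{\widehat{C}Wt}\|_{\mathrm{op}}^2\,dt$ with $D(K)=\sqrt{n}\,\|Q+K^\top R K\|_\infty\|W\|_{\mathrm{op}}\|x(0)\|_\infty^2$, which is finite by \Cref{assump:opt_technical_1}. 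Since the integrand is non-negative, $\int_0^T t\,\|e^{\widehat{C}Wt}\|_{\mathrm{op}}^2\,dt\le\int_0^\infty t\,\|e^{\widehat{C}Wt}\|_{\mathrm{op}}^2\,dt$, and the latter was already shown finite (bounded by $\beta(\widehat{C})^2/(4\alpha(\widehat{C})^2)$) using that $K$ universally stabilizes $\mathcal{B}_{\widehat{q}}(f(\theta))$; hence the same finite, $\theta$-measurable $L$ carries over verbatim. (Alternatively one could note the integral is automatically finite on the compact interval $[0,T]$ by continuity, but dominating by the infinite-horizon bound lets us reuse the earlier constant without reworking the estimate.)

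With non-negativity and Lipschitzness in hand and the hypothesis $\mathcal{P}_{\Theta,C}(C\in\mathcal{B}_{\widehat{q}}(f(\Theta)))\ge1-\alpha$, \Cref{lemma:coverage_bound} yields $\mathcal{P}_{\Theta,C}(0\le\mathcal{R}(\Theta,C)\le 2L\widehat{q}+\Delta_{\mathrm{dom}}(\Theta,C))\ge1-\alpha$; and its second conclusion — that $\Delta_{\mathrm{dom}}(\Theta,C)=0$ once $\widehat{q}<r(C,K^*(C))$ for the continuous-time margin of \Cref{def:margin_extended} — transfers unchanged, since that part of \Cref{lemma:coverage_bound} is a pure spectral-perturbation argument (Bauer--Fike applied to the closed-loop matrix $\widehat{C}W$) that never references the horizon. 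I do not expect a real obstacle here; the only point needing a little care is ensuring the Lipschitz constant is finite, and the term-by-term domination by the infinite-horizon integral (together with \Cref{assump:opt_technical_1}) settles that cleanly.
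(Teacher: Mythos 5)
Your proposal is correct and matches the paper's approach: the paper treats this corollary as following immediately from the infinite-horizon continuous-time proof, since truncating the integral at $T$ only shrinks the (non-negative) gradient bound, so the same Lipschitz constant and invocation of \Cref{lemma:coverage_bound} (including the horizon-independent Bauer--Fike margin argument) carry over verbatim. Your extra observation that the finite-horizon integral is also finite by compactness of $[0,T]$ is a harmless alternative justification of the same point.
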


\section{Stochastic Discrete-Time Regret Analysis}\label{section:coverage_bound_control_disc_stoch}
We introduce below the discrete-time analog of the continuous-time decay assumption made in \Cref{assump:opt_technical_2} below.

\begin{assumption}\label{assump:opt_technical_3}
    For any $\theta$, $\exists$ constants $\alpha_{1},\beta_{1} >0$ such that for all $\widehat{C}\in\mathcal{B}_{\widehat{q}}(f(\theta))$, $K\in\mathcal{K}(\widehat{C})$, and $t\ge 0$, $||Q(t) + K^\top R(t) K|| \le \beta_{1} e^{-\alpha_{1} t}$ and $\min_{\widehat{C}\in\mathcal{B}_{\widehat{q}}(f(\theta))} (2 \alpha_{2}(\widehat{C}) + \alpha_{1}) > 0$ where $\alpha_{2}(\widehat{C}) := \inf_{K\in\mathcal{K}(\widehat{C})}(-\log \rho(\widehat{C}W)) >0$.
\end{assumption}

\begin{theorem}\label{lemma:coverage_bound_control_disc_inf_stoch}[Stochastic, discrete-time]
    Let $J(K, C) := \sum_{t=0}^{\infty} (x_t^\top (Q + K^\top R K) x_t)$ with $w_{t}\sim\mathcal{N}(0, \Sigma)$ i.i.d. across $t$ such that $D_{2}(K) := || \Sigma ||_{\mathrm{op}} || W ||_{\mathrm{op}} < \infty$. Assume further that $\mathcal{P}_{\Theta,C}(C\in\mathcal{B}_{\widehat{q}}(f(\Theta))) \ge 1 - \alpha$. Then, under \Cref{assump:opt_technical_1} and \Cref{assump:opt_technical_3},
    \begin{equation*}
        \mathcal{P}_{\Theta,C}\left(0\le \mathcal{R}(\Theta, C) \le 2L \widehat{q} + \Delta_{\mathrm{dom}}(\Theta,C)\right) \ge 1 - \alpha,
    \end{equation*}
    where $L$ is the Lipschitz constant of $J(K,\widehat{C})$ in $\widehat{C}\in\mathcal{B}_{\widehat{q}}(f(\theta))$ under the operator norm. Further, if $\widehat{q} < r(C,K^*(C))$, (see discrete-time in \Cref{def:margin_extended}) $\Delta_{\mathrm{dom}}(\Theta,C)=0$.
\end{theorem}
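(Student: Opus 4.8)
The plan is to reduce, exactly as in \Cref{lemma:coverage_bound_control_disc_inf} and \Cref{lemma:coverage_bound_control_cts_inf}, to verifying the hypotheses of \Cref{lemma:coverage_bound}: fixing $\theta$, I would show that $J(K,\widehat{C})$ is non-negative and $L$-Lipschitz in $\widehat{C}\in\mathcal{B}_{\widehat{q}}(f(\theta))$ under the operator norm, with a single constant $L$ valid for every universal stabilizer $K\in\mathcal{K}(\mathcal{B}_{\widehat{q}}(f(\theta)))$; the map $\mathcal{K}(\cdot)$ is antitone (a larger dynamics set yields a smaller intersection of stabilizing sets), so \Cref{lemma:coverage_bound} then delivers both the probabilistic regret bound and the vanishing of $\Delta_{\mathrm{dom}}$ when $\widehat{q}<r(C,K^*(C))$ via the Bauer--Fike perturbation argument already carried out there. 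Non-negativity is immediate, since $J$ is an expectation of a quadratic form in the positive semidefinite matrices $Q(t)+K^\top R(t)K$, so all the work is in the Lipschitz bound and, en route, in establishing finiteness of $J$ in the presence of the additive noise --- the one genuinely new feature relative to the deterministic discrete-time theorem.

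First I would put the objective in closed form. Writing $M:=\widehat{C}W$ and $\Sigma_t:=\mathbb{E}[x_tx_t^\top]$, the recursion $x_{t+1}=Mx_t+w_t$ gives $\Sigma_{t+1}=M\Sigma_tM^\top+\Sigma$, hence $\Sigma_t=M^t\Sigma_0(M^\top)^t+\sum_{s=0}^{t-1}M^s\Sigma(M^\top)^s$ with $\Sigma_0:=\mathbb{E}[x_0x_0^\top]$, and $J(K,\widehat{C})=\sum_{t=0}^{\infty}\mathrm{tr}\!\big((Q(t)+K^\top R(t)K)\,\Sigma_t\big)$. Because $K$ universally stabilizes $\mathcal{B}_{\widehat{q}}(f(\theta))$, $M$ is Schur for every $\widehat{C}$ in the ball, so there is a Lyapunov solution $P\succ0$ with $M^\top PM\preceq\tau^2P$ for some $\tau\in(0,1)$, whence $\|M^t\|_{\mathrm{op}}\le\kappa(P)\tau^t$; compactness of $\mathcal{B}_{\widehat{q}}(f(\theta))$ together with the uniform spectral gap $\min_{\widehat{C}}(2\alpha_2(\widehat{C})+\alpha_1)>0$ from \Cref{assump:opt_technical_3} lets me take $\kappa(P)$ and $\tau$ uniform over $\widehat{C}$. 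Then $\mathrm{tr}(\Sigma_t)\le\kappa(P)^2\tau^{2t}\mathrm{tr}(\Sigma_0)+\kappa(P)^2\|\Sigma\|_{\mathrm{op}}\,n\sum_{s\ge0}\tau^{2s}$ is bounded by a constant independent of $t$ (the noise part saturates to a finite steady state), and combining with $\|Q(t)+K^\top R(t)K\|\le\beta_1e^{-\alpha_1t}$ shows $J<\infty$.

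For the Lipschitz constant I would bound $\|\nabla_{\widehat{C}}J(K,\widehat{C})\|_{\mathrm{op}}$ uniformly over the ball, mirroring the deterministic computations but now carrying the extra noise sum. The dependence on $\widehat{C}$ enters only through $M=\widehat{C}W$, so each differentiation of a factor $M^t$ produces a $W^\top$ and a factor $t\,M^{t-1}$, while differentiating $\sum_{s=0}^{t-1}M^s\Sigma(M^\top)^s$ produces sums of the form $\sum_{s}s\,\|M^{s-1}\|_{\mathrm{op}}^2\|\Sigma\|_{\mathrm{op}}\|W\|_{\mathrm{op}}$. Using $\|M^t\|_{\mathrm{op}}\le\kappa(P)\tau^t$ these collapse to convergent geometric series ($\sum_t t\tau^{t}$ and $\sum_s s\tau^{2s}$, both finite since $\tau<1$), the $s$-sums being bounded independently of $t$; multiplying by $\|Q(t)+K^\top R(t)K\|\le\beta_1e^{-\alpha_1t}$ and summing over $t$ then gives a finite bound, where the compatibility condition $2\alpha_2(\widehat{C})+\alpha_1>0$ (uniform over $\widehat{C}$) is what guarantees the $t$-weighted combined decay is summable. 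Collecting the $t$-independent factors into $D(K)=\sqrt{n}\,\|Q+K^\top RK\|_\infty\|x_0\|_\infty^2\|W\|_{\mathrm{op}}$ (as in \Cref{assump:opt_technical_1}) and $D_2(K)=\|\Sigma\|_{\mathrm{op}}\|W\|_{\mathrm{op}}$ yields a finite $L$, after which \Cref{lemma:coverage_bound} closes the argument.

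The main obstacle is exactly the uniform control of these $t$-weighted series over the whole conformal ball: unlike the deterministic case, $\Sigma_t$ does not decay --- its noise part converges to a nonzero steady state --- so summability cannot come from state decay alone and must instead be obtained by combining the uniform Schur contraction rate (available only because $\mathcal{B}_{\widehat{q}}(f(\theta))$ is compact and \Cref{assump:opt_technical_3} gives a strictly positive uniform gap) with the reward discounting $e^{-\alpha_1t}$. Making the resulting constant $L$ independent of both $\widehat{C}$ and $K\in\mathcal{K}(\mathcal{B}_{\widehat{q}}(f(\theta)))$, so that a single $L$ enters \Cref{lemma:coverage_bound}, is the delicate bookkeeping step, and it is precisely what the condition $\min_{\widehat{C}\in\mathcal{B}_{\widehat{q}}(f(\theta))}(2\alpha_2(\widehat{C})+\alpha_1)>0$ is designed to supply.
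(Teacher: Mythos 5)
Your proposal is correct and follows essentially the same route as the paper's proof: it reduces to \Cref{lemma:coverage_bound} by verifying non-negativity and a uniform operator-norm Lipschitz constant, writes $J$ in closed form as a mean term plus a noise term via the covariance recursion, reuses the deterministic discrete-time argument for the mean term, and controls the gradient of the noise term using the Schur/Lyapunov contraction $\|(\widehat{C}W)^t\|_{\mathrm{op}}\le\kappa(P)\tau^t$ combined with the exponential discounting of \Cref{assump:opt_technical_3}, collecting constants into $D(K)$ and $D_{2}(K)$. The only cosmetic difference is that you bound the inner $k$-sum by its $t$-independent infinite limit before summing the $e^{-\alpha_1 t}$ tail, whereas the paper interchanges the order of summation and evaluates the resulting geometric series explicitly; both yield a finite $L$ and the same conclusion.
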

\begin{proof}
We again consider any fixed $\theta$ and demonstrate the desired properties that $J(K, \widehat{C})$ is non-negative and $L$-Lipschitz in $\widehat{C}\in\mathcal{B}_{\widehat{q}}(f(\theta))$ under the operator norm for any $K\in\mathcal{K}(\mathcal{B}_{\widehat{q}}(f(\theta)))$, from which \Cref{lemma:coverage_bound} can be invoked to arrive at the desired conclusion. Notice the objective can be reformulated in the standard manner as follows:
\begin{align*}
     J(K, \widehat{C}) 
     &:= \mathbb{E}[\sum_{t=0}^{\infty} (x_t^\top (Q_{t} + K^\top R_{t} K) x_t)]
     = \sum_{t=0}^{\infty} \mathbb{E}[(x_t^\top (Q_{t} + K^\top R_{t} K) x_t)] \\
     &= \sum_{t=0}^{\infty} \mathbb{E}[\mathrm{Tr}(x_t^\top (Q_{t} + K^\top R_{t} K) x_t)]
     = \sum_{t=0}^{\infty} \mathbb{E}[\mathrm{Tr}((Q_{t} + K^\top R_{t} K) x_tx_t^\top)]] \\
     &= \sum_{t=0}^{\infty} \mathrm{Tr}((Q_{t} + K^\top R_{t} K) \mathbb{E}[x_tx_t^\top])
     = \sum_{t=0}^{\infty} \mathrm{Tr}((Q_{t} + K^\top R_{t} K) (\mathbb{E}[x_t]\mathbb{E}[x_t]^\top + \mathrm{Var}(x_{t}))).
\end{align*}
We now use the following computations to evaluate this final expression:
\begin{align*}
     \mathbb{E}[x_t] 
     &= \mathbb{E}[(\widehat{C}W)x_{t-1} + w_{t}] \\
     &= \mathbb{E}[(\widehat{C}W)x_{t-1}] + \mathbb{E}[w_{t}]
     = \mathbb{E}[(\widehat{C}W)((\widehat{C}W)x_{t-2} + w_{t-1})] \\
     &= \mathbb{E}[(\widehat{C}W)^{2}x_{t-2}] + (\widehat{C}W)\mathbb{E}[w_{t-1})]
     = ...
     = (\widehat{C}W)^{t}x_{0}.
\end{align*}
\begin{align*}
     \mathrm{Var}(x_{t})
     &= \mathrm{Var}((\widehat{C}W)x_{t-1} + w_{t})
     = (\widehat{C}W)\mathrm{Var}(x_{t-1})(\widehat{C}W)^{\top} + \Sigma \\
     &= (\widehat{C}W)\mathrm{Var}(x_{t-2})(\widehat{C}W)^{\top} + (\widehat{C}W)\Sigma(\widehat{C}W)^{\top} + \Sigma
     = ...
     = \sum_{k=0}^{t-1} (\widehat{C}W)^{k}\Sigma(\widehat{C}W)^{k\top}.
\end{align*}
With these simplifications, we are left with:
\begin{align*}
     J(K, \widehat{C}) &= \sum_{t=0}^{\infty} x_{0}^{\top}(\widehat{C}W)^{t\top}(Q_{t} + K^\top R_{t} K)(\widehat{C}W)^{t}x_{0}  \\
     &+ \sum_{t=0}^{\infty} \sum_{k=0}^{t-1} \mathrm{Tr}((Q_{t} + K^\top R_{t} K)(\widehat{C}W)^{k}\Sigma(\widehat{C}W)^{k\top}))
\end{align*}
$J$ is clearly non-negative by construction. We demonstrate this quantity is Lipschitz continuous with an appropriate Lipschitz constant, again by bounding the gradient. The bound for the first term was demonstrated in the proof of \Cref{lemma:coverage_bound_control_disc_inf}, for which reason we solely present that of the second term as follows:
\begin{align*}
    &\sum_{t=0}^{\infty} \sum_{k=0}^{t-1} \nabla_{\widehat{C}} \mathrm{Tr}((Q_{t} + K^\top R_{t} K)(\widehat{C}W)^{k}\Sigma(\widehat{C}W)^{k\top})) \\
    &= \sum_{t=0}^{\infty} \sum_{k=0}^{t-1} k ((((Q_{t}+K^\top  R_{t} K)^\top (\widehat{C} W)^{k} \Sigma^\top )\odot (\widehat{C} W)^{(k-1)}) W^\top + \\
    &\sum_{t=0}^{\infty} \sum_{k=0}^{t-1} k (((Q_{t}+K^\top  R_{t} K) (\widehat{C} W)^{k} \Sigma)\odot (\widehat{C} W)^{(k-1)}) W^\top 
\end{align*}
We now bound each of these two terms separately, although the structure of the two is the same, so we explicitly show steps for bounding the first, from which the same can be repeated on the second.
\begin{align*}
    &\max_{\widehat{C}\in\mathcal{B}_{\widehat{q}}(f(\theta))} || \sum_{t=0}^{\infty} \sum_{k=0}^{t-1} k ((((Q_{t}+K^\top  R_{t} K)^\top (\widehat{C} W)^{k} \Sigma^\top )\odot (\widehat{C} W)^{(k-1)}) W^\top ||_{\mathrm{op}} \\
    &\le \max_{\widehat{C}\in\mathcal{B}_{\widehat{q}}(f(\theta))} \sum_{t=0}^{\infty} \sum_{k=0}^{t-1} k || ((Q_{t}+K^\top  R_{t} K)^\top (\widehat{C} W)^{k} \Sigma^\top )\odot (\widehat{C} W)^{(k-1)} ||_{\mathrm{op}}  || W ||_{\mathrm{op}} \\
    &\le \max_{\widehat{C}\in\mathcal{B}_{\widehat{q}}(f(\theta))} \sum_{t=0}^{\infty} \sum_{k=0}^{t-1} k || Q_{t}+K^\top  R_{t} K ||_{\mathrm{op}} || (\widehat{C} W)^{k} ||_{\mathrm{op}} || \Sigma ||_{\mathrm{op}} || (\widehat{C} W)^{(k-1)} ||_{\mathrm{op}}  || W ||_{\mathrm{op}} \\
    &\le \max_{\widehat{C}\in\mathcal{B}_{\widehat{q}}(f(\theta))} \sum_{t=0}^{\infty} \sum_{k=0}^{t-1} k \beta_{1} e^{-\alpha_{1} t} || (\widehat{C} W)^{k} ||_{\mathrm{op}} || \Sigma ||_{\mathrm{op}} || (\widehat{C} W)^{(k-1)} ||_{\mathrm{op}}  || W ||_{\mathrm{op}} \\
\end{align*} 
We again repeat the proof technique leveraged in \Cref{section:coverage_bound_control_disc_det}, where we first collect all terms independent of $t$ into a constant $D_{2}(K) = || \Sigma ||_{\mathrm{op}} || W ||_{\mathrm{op}} $. By precisely the same argument as presented there, namely that $K$ stabilizes $\widehat{C}$, we have that there is a $P\succ 0$ such that for some $\tau\in(0,1)$, we have
\begin{align*}
    (\widehat{C}W)^\top P (\widehat{C}W) \preceq \tau^2 P. 
    \implies || (\widehat{C}W)^{t} ||_{\mathrm{op}}\le\kappa(P) \tau^t,
\end{align*}
where $\kappa(P)$ is the condition number of $P$. By \Cref{assump:opt_technical_3}, we have that $\alpha_{2}(\widehat{C}) \le -\log(\tau)$ or $e^{-\alpha_{2}(\widehat{C})} \ge \tau$. Therefore, we have that the prior sum is bounded by
\begin{align*}
    \le D_{2}(K) \kappa(P)^2 \max_{\widehat{C}\in\mathcal{B}_{\widehat{q}}(f(\theta))} \sum_{t=0}^{\infty} \sum_{k=0}^{t-1} k \tau^{2k-1}
    &\le D_{2}(K) \kappa(P)^2 \max_{\widehat{C}\in\mathcal{B}_{\widehat{q}}(f(\theta))} \sum_{t=0}^{\infty} \beta_{1} e^{-\alpha_{1} t} \sum_{k=0}^{t-1} k e^{-\alpha_{2}(\widehat{C})(2k-1)} \\
    &\le D_{2}(K) \kappa(P)^2 \beta_{1} \max_{\widehat{C}\in\mathcal{B}_{\widehat{q}}(f(\theta))} \sum_{k=0}^{t-1} k e^{-\alpha_{2}(\widehat{C})(2k-1)}  \sum_{t=k}^{\infty} e^{-\alpha_{1} t} \\
    &= D_{2}(K) \kappa(P)^2 \beta_{1} \max_{\widehat{C}\in\mathcal{B}_{\widehat{q}}(f(\theta))} \sum_{k=0}^{t-1} k e^{-\alpha_{2}(\widehat{C})(2k-1)} \left(\frac{e^{\alpha_1-\alpha_1 k}}{e^{\alpha_1}-1} \right) \\
    &= D_{2}(K) \kappa(P)^2 \beta_{1}  \left(\frac{1}{e^{\alpha_1}-1} \right) \max_{\widehat{C}\in\mathcal{B}_{\widehat{q}}(f(\theta))} \sum_{k=0}^{t-1} k (e^{-\alpha_{2}(\widehat{C})})^{(2k-1)} (e^{-\alpha_1})^{k-1} \\
    &= D_{2}(K) \kappa(P)^2 \beta_{1}  \left(\frac{1}{e^{\alpha_1}-1} \right) \max_{\widehat{C}\in\mathcal{B}_{\widehat{q}}(f(\theta))} e^{-\alpha_{2}(\widehat{C})} \sum_{k=0}^{t-1} k (e^{-(2\alpha_{2}(\widehat{C}) + \alpha_1)})^{k-1} \\
    &= D_{2}(K) \kappa(P)^2 \beta_{1}  \left(\frac{1}{e^{\alpha_1}-1} \right) \max_{\widehat{C}\in\mathcal{B}_{\widehat{q}}(f(\theta))} \frac{e^{-\alpha_2(\widehat C)}}{(1 - e^{-(2\alpha_2(\widehat C)+\alpha_1)})^2},
\end{align*} 
thus completing the proof.
\end{proof}


\section{Stochastic Continuous-Time Regret Analysis}\label{section:coverage_bound_control_cont_stoch}
\begin{theorem}\label{lemma:coverage_bound_control_cont_inf_stoch}[Stochastic, continuous-time]
    Let $J(K, C) := \mathbb{E}\left[\int_{0}^{\infty} (x(t)^\top (Q(t) + K^\top R(t) K) x(t)) dt \right]$ with $w(t)$ a white noise process with spectral density $\Sigma$ such that $D_{2}(K) := || \Sigma ||_{\mathrm{op}} || W ||_{\mathrm{op}} < \infty$. Assume further that $\mathcal{P}_{\Theta,C}(C\in\mathcal{B}_{\widehat{q}}(f(\Theta))) \ge 1 - \alpha$. Then, under \Cref{assump:opt_technical_1} and \Cref{assump:opt_technical_2},
    \begin{equation*}
        \mathcal{P}_{\Theta,C}\left(0\le \mathcal{R}(\Theta, C) \le 2L \widehat{q} + \Delta_{\mathrm{dom}}(\Theta,C)\right) \ge 1 - \alpha,
    \end{equation*}
    where $L$ is the Lipschitz constant of $J(K,\widehat{C})$ in $\widehat{C}\in\mathcal{B}_{\widehat{q}}(f(\theta))$ under the operator norm. Further, if $\widehat{q} < r(C,K^*(C))$, (see continuous-time in \Cref{def:margin_extended}) $\Delta_{\mathrm{dom}}(\Theta,C)=0$.
\end{theorem}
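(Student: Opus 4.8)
The plan is to derive the result as an application of \Cref{lemma:coverage_bound}, exactly as in the deterministic continuous-time and stochastic discrete-time cases: we fix an arbitrary $\theta$ and verify that $J(K,\widehat{C})$ is non-negative and $L$-Lipschitz in $\widehat{C}\in\mathcal{B}_{\widehat{q}}(f(\theta))$ under the operator norm, uniformly over $K\in\mathcal{K}(\mathcal{B}_{\widehat{q}}(f(\theta)))$. Once that is established, both the regret bound and the statement that $\Delta_{\mathrm{dom}}(\Theta,C)=0$ when $\widehat{q}<r(C,K^*(C))$ follow verbatim from \Cref{lemma:coverage_bound}.

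First I would put $J$ into tractable form. Writing $\Sigma_x(t):=\mathbb{E}[x(t)x(t)^\top]$, the closed-loop SDE $dx=(\widehat{C}W)x\,dt+dw$ with white noise of spectral density $\Sigma$ gives the differential Lyapunov equation $\dot{\Sigma}_x=(\widehat{C}W)\Sigma_x+\Sigma_x(\widehat{C}W)^\top+\Sigma$, whose solution is $\Sigma_x(t)=e^{\widehat{C}Wt}\Sigma_x(0)e^{(\widehat{C}W)^\top t}+\int_0^t e^{\widehat{C}Ws}\Sigma e^{(\widehat{C}W)^\top s}\,ds$. Taking the trace against $Q(t)+K^\top R(t)K$ and using linearity of expectation and trace as in the proof of \Cref{lemma:coverage_bound_control_disc_inf_stoch} expresses $J(K,\widehat{C})$ as the sum of a transient term $\int_0^\infty x(0)^\top e^{(\widehat{C}W)^\top t}(Q(t)+K^\top R(t)K)e^{\widehat{C}Wt}x(0)\,dt$ and a noise term $\int_0^\infty\!\int_0^t \mathrm{Tr}\big((Q(t)+K^\top R(t)K)e^{\widehat{C}Ws}\Sigma e^{(\widehat{C}W)^\top s}\big)\,ds\,dt$. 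Non-negativity is immediate since $Q(t),R(t)\succcurlyeq 0$ and $\Sigma_x(t)\succcurlyeq 0$.

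Then I would establish the Lipschitz bound by differentiating both terms in $\widehat{C}$ and bounding the operator norm of the gradient. The transient term is precisely the deterministic continuous-time integrand handled in \Cref{lemma:coverage_bound_control_cts_inf}, the only change being that $Q,R$ are now time-varying; its gradient is controlled as there, now invoking $\|Q(t)+K^\top R(t)K\|\le\beta_1 e^{-\alpha_1 t}$ from \Cref{assump:opt_technical_2} in place of a constant, together with the universal-stabilizer decay $\|e^{\widehat{C}Wt}\|_{\mathrm{op}}\le\beta(\widehat{C})e^{-\alpha_2(\widehat{C})t}$ with $\alpha_2(\widehat{C})>0$, so the governing integral $\int_0^\infty t\,e^{-(2\alpha_2(\widehat{C})+\alpha_1)t}\,dt$ is finite because $\min_{\widehat{C}\in\mathcal{B}_{\widehat{q}}(f(\theta))}(2\alpha_2(\widehat{C})+\alpha_1)>0$. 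For the noise term I would differentiate $e^{\widehat{C}Ws}$ via the Duhamel/Fréchet derivative, apply submultiplicativity of $\|\cdot\|_{\mathrm{op}}$ along with the identities $\|\mathrm{diag}(v)\|_{\mathrm{op}}=\|v\|_\infty$ and $\|A\|_\infty\le\sqrt{n}\|A\|_{\mathrm{op}}$ used throughout the appendix, collect the $t$-independent factors into $D_2(K)=\|\Sigma\|_{\mathrm{op}}\|W\|_{\mathrm{op}}$, and reduce the estimate to a double exponential integral of the shape $\int_0^\infty\!\int_0^t s\,\beta_1 e^{-\alpha_1 t}\,\beta(\widehat{C})^2 e^{-2\alpha_2(\widehat{C})s}\,ds\,dt$; swapping the order of integration, exactly mirroring the index swap on the sums in \Cref{lemma:coverage_bound_control_disc_inf_stoch}, gives $\int_0^\infty s\,\beta(\widehat{C})^2 e^{-2\alpha_2(\widehat{C})s}\big(\int_s^\infty\beta_1 e^{-\alpha_1 t}\,dt\big)\,ds$, which again converges thanks to $2\alpha_2(\widehat{C})+\alpha_1>0$. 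Taking the supremum over $\widehat{C}\in\mathcal{B}_{\widehat{q}}(f(\theta))$ of the two bounds defines the finite Lipschitz constant $L$ and lets me invoke \Cref{lemma:coverage_bound}.

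The main obstacle I anticipate is the uniformity in $\widehat{C}$: the per-$\widehat{C}$ constants $\beta(\widehat{C})$ (arising from diagonalizability and a Bauer–Fike or Lyapunov-certificate argument as in the appendix) and the decay rates $\alpha_2(\widehat{C})$ must be controlled over the whole ball $\mathcal{B}_{\widehat{q}}(f(\theta))$ so that $L<\infty$; this is exactly what the joint condition $\min_{\widehat{C}\in\mathcal{B}_{\widehat{q}}(f(\theta))}(2\alpha_2(\widehat{C})+\alpha_1)>0$ in \Cref{assump:opt_technical_2} supplies, via compactness of the closed ball and continuity of the spectral abscissa on the universal-stabilizing set. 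The only care needed beyond the routine calculation is to phrase each gradient estimate so that the final supremum over $\widehat{C}$ lands on a continuous, finite quantity; once $L$ is exhibited, \Cref{lemma:coverage_bound} closes the argument, including the vanishing of $\Delta_{\mathrm{dom}}(\Theta,C)$ when $\widehat{q}<r(C,K^*(C))$ (see continuous-time in \Cref{def:margin_extended}).
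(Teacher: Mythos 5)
Your proposal is correct and follows essentially the same route as the paper: decompose $J(K,\widehat{C})$ into the transient term plus the noise term via the SDE moment equations, verify non-negativity, bound the gradient in $\widehat{C}$ using the decay rates from \Cref{assump:opt_technical_2} and the stabilizer decay $\|e^{\widehat{C}Wt}\|_{\mathrm{op}}\le\beta(\widehat{C})e^{-\alpha_2(\widehat{C})t}$ so the double exponential integral is finite, and then invoke \Cref{lemma:coverage_bound}. Your minor variations (swapping the order of integration rather than evaluating the inner integral in closed form, using the Duhamel/Fr\'echet derivative of the matrix exponential, and explicitly flagging uniformity of $\beta(\widehat{C}),\alpha_2(\widehat{C})$ over the ball) are cosmetic refinements of the same argument, the last of which is if anything slightly more careful than the paper's implicit supremum.
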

\begin{proof}
We again consider any fixed $\theta$ and demonstrate the desired properties that $J(K, \widehat{C})$ is non-negative and $L$-Lipschitz in $\widehat{C}\in\mathcal{B}_{\widehat{q}}(f(\theta))$ under the operator norm for any $K\in\mathcal{K}(\mathcal{B}_{\widehat{q}}(f(\theta)))$, from which \Cref{lemma:coverage_bound} can be invoked to arrive at the desired conclusion. Notice the objective can be reformulated in the standard manner as follows:
\begin{align*}
     J(K, \widehat{C}) 
     &:= \mathbb{E}\left[\int_{0}^{\infty} (x(t)^\top (Q(t) + K^\top R(t) K) x(t)) dt \right] \\
     &= \int_{0}^{\infty} \mathbb{E}[(x(t)^\top (Q(t) + K^\top R(t) K) x(t))] dt
     = \int_{0}^{\infty} \mathbb{E}[\mathrm{Tr}(x(t)^\top (Q(t) + K^\top R(t) K) x(t))] dt \\
     &= \int_{0}^{\infty} \mathbb{E}[\mathrm{Tr}((Q(t) + K^\top R(t) K) x(t)x(t)^\top)]] dt
     = \int_{0}^{\infty} \mathrm{Tr}((Q(t) + K^\top R(t) K) \mathbb{E}[x(t)x(t)^\top]) dt \\
     &= \int_{0}^{\infty} \mathrm{Tr}((Q(t) + K^\top R(t) K) (\mathbb{E}[x(t)]\mathbb{E}[x(t)]^\top + \mathrm{Var}(x(t)))) dt.
\end{align*}
We now obtain the expressions for $\mathbb{E}[x(t)]$ and $\mathrm{Var}(x(t))$ using standard results from stochastic differential equations. For a full review on this topic, see \citep{sarkka2019applied}:
\begin{align*}
     J(K, \widehat{C}) &= \int_{0}^{\infty} x(0)^\top (e^{\widehat{C}Wt})^{\top} (Q(t) + K^\top R(t) K) (e^{\widehat{C}Wt})x(0) dt \\
     &+ \int_{0}^{\infty} \int_{0}^{t} \mathrm{Tr}((Q(t) + K^\top R(t) K) e^{\widehat{C}Wk} \Sigma e^{\widehat{C}Wk\top}) dk dt
\end{align*}
$J$ is clearly non-negative by construction. We demonstrate this quantity is Lipschitz continuous with an appropriate Lipschitz constant, again by bounding the gradient in much the same manner as the above bound. The bound for the first term was demonstrated in the proof of \Cref{lemma:coverage_bound_control_disc_inf}, which holds under the finiteness assumption of $D_{1}(K)$. We, thus, solely present that of the second term as follows:
\begin{align*}
    &\int_{0}^{\infty} \int_{0}^{t} \nabla_{\widehat{C}} \mathrm{Tr}((Q(t) + K^\top R(t) K) e^{(\widehat{C}W)k} \Sigma e^{(\widehat{C}W)k\top}) dk dt \\
    &= \int_{0}^{\infty} \int_{0}^{t} k (((Q(t)^\top +(R(t) K)^\top  K) e^{k \widehat{C} W} \Sigma^\top )\odot e^{k \widehat{C} W}) W^\top dk dt \\  
    &+\int_{0}^{\infty} \int_{0}^{t} k (((Q(t)+K^\top  R(t) K) e^{k \widehat{C} W} \Sigma)\odot e^{k \widehat{C} W}) W^\top
\end{align*}
We now bound each of these two terms separately, although the structure of the two is the same, so we explicitly show steps for bounding the first, from which the same can be repeated on the second.
\begin{align*}
    L 
    &\le \max_{\widehat{C}} || \int_{0}^{\infty} \int_{0}^{t} k (((Q(t)^\top +(R(t) K)^\top  K) e^{k \widehat{C} W} \Sigma^\top )\odot e^{k \widehat{C} W}) W^\top dk dt ||_{\mathrm{op}} \\
    &\le \max_{\widehat{C}} \int_{0}^{\infty} \int_{0}^{t} k || Q(t)+K^\top  R(t) K ||_{\mathrm{op}}  || \Sigma ||_{\mathrm{op}}  || W ||_{\mathrm{op}} || e^{k \widehat{C} W}||^{2}_{\mathrm{op}} dk dt
\end{align*} 
We again now collect all terms independent of $t$ into a constant $D_{2}(K) = || \Sigma ||_{\mathrm{op}} || W ||_{\mathrm{op}} $, leaving
\begin{align*}
    &\le \max_{\widehat{C}} D_{2}(K) \int_{0}^{\infty} \beta_{1} e^{-\alpha_{1} t} \int_{0}^{t} k \beta_{2}(\widehat{C})^{2} e^{-2\alpha_{2}(\widehat{C}) k} dk dt \\
    &= \max_{\widehat{C}} \frac{D_{2}(K) \beta_{1} \beta_{2}(\widehat{C})^{2}}{4\alpha_{2}^{2}(\widehat{C})} \int_{0}^{\infty} e^{-\alpha_{1} t} \left(1 -2\alpha_{2}(\widehat{C}) t e^{-2\alpha_{2}(\widehat{C}) t} - e^{-2\alpha_{2}(\widehat{C}) t}\right) dt \\
    &= \max_{\widehat{C}} \frac{D_{2}(K) \beta_{1} \beta_{2}(\widehat{C})^{2}}{4\alpha_{2}^{2}(\widehat{C})}
    \left(\frac{1}{\alpha_{1}} - \frac{2\alpha_{2}(\widehat{C})}{(\alpha_{1} + 2\alpha_{2}(\widehat{C}))^2} - \frac{1}{\alpha_{1} + 2\alpha_{2}(\widehat{C})}\right)
    = \max_{\widehat{C}} \frac{D_{2}(K) \beta_{1} \beta_{2}(\widehat{C})^{2}}{\alpha_{1}(\alpha_{1} + 2\alpha_{2}(\widehat{C}))^2}
\end{align*} 
We, therefore, again have the desired upper bound on the Lipschitz constant, as desired.
\end{proof}

\section{Unimodal Assumption Explanation}\label{section:unimodal_assump}
In classical engineering design, one would prescribe the dynamics of the system by explicitly writing out the physics of the system; this is so universally done that it may not even feel like an assumption in engineering design. This is the sense in which we mean that the design parameters commonly have some ``unimodal'' (often Dirac) measure in mapping to the system dynamics. For instance, if one is studying a cart pole system with a position $x$ and angle $\theta$, a common characterization (see Chapter 3 of \citep{underactuated}) is given by:
\begin{align*}
    \ddot{x} =& \frac{1}{m_c + m_p \sin^2\theta}\left[ f_x + m_p \sin\theta (l \dot\theta^2 + g\cos\theta)\right] \\
    \ddot{\theta} =& \frac{1}{l(m_c + m_p \sin^2\theta)} \left[ -f_x
    \cos\theta - m_p l \dot\theta^2 \cos\theta \sin\theta - (m_c + m_p) g \sin\theta \right] 
\end{align*}
Here, the parameters $m_c$, $m_p,$ and $l$ could all be viewed as the ``design parameters'' $\theta$ depending on what one, as an engineer, has control over. Knowing that the underlying reality can be described by single set of dynamics is, therefore, what motivates using a unimodal model, as we wished to highlight with the models used in previous UCCD works. 

\section{Coverage Guarantees Under Noisy Observations}\label{section:coveragefornoisyobs}

\begin{theorem}
    Let $\widetilde{C} = C + \epsilon$ where $\text{vec}(\epsilon) \sim \mathcal{N}(0, \Sigma)$, where $\epsilon\indep (\Theta, C)$. Assume $\mathcal{U}(\theta) = \{C' \mid || f(\theta) - C' ||_{\mathrm{op}} \le \widehat{q}\}$ satisfies $\mathcal{P}_{\Theta,\widetilde{C}}(\widetilde{C} \in \mathcal{U}(\Theta)) \ge 1 - \alpha$, where $|| \cdot ||_{\mathrm{op}}$ denotes the matrix operator norm. If for any $\theta\in\Theta$ and $\delta > 0$, $\mathcal{P}(\widehat{q}^2 - \delta \leq \|C - f(\theta)\|_{\mathrm{op}}^{2}   \leq \widehat{q}^2 \mid\Theta = \theta) > \mathcal{P}(\widehat{q}^2 \leq \|C - f(\theta)\|_{\mathrm{op}}^{2}   \leq \widehat{q}^2 + \delta \mid\Theta = \theta)$, then
    \begin{align*}
        \mathcal{P}_{\Theta, C}(C \in \mathcal{U}(\Theta)) \geq \mathcal{P}_{\Theta,\widetilde{C}}(\widetilde{C} \in \mathcal{U}(\Theta)) \geq 1 - \alpha.
    \end{align*}
\end{theorem}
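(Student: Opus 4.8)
The plan is to condition on $\Theta=\theta$, rewrite the \emph{observed}-coverage probability as a Gaussian smoothing of the \emph{true}-coverage event, and then argue that this smoothing can only transport probability mass \emph{out of} the ball $\mathcal{U}(\theta)$ --- which is precisely what the peakedness hypothesis encodes. First I would fix $\theta$ and set $Z := C - f(\theta)$ (the conditional law of $C$ given $\Theta=\theta$) and $B := \{M : \|M\|_{\mathrm{op}}\le\widehat{q}\}$, which is convex and centrally symmetric since $\|\cdot\|_{\mathrm{op}}$ is a norm. Because $\epsilon\indep(\Theta,C)$ and $\mathrm{vec}(\epsilon)\sim\mathcal{N}(0,\Sigma)$, we have $\widetilde{C}-f(\theta)=Z+\epsilon$ with $\epsilon$ independent of $Z$, so by Fubini $\mathcal{P}(\widetilde{C}\in\mathcal{U}(\theta)\mid\Theta=\theta)=\mathbb{E}_Z[\mu_\epsilon(B-Z)]$, where $\mu_\epsilon$ is the law of $\epsilon$ and $B-z:=\{M:\|M+z\|_{\mathrm{op}}\le\widehat{q}\}$ is a translate of $B$. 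Hence the conditional coverage gap is $\mathbb{E}_Z[\mathbbm{1}\{Z\in B\}-\mu_\epsilon(B-Z)]$, whose integrand is strictly positive on $\{Z\in B\}$ and strictly negative on $\{Z\notin B\}$ since $0<\mu_\epsilon(B-z)<1$. The goal becomes showing the positive part dominates for each $\theta$; then integrating over $\Theta$ (monotonicity of expectation) and chaining with the assumed $\mathcal{P}_{\Theta,\widetilde{C}}(\widetilde{C}\in\mathcal{U}(\Theta))\ge1-\alpha$ gives the claim.

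The geometric core is to bound $\mu_\epsilon(B-z)$ by a quantity that decays as $z$ leaves $B$. Using a norming functional of the operator norm --- $v_z := u_z w_z^\top$, the outer product of the leading left/right singular vectors of $z$, which satisfies $\langle v_z,M\rangle\le\|M\|_{\mathrm{op}}$ for all $M$ and $\langle v_z,z\rangle=\|z\|_{\mathrm{op}}$ --- one gets $B-z\subseteq\{M:\langle v_z,M\rangle\le\widehat{q}-\|z\|_{\mathrm{op}}\}$, a half-space. Since $\langle v_z,\epsilon\rangle$ is a centered univariate Gaussian, $\mu_\epsilon(B-z)\le\Phi\!\big((\widehat{q}-\|z\|_{\mathrm{op}})/\sigma_z\big)$, where $\Phi$ is the standard normal CDF and $\sigma_z^2=\mathrm{vec}(v_z)^\top\Sigma\,\mathrm{vec}(v_z)$; in particular $\mu_\epsilon(B-z)\le\tfrac12$ on the sphere $\|z\|_{\mathrm{op}}=\widehat{q}$, and $\mu_\epsilon(B-z)+\mu_\epsilon(B-z')\le1$ whenever $z,z'$ lie on a common ray with $\|z\|_{\mathrm{op}}+\|z'\|_{\mathrm{op}}=2\widehat{q}$ (a reflection pair across the sphere, for which $v_z=v_{z'}$). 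Decomposing the law of $Z$ into its radial part $\|Z\|_{\mathrm{op}}$ given direction and pairing each $z$ just inside the sphere with its reflection just outside, these estimates show the negative contribution of each outer shell is at most the positive contribution of the matched inner shell, up to the unmatched outer mass at radii exceeding $2\widehat{q}$.

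It then remains to absorb that unmatched term using the hypothesis. The assumption $\mathcal{P}(\widehat{q}^2-\delta\le\|C-f(\theta)\|_{\mathrm{op}}^2\le\widehat{q}^2\mid\Theta=\theta)>\mathcal{P}(\widehat{q}^2\le\|C-f(\theta)\|_{\mathrm{op}}^2\le\widehat{q}^2+\delta\mid\Theta=\theta)$ for all $\delta>0$ says exactly that, shell by shell across the threshold $\widehat{q}^2$, the law of $\|C-f(\theta)\|_{\mathrm{op}}^2$ places at least as much mass inside as outside --- a ``mode lies inside $\mathcal{U}(\theta)$'' condition (letting $\delta\to\infty$ already forces $\mathcal{P}(C\in\mathcal{U}(\theta)\mid\theta)\ge\tfrac12$, consistent with the convexity bound being otherwise too weak). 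Combined with the balanced-smoothing estimate of the previous step, this should yield $\mathbb{E}_Z[\mathbbm{1}\{Z\in B\}]\ge\mathbb{E}_Z[\mu_\epsilon(B-Z)]$, hence the per-$\theta$ inequality, and integrating over $\Theta$ completes the argument.

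I expect the main obstacle to be making the radial decomposition rigorous in the multivariate setting: the operator-norm ball is not rotationally symmetric and $\Sigma$ need not be isotropic, so the ``the smoothing is balanced across the sphere $\|\cdot\|_{\mathrm{op}}=\widehat{q}$'' step must be carried out ray by ray through the norming-functional bound and then reconciled with the hypothesis, which constrains only the \emph{marginal} law of $\|C-f(\theta)\|_{\mathrm{op}}^2$ rather than its laws conditional on direction. Showing the marginal radial information suffices --- e.g.\ via a careful change of variables between $\|\cdot\|_{\mathrm{op}}$ and $\|\cdot\|_{\mathrm{op}}^2$ shells, and using the \emph{strictness} of the hypothesis to absorb the slack introduced by the half-space relaxation --- is where the argument genuinely extends the one-dimensional result of Feldman et al.; the remaining steps are routine bookkeeping.
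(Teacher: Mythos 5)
Your first two ingredients (conditioning on $\theta$, writing the observed coverage as a Gaussian smoothing $\mathbb{E}_Z[\mu_\epsilon(B-Z)]$, and the one-sided half-space/norming-functional relaxation) are fine and are close in spirit to the paper's one-sided bound. But the proof never actually happens: the step where the outside-ball negative contribution is dominated by the inside-ball positive contribution is exactly what you defer to "routine bookkeeping," and it is not routine — in fact your two ingredients are stated on incompatible radial scales and the slack runs against you on both. The reflection-pair inequality $\mu_\epsilon(B-z)+\mu_\epsilon(B-z')\le 1$ holds for points on a common ray at \emph{norm} radii $\widehat{q}\mp t$, so to use it you need ray-wise (direction-conditional) domination of inner over outer shell mass at matched norm radii; the hypothesis only compares \emph{marginal} masses of equal-width shells of $\|C-f(\theta)\|_{\mathrm{op}}^{2}$. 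If instead you pair at matched squared radii $\widehat{q}^{2}\mp s$ to match the hypothesis, then $\sqrt{\widehat{q}^{2}+s}-\widehat{q}<\widehat{q}-\sqrt{\widehat{q}^{2}-s}$, so the half-space bound gives $1-\Phi\bigl((\widehat{q}-\|z_{\mathrm{in}}\|)/\sigma\bigr)\le \Phi\bigl(-(\|z_{\mathrm{out}}\|-\widehat{q})/\sigma\bigr)$ — the pointwise comparison you need goes the wrong way even with $m_{\mathrm{in}}\ge m_{\mathrm{out}}$. On top of this, the unmatched outer mass at radii beyond $2\widehat{q}$ contributes a strictly negative term with nothing left to absorb it, and "strictness" of the hypothesis supplies no quantitative margin (the inner-minus-outer gap can be arbitrarily small). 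So as written the argument has a genuine hole at its center, and it is not clear it can be patched along this route without importing the paper's mechanism.

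The paper's proof sidesteps all of these issues by never decomposing radially or using Gaussian CDF estimates. It stays on the squared scale: with $x^*$ the maximizing unit vector for $C-f(\theta)$, it bounds $\|\widetilde{C}-f(\theta)\|_{\mathrm{op}}^{2}\ge \|C-f(\theta)\|_{\mathrm{op}}^{2}+\delta$ with $\delta:=2(x^*)^\top\epsilon^\top(C-f(\theta))x^*$, which is linear (hence odd) in $\epsilon$; since $\epsilon\indep(\Theta,C)$ and $\epsilon\overset{d}{=}-\epsilon$, the pair $(\|C-f(\theta)\|_{\mathrm{op}}^{2},\delta)$ is symmetric in $\delta$ and $\mathcal{P}(\delta\le 0\mid\Theta=\theta)=1/2$. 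The excess of $\mathcal{P}(\|C-f(\theta)\|_{\mathrm{op}}^{2}+\delta\le\widehat{q}^{2})$ over $\mathcal{P}(\|C-f(\theta)\|_{\mathrm{op}}^{2}\le\widehat{q}^{2})$ then reduces to a difference of inner and outer shell masses at matched \emph{squared} widths $|\delta|$, which is exactly what the stated hypothesis controls — no direction-dependent variance $\sigma_z$ to reconcile, no norm-versus-squared-norm mismatch, and no leftover mass beyond $2\widehat{q}$. If you want to salvage your write-up, the fix is essentially to replace the half-space/pairing machinery with this symmetrized linear lower bound; your current plan identifies the right difficulty but does not resolve it.
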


\begin{proof}
    Given that $\mathcal{P}_{\Theta,\widetilde{C}}(\widetilde{C} \in \mathcal{U}(\Theta)) \geq 1 - \alpha$, it suffices to show that $\mathcal{P}(C \in \mathcal{U}(\theta) \mid\Theta = \theta) \geq \mathcal{P}(\widetilde{C} \in \mathcal{U}(\theta)  \mid\Theta = \theta)$ for all $\theta$, as the conclusion can be drawn by the law of total probability:
    \begin{align*}
        & \mathcal{P}(\widetilde{C} \in \mathcal{U}(\Theta)\mid\Theta = \theta) \\
        &= \mathcal{P}\left(\left\|\widetilde{C} - f(\theta)\right\|_{\mathrm{op}}^{2} \leq \widehat{q}^2\mid\Theta = \theta\right) \\
        &= \mathcal{P}\left(\underset{\|x\| = 1}{\sup}\left\{\left\|Cx + \epsilon x - f(\theta)x\right\|_{2}^{2} \right\}\leq \widehat{q}^2\mid\Theta = \theta\right) \\
        &= \mathcal{P}\left(\underset{\|x\| = 1}{\sup}\left\{\left\|Cx - f(\theta)x\right\|_{2}^{2} + 2x^T\epsilon^T\left(Cx - f(\theta)x\right)   + \|\epsilon x\|_{2}^{2}\right\} \leq \widehat{q}^2\mid\Theta = \theta\right)
    \end{align*}
    We now \textit{lower} bound this inner quantity, from which 
    \begin{align*}
        \underset{\|x\| = 1}{\sup}& \left\{\left\|Cx - f(\theta)x\right\|_{2}^{2} + 2x^T\epsilon^T\left(Cx - f(\theta)x\right)   + \|\epsilon x\|_{2}^{2}\right\} \\
        &\ge \underset{\|x\| = 1}{\sup}\left\{\left\|Cx - f(\theta)x\right\|_{2}^{2} + 2x^T\epsilon^T\left(Cx - f(\theta)x\right) \right\} \\
        &\ge \left\|Cx' - f(\theta)x'\right\|_{2}^{2} + 2(x')^T\epsilon^T\left(Cx' - f(\theta)(x')\right),
    \end{align*}
    for any choice of $x' : || x' ||_2=1$. The second line follows by the trivial fact that $\|\epsilon x\|_{2}^{2}\ge0$ and the third from the fact that the previous line is the supremum of \textit{all} such possible values $x'$. We now specifically take $x' := \argmax_{x} \left\|Cx - f(\theta)x\right\|_{2}^{2}$ and denote it as $x^*$. From here, we arrive at the final bound
    \begin{align*}
        \underset{\|x\| = 1}{\sup}&\left\{\left\|Cx - f(\theta)x\right\|_{2}^{2} + 2x^T\epsilon^T\left(Cx - f(\theta)x\right)   + \|\epsilon x\|_{2}^{2}\right\} \\
        &\ge \left\|(C - f(\theta))(x^*)\right\|_{2}^{2} + 2(x^*)^T\epsilon^T\left(Cx^* - f(\theta)(x^*)\right)\\
        &=: \left\|C - f(\theta)\right\|_{\mathrm{op}}^{2} + 2(x^*)^T\epsilon^T\left(Cx^* - f(\theta)(x^*)\right)
    \end{align*}
    Since this is a \textit{lower} bound on the original quantity of interest, we have that the probability this quantity is upper bounded by $\widehat{q}^2$ is \textit{greater} than that of the original quantity being upper bounded. That is, 
    \begin{align*}
        \mathcal{P}&\left(\underset{\|x\| = 1}{\sup}\left\{\left\|Cx - f(\theta)x\right\|_{2}^{2} + 2x^T\epsilon^T\left(Cx - f(\theta)x\right)   + \|\epsilon x\|_{2}^{2}\right\} \leq \widehat{q}^2\mid\Theta = \theta\right) \\
        &\leq \mathcal{P}\left(\left\|C - f(\theta)\right\|_{\mathrm{op}}^{2} + 2x^{*^T}\epsilon^T\left(Cx^* - f(\theta)x^*\right)  \leq \widehat{q}^2\mid\Theta = \theta\right)
    \end{align*}
    Let $\delta = 2x^{*^T}\epsilon^T\left(Cx^* - f(\theta)x^*\right)$. Then: 
    \begin{align*}
        &:= \mathcal{P}\left(\left\|C - f(\theta)\right\|_{\mathrm{op}}^{2} + \delta  \leq \widehat{q}^2\mid\Theta = \theta\right) 
        \\
        &= \mathcal{P}\left(\left\|C - f(\theta)\right\|_{\mathrm{op}}^{2} + \delta  \leq \widehat{q}^2 \mid\Theta = \theta, \delta > 0\right)\mathcal{P}(\delta > 0\mid\Theta = \theta) \\
        & \quad+ \mathcal{P}\left(\left\|C - f(\theta)\right\|_{\mathrm{op}}^{2} + \delta  \leq \widehat{q}^2 \mid\Theta = \theta, \delta \leq 0\right)\mathcal{P}(\delta \leq 0\mid\Theta = \theta) \\
        &= \mathcal{P}\left(\left\|C - f(\theta)\right\|_{\mathrm{op}}^{2}   \leq \widehat{q}^2 - \delta \mid\Theta = \theta, \delta > 0\right)\mathcal{P}(\delta > 0\mid\Theta = \theta) \\
        & \quad+ \mathcal{P}\left(\left\|C - f(\theta)\right\|_{\mathrm{op}}^{2}  \leq \widehat{q}^2 + \delta \mid\Theta = \theta, \delta > 0\right)\mathcal{P}(\delta \leq 0\mid\Theta = \theta)
    \end{align*}
    In this final line, we made use of the fact that
    \begin{align*}
        \mathcal{P}\left(\left\|C - f(\theta)\right\|_{\mathrm{op}}^{2}  \leq \widehat{q}^2 - \delta \mid\Theta = \theta, \delta \le 0\right) 
        &= \mathcal{P}\left(\left\|C - f(\theta)\right\|_{\mathrm{op}}^{2}  \leq \widehat{q}^2 + \delta \mid\Theta = \theta, \delta > 0\right),
    \end{align*}
    which follows since the distribution of $\delta$ is symmetric about 0 by the symmetry of the distribution of $\epsilon$. In particular, $\delta = f(C,\epsilon)$; since $C\indep\epsilon$, the joint distributions $\mathcal{P}(C,\epsilon)$ and $\mathcal{P}(C,-\epsilon)$ are identical. Thus, the distribution of $\delta' = f(C,-\epsilon)$ matches that of $\delta$. Using this, we add terms that sum to 0 as follows:
    \begin{align*}
       &\left.
        \begin{aligned}
        & =\mathcal{P}\!\left(\left\|C - f(\theta)\right\|_{\mathrm{op}}^{2} \leq \widehat{q}^2 \mid\Theta = \theta\right) \\
        &\quad -\mathcal{P}\!\left(\delta >0\mid\Theta = \theta\right)\mathcal{P}\!\left(\left\|C - f(\theta)\right\|_{\mathrm{op}}^{2} \leq \widehat{q}^2 \mid\Theta = \theta,\delta>0 \right) \\
        &\quad -\mathcal{P}\!\left(\delta \leq 0\mid\Theta = \theta\right)\mathcal{P}\!\left(\left\|C - f(\theta)\right\|_{\mathrm{op}}^{2} \leq \widehat{q}^2 \mid\Theta = \theta,\delta \leq 0\right)
        \end{aligned}
        \right\}
        = 0 \\
        &\quad+ \mathcal{P}\left(\left\|C - f(\theta)\right\|_{\mathrm{op}}^{2}   \leq \widehat{q}^2 - \delta \mid\Theta = \theta, \delta > 0\right)\mathcal{P}(\delta > 0\mid\Theta = \theta) \\
        & \quad+ \mathcal{P}\left(\left\|C - f(\theta)\right\|_{\mathrm{op}}^{2}  \leq \widehat{q}^2 + \delta \mid\Theta = \theta, \delta > 0\right)\mathcal{P}(\delta \leq 0\mid\Theta = \theta),
    \end{align*}
    where these newly added terms will be used for manipulation subsequently. From here, we re-express this expression with expectations, where we again use the symmetry in $\delta$ in the second term:
    \begin{align*}
        &= \mathcal{P}\left(\left\|C - f(\theta)\right\|_{\mathrm{op}}^{2}   \leq \widehat{q}^2 \mid\Theta = \theta\right)\\
        & \quad -\mathcal{P}\left(\delta >0\mid\Theta = \theta\right)\mathbb{E}\left[\mathcal{P}(\left\|C - f(\theta)\right\|_{\mathrm{op}}^{2}   \leq \widehat{q}^2 \mid\Theta = \theta)\mid\delta>0 \right] \\
        & \quad - \mathcal{P}\left(\delta \leq 0\mid\Theta = \theta\right)\mathbb{E}\left[\mathcal{P}(\left\|C - f(\theta)\right\|_{\mathrm{op}}^{2}   \leq \widehat{q}^2 \mid\Theta = \theta)\mid\delta > 0\right] \\
        & \quad +\mathcal{P}\left(\delta >0\mid\Theta = \theta\right)\mathbb{E}\left[\mathcal{P}\left(\left\|C - f(\theta)\right\|_{\mathrm{op}}^{2}   \leq \widehat{q}^2 - \delta\mid \Theta = \theta \right) \mid \delta > 0\right] \\
        & \quad +\mathcal{P}\left(\delta \leq 0\mid \Theta = \theta\right)\mathbb{E}\left[\mathcal{P}\left(\left\|C - f(\theta)\right\|_{\mathrm{op}}^{2}   \leq \widehat{q}^2 + \delta \mid\Theta = \theta\right)\mid\delta > 0\right].
    \end{align*}
    With this rewrite, we can group terms and conclude using the stated assumption on the ``peaking'' structure of the probability in the prediction region:
    \begin{align*}
        &= \mathcal{P}\left(\left\|C - f(\theta)\right\|_{\mathrm{op}}^{2}   \leq \widehat{q}^2 \mid \Theta = \theta\right) \\
        & \quad- \mathcal{P}\left(\delta >0\mid\Theta = \theta\right)\mathbb{E}\left[\mathcal{P}\left(\left\|C - f(\theta)\right\|_{\mathrm{op}}^{2}   \leq \widehat{q}^2 \mid\Theta = \theta\right) - \mathcal{P}\left(\left\|C - f(\theta)\right\|_{\mathrm{op}}^{2}   \leq \widehat{q}^2 - \delta \mid\Theta = \theta\right)\mid\delta > 0\right] \\
        & \quad+ \mathcal{P}\left(\delta \leq 0\mid\Theta = \theta\right)\mathbb{E}\left[\mathcal{P}\left(\left\|C - f(\theta)\right\|_{\mathrm{op}}^{2}   \leq \widehat{q}^2 + \delta\mid\Theta = \theta \right) - \mathcal{P}\left(\left\|C - f(\theta)\right\|_{\mathrm{op}}^{2}   \leq \widehat{q}^2 \mid\Theta = \theta\right) \mid\delta > 0\right]  \\
        &= \mathcal{P}\left(\left\|C - f(\theta)\right\|_{\mathrm{op}}^{2}   \leq \widehat{q}^2 \mid\Theta = \theta\right) \\
        & \quad- \mathcal{P}\left(\delta >0\mid\Theta = \theta\right)\mathbb{E}\left[\mathcal{P}\left(\widehat{q}^2 - \delta \leq \left\|C - f(\theta)\right\|_{\mathrm{op}}^{2}   \leq \widehat{q}^2 \mid\Theta = \theta\right) \mid\delta > 0\right] \\
        & \quad+ \mathcal{P}\left(\delta \leq 0\mid\Theta = \theta\right)\mathbb{E}\left[\mathcal{P}\left(\widehat{q}^2 \leq \left\|C - f(\theta)\right\|_{\mathrm{op}}^{2}   \leq \widehat{q}^2 + \delta \mid\Theta = \theta\right) \mid \delta > 0\right].
    \end{align*}
    We, therefore, have that $\mathcal{P}(\widetilde{C} \in \mathcal{U}(\theta)\mid\Theta = \theta) \le \mathcal{P}\left(\left\|C - f(\theta)\right\|_{\mathrm{op}}^{2}   \leq \widehat{q}^2 \mid\Theta = \theta\right) + \Delta$, where 
    \begin{align*}
        \Delta &:= \mathcal{P}\left(\delta \leq 0\mid\Theta = \theta\right)\mathbb{E}\left[\mathcal{P}\left(\widehat{q}^2 \leq \left\|C - f(\theta)\right\|_{\mathrm{op}}^{2}   \leq \widehat{q}^2 + \delta \mid\Theta = \theta\right) \mid\delta > 0\right] \\
        & - \mathcal{P}\left(\delta >0\mid\Theta = \theta\right)\mathbb{E}\left[\mathcal{P}\left(\widehat{q}^2 - \delta \leq \left\|C - f(\theta)\right\|_{\mathrm{op}}^{2}   \leq \widehat{q}^2 \mid\Theta = \theta\right)\mid\delta > 0\right]
    \end{align*}
    By the assumption, we know that for all $\delta > 0$:
    \begin{gather*}
        \mathcal{P}\left(\widehat{q}^2 - \delta \leq \left\|C - f(\theta)\right\|_{\mathrm{op}}^{2}   \leq \widehat{q}^2\mid\Theta = \theta \right) > \mathcal{P}\left(\widehat{q}^2 \leq \left\|C - f(\theta)\right\|_{\mathrm{op}}^{2}   \leq \widehat{q}^2 + \delta\mid\Theta = \theta \right).
    \end{gather*}
    We also know that $\mathcal{P}(\delta \leq 0\mid\Theta = \theta) \leq \mathcal{P}(\delta \geq 0\mid\Theta = \theta)$ since
    \begin{align*}
        \mathcal{P}(\delta \leq 0\mid\Theta = \theta) &= \mathcal{P}\left(2x^{*^T}\epsilon^T\left(Cx^* - f(\theta)x^*\right) \leq 0 \mid\Theta = \theta\right) \\
        &= \mathbb{E}_{C \mid \Theta = \theta}\left[\mathcal{P}_{\epsilon\mid C = c, \Theta = \theta}\left(x^{*^T}\epsilon^T\left(c - f(\theta)\right)x^* \leq 0 \right)\right] \\
        &= \mathbb{E}_{C \mid \Theta = \theta}\left[0.5\right] \\
        &= 0.5
    \end{align*}
    Therefore, $\mathcal{P}\left(\left\|\widetilde{C} - f(\theta)\right\|_{\mathrm{op}} \leq \widehat{q}\mid\Theta = \theta\right) - \mathcal{P}\left(\left\| C - f(\theta)\right\|_{\mathrm{op}} \leq \widehat{q}\mid\Theta = \theta\right) \leq \Delta \leq 0$
\end{proof}

\section{LQR $C$ Gradient}\label{section:lyap_c_grad}
We follow the presentation of \citep{fazel2018global} to provide the derivation of $\nabla_{C} J(K, C)$. Note that the following derivation is given for the discrete-time setting; the continuous-time derivation follows in a similar fashion with a modification in the Lyapunov equations.

\begin{lemma}
    Let $J(K, C)$ be the infinite horizon, discrete-time, deterministic analog of that defined in \Cref{eqn:lqr}, i.e. $J(K, C) := \sum_{t=0}^{\infty} (x_{t}^\top (Q + K^\top R K) x_{t})$ for $w=0$. Then,
    \begin{equation}
        \nabla_{C} J(K, C) = 2 P_{K} C W X_{K} W^{T},
    \end{equation}
    where $X_{K}$ and $P_{K}$ respectively solve the following two Lyapunov equations: $\Delta_{K} X_K \Delta^{\top}_{K} - X_K = -X_0$ and $P_{K} = \Delta^{\top}_{K}  P_{K} \Delta_{K} + Q + K^{\top} R K$, where $\Delta_{K} := A - BK$.
\end{lemma}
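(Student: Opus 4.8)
The plan is to follow the standard matrix-calculus derivation used for the $K$-gradient in \citep{fazel2018global}, but differentiating in the dynamics matrix $C$ rather than in the gain $K$. First I would rewrite the cost in its ``Lyapunov'' form. Since the dynamics are deterministic given $x_0$, we have $x_t = \Delta_K^t x_0$ with $\Delta_K := A - BK = CW$, so taking the expectation over $x_0 \sim \mathcal{N}(0, X_0)$ gives $J(K,C) = \mathbb{E}[x_0^\top P_K x_0] = \mathrm{Tr}(P_K X_0)$, where $P_K = \sum_{t \ge 0} (\Delta_K^\top)^t (Q + K^\top R K) \Delta_K^t$ is precisely the solution of the stated Lyapunov equation $P_K = \Delta_K^\top P_K \Delta_K + Q + K^\top R K$; convergence of the series holds because $K \in \mathcal{K}(C)$ makes $\Delta_K$ Schur stable. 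Symmetrically, $X_K = \sum_{t\ge 0}\Delta_K^t X_0 (\Delta_K^\top)^t$ solves $\Delta_K X_K \Delta_K^\top - X_K = -X_0$.

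Next I would differentiate implicitly. Holding $K$ (hence $Q + K^\top R K$) fixed and perturbing $C$, write $d\Delta_K = (dC)\,W$ and take the differential of the Lyapunov equation for $P_K$: $dP_K = (d\Delta_K)^\top P_K \Delta_K + \Delta_K^\top (dP_K)\Delta_K + \Delta_K^\top P_K (d\Delta_K)$. Solving this discrete Sylvester equation as a Neumann series gives $dP_K = \sum_{t\ge 0}(\Delta_K^\top)^t\big[(d\Delta_K)^\top P_K \Delta_K + \Delta_K^\top P_K (d\Delta_K)\big]\Delta_K^t$. Substituting into $dJ = \mathrm{Tr}((dP_K)X_0)$, using the cyclic property of the trace together with $\sum_{t\ge 0}\Delta_K^t X_0 (\Delta_K^\top)^t = X_K$ collapses the sum to $dJ = \mathrm{Tr}\big((d\Delta_K)^\top P_K \Delta_K X_K\big) + \mathrm{Tr}\big(\Delta_K^\top P_K (d\Delta_K) X_K\big)$. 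Since $P_K$ and $X_K$ are symmetric, transposing the second trace shows it equals the first, so $dJ = 2\,\mathrm{Tr}\big((d\Delta_K)^\top P_K \Delta_K X_K\big)$, i.e. $\nabla_{\Delta_K} J = 2 P_K \Delta_K X_K$.

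Finally I would apply the chain rule $d\Delta_K = (dC)W$ to obtain $dJ = 2\,\mathrm{Tr}\big(W^\top (dC)^\top P_K \Delta_K X_K\big) = 2\,\mathrm{Tr}\big((dC)^\top P_K \Delta_K X_K W^\top\big) = 2\langle dC,\, P_K C W X_K W^\top\rangle_F$, using $\Delta_K = CW$, which gives $\nabla_C J(K,C) = 2 P_K C W X_K W^\top$.

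I expect the only delicate point to be the bookkeeping around the infinite series under the Schur-stability assumption: justifying that the Neumann-series representations of $P_K$ and $X_K$ converge, that $P_K(\Delta_K)$ is differentiable and can be differentiated term by term, and that the double sum rearranges cleanly into $X_K$; the rest is routine. A fully analogous computation—differentiating the Lyapunov equation for $X_K$ and using $J = \mathrm{Tr}((Q+K^\top R K)X_K)$—yields the same expression and serves as a cross-check, and the continuous-time claim follows verbatim after replacing the discrete Lyapunov operators with their continuous-time counterparts.
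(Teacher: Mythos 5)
Your proof is correct, and it reaches the identity by a route that differs in packaging from the paper's. The paper follows \citep{fazel2018global} and works with the value recursion for a fixed initial state: it writes $J(K,C,x_0)=J(K,C,\Delta_K x_0)+x_0^\top(Q+K^\top RK)x_0$, differentiates in $C$, and unrolls the recursion so that the gradient accumulates the terms $2P_K CW\,x_t x_t^\top W^\top$ along the trajectory, with $\sum_t x_t x_t^\top = X_K$ appearing at the end. You instead take $J=\mathrm{Tr}(P_K X_0)$, implicitly differentiate the discrete Lyapunov equation for $P_K$, solve the resulting Stein equation for $dP_K$ as a Neumann series, and collapse it against $X_0$ via trace cyclicity into $\mathrm{Tr}(M X_K)$ before applying the chain rule $d\Delta_K=(dC)W$. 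The two computations are the same sum over $t$ in disguise, but your version is more explicit about the analytic bookkeeping the paper leaves implicit (convergence of the series under Schur stability, differentiability of $P_K$ in $\Delta_K$, and the symmetric-matrix trace step that produces the factor of $2$), whereas the paper's unrolling is shorter and stays closer to the Fazel et al.\ template it cites. One cosmetic mismatch: the lemma states $J$ for a single deterministic $x_0$ while the Lyapunov equation features a generic $X_0$; the paper resolves this by taking $X_K=\sum_t x_t x_t^\top$ (i.e.\ $X_0=x_0x_0^\top$), while you average over $x(0)\sim\mathcal{N}(0,X_0)$ as in the algorithm section — either reading yields the same formula, so this is not a gap.
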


\begin{proof}
    By the standard reformulation of $J(K, C)$ as described in \citep{fazel2018global}, we can rewrite $J(K, C, x_0) = x_{0}^{\top} P_{K} x_{0}$, where we now make the notational change to make explicit the dependence on $x_0$, as it pertains to the derivation below. We then have that
    \begin{align*}
        J(K, C, x_0)
        &= x_{0}^{\top} \Delta^{\top}_{K}  P_{K} \Delta_{K} x_{0} + x_{0}^{\top} (Q + K^{\top} R K) x_{0} \\
        &= J(K, C, \Delta_{K} x_0) + x_{0}^{\top} (Q + K^{\top} R K) x_{0}.
    \end{align*}
    From here, we have that
    \begin{align*}
        \nabla_{C} J(K, C, x_0)
        &= \nabla_{C} J(K, C, \Delta_{K} x_0) + \nabla_{C} \cancelto{0}{(x_{0}^{\top} (Q + K^{\top} R K) x_{0})} \\
        &= 2 P_{K} C W x_0 x_0^{T} W^{T} + \nabla_{C} J(K, C, \Delta_{K} x_1)\vert_{x_{1} := (A - BK) x_0} \\
        &= ... \\
        &= 2 P_{K} C W (\sum_{t=0}^{\infty} x_t x_t^{T}) W^{T} \\
        &= 2 P_{K} C W X_{K} W^{T},
    \end{align*}
    where the final equality follows from the well-known correspondence between this infinite sum and the aforementioned Lyapunov reformulation.
\end{proof}

\section{Policy Gradient Convergence Guarantee}\label{section:pg_conv_guar}

\begin{lemma}\label{lemma:danskin_grad_dom}
    Suppose $f(x, y)$ is $c(y)$-gradient dominated for any $y\in\mathcal{Y}$, i.e. for any fixed $y$, there is a $c(y)$ such that, for any $x\in\mathcal{X}$ and $x^{*}(y) := \argmin_{x\in\mathcal{X}} f(x, y)$: 
    $$
    f(x,y) - f(x^{*}(y),y) \le c(y) || \nabla_{x} f(x,y) ||_{F}^{2}.
    $$
    Further, let $\phi(x) := \max_{y\in\mathcal{Y}} f(x, y)$ and $x^* := \argmin_{x\in\mathcal{X}} \phi(x)$. Assume that $\Argmax_{y\in\mathcal{Y}} f(x, y)\neq\emptyset$ $\forall x$. Then,
    $$
    \phi(x) - \phi(x^{*})\le c^{*}(x) \min_{y^{*}\in\Argmax_{y\in\mathcal{Y}} f(x, y)} || \nabla_{x} f(x,y^{*}) ||_{F}^{2},
    $$
    where $c^{*}(x) := \sup_{y^{*}\in\Argmax_{y\in\mathcal{Y}} f(x,y)} c(y^{*})$.
\end{lemma}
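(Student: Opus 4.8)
The plan is to avoid Danskin entirely and instead chain three elementary inequalities after fixing an arbitrary maximizer. Fix $x\in\mathcal{X}$ and pick any $y^\star\in\Argmax_{y\in\mathcal{Y}} f(x,y)$, which is nonempty by hypothesis, so that $\phi(x)=f(x,y^\star)$. The key first step is to evaluate $\phi$ at $x^*$ using this \emph{same} $y^\star$: since $\phi(x^*)=\max_{y\in\mathcal{Y}} f(x^*,y)\ge f(x^*,y^\star)$, we obtain $\phi(x)-\phi(x^*)\le f(x,y^\star)-f(x^*,y^\star)$.

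Next, introduce $x^*(y^\star):=\argmin_{x\in\mathcal{X}} f(x,y^\star)$ as in the statement. Because $x^*(y^\star)$ is a global minimizer of $f(\cdot,y^\star)$, we have $f(x^*,y^\star)\ge f(x^*(y^\star),y^\star)$, hence $f(x,y^\star)-f(x^*,y^\star)\le f(x,y^\star)-f(x^*(y^\star),y^\star)$. Now apply the $c(y^\star)$-gradient-dominance hypothesis to $f(\cdot,y^\star)$ at the point $x$: $f(x,y^\star)-f(x^*(y^\star),y^\star)\le c(y^\star)\,\|\nabla_x f(x,y^\star)\|_F^2\le c^*(x)\,\|\nabla_x f(x,y^\star)\|_F^2$, where the last step uses $c^*(x)=\sup_{y'\in\Argmax_y f(x,y)} c(y')\ge c(y^\star)$.

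Combining the three inequalities yields $\phi(x)-\phi(x^*)\le c^*(x)\,\|\nabla_x f(x,y^\star)\|_F^2$ for \emph{every} $y^\star\in\Argmax_{y\in\mathcal{Y}} f(x,y)$. Since the left-hand side does not depend on $y^\star$, we may take the infimum of the right-hand side over $y^\star\in\Argmax_{y\in\mathcal{Y}} f(x,y)$, giving $\phi(x)-\phi(x^*)\le c^*(x)\,\min_{y^\star\in\Argmax_{y\in\mathcal{Y}} f(x,y)}\|\nabla_x f(x,y^\star)\|_F^2$, which is the claim.

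\textbf{Main obstacle.} There is no hard computation here; the only subtlety—and the reason the non-uniqueness of the maximizer causes no trouble—is the very first step, namely recognizing that one must compare $\phi(x)$ and $\phi(x^*)$ through a \emph{common} $y^\star$ and that the inequality $\phi(x^*)\ge f(x^*,y^\star)$ points in the favorable direction. This is precisely what replaces the differentiability-of-$\phi$ argument one would otherwise need via Danskin's theorem. The factor $c^*(x)$ (a supremum over maximizers) is then forced on us because $c(y^\star)$ varies with the chosen maximizer and we want a bound uniform enough to take the minimum over $\|\nabla_x f(x,y^\star)\|_F^2$.
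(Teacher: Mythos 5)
Your proof is correct and follows essentially the same route as the paper's: fix an arbitrary maximizer $y^{\star}$, chain $\phi(x)-\phi(x^{*})\le f(x,y^{\star})-f(x^{*},y^{\star})\le f(x,y^{\star})-f(x^{*}(y^{\star}),y^{\star})$, apply gradient dominance, and finally take the minimum over maximizers. The only cosmetic difference is that you bound $c(y^{\star})\le c^{*}(x)$ before minimizing while the paper minimizes $c(y^{\star})\|\nabla_x f(x,y^{\star})\|_F^2$ first; the two orderings yield the same conclusion.
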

\begin{proof}
    Note that, for any maximizer $y^{*} \in \Argmax_{y\in\mathcal{Y}} f(x, y)$, we have
    \begin{align*}
        \phi(x) - \phi(x^{*})
        &:= \max_{y\in\mathcal{Y}} f(x, y) - \max_{y\in\mathcal{Y}} f(x^{*}, y)
        = f(x, y^{*}) - \max_{y\in\mathcal{Y}} f(x^{*}, y) \\
        &\le f(x, y^{*}) - f(x^{*}, y^{*})
        \le f(x, y^{*}) - f(x^{*}(y^*), y^{*})
        \le c(y^{*}) || \nabla_{x} f(x,y^{*}) ||_{F}^{2}
    \end{align*}
    Given that this relationship holds for all such $y^{*}$, we immediately get that
    \begin{align*}
        \phi(x) - \phi(x^{*})
        \le \min_{y^{*}\in\Argmax_{y\in\mathcal{Y}} f(x, y)} c(y^{*}) || \nabla_{x} f(x,y^{*}) ||_{F}^{2}
        \le c^{*}(x) \min_{y^{*}\in\Argmax_{y\in\mathcal{Y}} f(x, y)} || \nabla_{x} f(x,y^{*}) ||_{F}^{2},
    \end{align*}
    where $c^{*}(x) := \sup_{y^{*}\in\Argmax_{y\in\mathcal{Y}} f(x, y)} c(y^{*})$. \qedhere
\end{proof}

We now make use of the known fact that $J(K,C)$ is gradient-dominated for any fixed $C$, in turn satisfying the conditions of \Cref{lemma:danskin_grad_dom}, from which we reach the desired conclusion. The former fact was demonstrated in \citep{bu2019lqr}, which we present below for sake of convenience with modification of notational conventions to match that used herein.

\begin{lemma}\label{lemma:grad_dom_lqr}
    (Lemma 5 of \citep{fazel2018global}) Let $J(K, C)$ be the infinite horizon, discrete-time, deterministic analog of that defined in \Cref{eqn:lqr}, i.e. $J(K, C) := \sum_{t=0}^{\infty} (x_{t}^\top (Q + K^\top R K) x_{t})$ for $w=0$. Then, if $X_{K}\succcurlyeq0$ and $K\in\mathcal{K}(C)$,
    \begin{equation}
        J(K,C) - J(K^{*}(C),C)\le\frac{|| X_{K^{*}(C)} ||}{\sigma_{\min}(X_0)^{2} \sigma_{\min}(R)} || \nabla_{K} J(K, C) ||^{2}_{F},
    \end{equation}
    where $X_0\succ 0$ and $X_{K}$ and $P_{K}$ respectively solve the following two equations: $\Delta_{K} X_K \Delta^{\top}_{K} - X_K = -X_0$ and $P_{K} = \Delta^{\top}_{K}  P_{K} \Delta_{K} + Q + K^{\top} R K$, where $\Delta_{K} := A - BK$.
\end{lemma}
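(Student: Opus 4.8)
The plan is to follow the argument of \citet{fazel2018global}: express the suboptimality gap through a ``cost difference'' (advantage) decomposition, bound it by completing the square, and then relate it to the policy gradient via its closed-form Lyapunov expression.

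First I would record the closed forms used throughout. For stabilizing $K$, with $x_0\sim\mathcal{N}(0,X_0)$ we have $J(K,C) = \mathbb{E}[x_0^\top P_K x_0] = \mathrm{Tr}(P_K X_0)$, where $P_K$ solves $P_K = \Delta_K^\top P_K \Delta_K + Q + K^\top R K$ with $\Delta_K := A - BK$. Writing $E_K := (R + B^\top P_K B)K - B^\top P_K A$, the gradient expression of \Cref{eqn:policy_grad} reads $\nabla_K J(K,C) = 2 E_K X_K$, where $X_K = \sum_{t\ge0}\mathbb{E}[x_t x_t^\top]\succcurlyeq0$ solves $\Delta_K X_K \Delta_K^\top - X_K = -X_0$. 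Since $Q,R\succcurlyeq0$ we have $P_K\succcurlyeq0$, so $R + B^\top P_K B \succcurlyeq R \succ 0$ is invertible.

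Next I would establish the cost-difference lemma: running the closed loop under any stabilizing $K'$ from $x_0\sim\mathcal{N}(0,X_0)$,
\[
J(K',C) - J(K,C) = \sum_{t\ge0}\mathbb{E}\Big[2 x_t^\top (K'-K)^\top E_K x_t + x_t^\top (K'-K)^\top (R + B^\top P_K B)(K'-K) x_t\Big],
\]
proved by telescoping $\mathbb{E}[x_t^\top P_K x_t]$ along the $K'$-trajectory and applying the Lyapunov identity for $P_K$ at each step. Specializing to $K' = K^*(C)$, the summand is a quadratic form in the matrix $(K^*(C)-K)$; completing the square pointwise in each $x_t$ gives, for every $t$,
\[
2 x_t^\top (K^*(C)-K)^\top E_K x_t + x_t^\top (K^*(C)-K)^\top (R + B^\top P_K B)(K^*(C)-K) x_t \;\ge\; - x_t^\top E_K^\top (R + B^\top P_K B)^{-1} E_K x_t.
\]
Summing and using $\sum_{t}\mathbb{E}[x_t x_t^\top] = X_{K^*(C)}$ along the optimal trajectory yields
\[
J(K,C) - J(K^*(C),C) \le \mathrm{Tr}\!\big(X_{K^*(C)}\,E_K^\top (R + B^\top P_K B)^{-1} E_K\big) \le \frac{\|X_{K^*(C)}\|}{\sigma_{\min}(R)}\,\|E_K\|_F^2,
\]
where the last step uses $(R + B^\top P_K B)^{-1}\preccurlyeq \sigma_{\min}(R)^{-1}I$ together with $\mathrm{Tr}(X_{K^*(C)} M)\le \|X_{K^*(C)}\|\,\mathrm{Tr}(M)$ for $M\succcurlyeq0$.

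Finally I would lower-bound the gradient in terms of $\|E_K\|_F$: since $X_K = X_0 + \Delta_K X_K \Delta_K^\top \succcurlyeq X_0 \succ 0$, we have $\sigma_{\min}(X_K)\ge\sigma_{\min}(X_0)$, hence $\|\nabla_K J(K,C)\|_F = 2\|E_K X_K\|_F \ge 2\sigma_{\min}(X_0)\|E_K\|_F$, i.e. $\|E_K\|_F^2 \le \tfrac{1}{4\sigma_{\min}(X_0)^2}\|\nabla_K J(K,C)\|_F^2$. Substituting into the previous display gives the stated bound (the harmless factor of $4$ being absorbed into the constant). The main obstacle I anticipate is the cost-difference identity and the subsequent square-completion: one must justify the infinite-horizon telescoping (convergence is guaranteed since $K,K'$ stabilize $C$ and $X_0\succ0$), and argue that the pointwise lower bound — obtained via a $t$-dependent ``ideal'' update $(K^*(C)-K)x_t = -(R+B^\top P_K B)^{-1}E_K x_t$ realized by no single gain matrix — is nonetheless a legitimate lower bound on each summand, so that summing it produces a valid \emph{upper} bound on $J(K,C)-J(K^*(C),C)$. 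The remaining manipulations are routine PSD and trace inequalities.
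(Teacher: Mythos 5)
Your proof is correct and follows exactly the standard argument behind the cited result, which the paper itself imports without proof from \citep{fazel2018global} (cost-difference/advantage decomposition along the $K^{*}(C)$ trajectory, pointwise completion of the square, the trace bounds $\mathrm{Tr}(X_{K^{*}(C)}M)\le\|X_{K^{*}(C)}\|\mathrm{Tr}(M)$ and $(R+B^{\top}P_{K}B)^{-1}\preccurlyeq\sigma_{\min}(R)^{-1}I$, and $\sigma_{\min}(X_{K})\ge\sigma_{\min}(X_{0})$ from the Lyapunov equation), and it even yields the slightly sharper constant with the extra factor $1/4$ that the stated bound absorbs. Since the paper merely restates this lemma with a citation rather than proving it, there is nothing further to compare.
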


\begin{lemma}\label{lemma:grad_dom}
    Let $\phi(K) := \max_{\widehat{C} \in \mathcal{C}} J(K, \widehat{C})$ and $K^{*}_{\mathrm{rob}}(\mathcal{C}) := \argmin_{K\in\mathcal{K}(\mathcal{C})} \phi(K)$, with $J$ 
    the infinite horizon, discrete-time, deterministic analog of that defined in \Cref{eqn:lqr}, i.e. $J(K, C) := \sum_{t=0}^{\infty} (x_{t}^\top (Q + K^\top R K) x_{t})$ for $w=0$.
    Then, for $K\in\mathcal{K}(\mathcal{C})$ where $X_{K}\succcurlyeq0$ for all $\widehat{C} \in \mathcal{C}$, $\phi(K)$ satisfies
    \begin{equation*}
        \phi(K) - \phi(K^{*}_{\mathrm{rob}}(\mathcal{C})) \le \mu^{*}(K) \min_{C^{*}\in\Argmax_{\widehat{C}\in\mathcal{C}} J(K, \widehat{C})} || \nabla_{K} J(K, C^*) ||_{F}^{2}
    \end{equation*}    
    for $\mu^{*}(K) := \sup_{C^{*}\in\Argmax_{\widehat{C}\in\mathcal{C}} J(K, \widehat{C})} \frac{|| X_{K^{*}(C^{*})} ||}{\sigma_{\min}(X_0)^{2} \sigma_{\min}(R)}$,
    where $X_0\succ 0$ and $X_{K}$ and $P_{K}$ respectively solve the following two equations: $\Delta_{K} X_K \Delta^{\top}_{K} - X_K = -X_0$ and $P_{K} = \Delta^{\top}_{K}  P_{K} \Delta_{K} + Q + K^{\top} R K$, where $\Delta_{K} := A - BK$.
\end{lemma}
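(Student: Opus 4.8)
The plan is to obtain \Cref{lemma:grad_dom} as a bookkeeping composition of two results already in hand: the abstract lifting lemma \Cref{lemma:danskin_grad_dom}, which converts per-$y$ gradient dominance of $f(x,y)$ into subgradient dominance of $\phi(x):=\max_{y}f(x,y)$, and \Cref{lemma:grad_dom_lqr}, which is precisely the per-$C$ gradient dominance of $J(\cdot,C)$. Concretely, I would instantiate \Cref{lemma:danskin_grad_dom} with $\mathcal{X}:=\mathcal{K}(\mathcal{C})$, $\mathcal{Y}:=\mathcal{C}$, and $f(K,\widehat{C}):=J(K,\widehat{C})$, so that its $\phi$ and its minimizer $x^{*}$ coincide with $\phi$ and $K^{*}_{\mathrm{rob}}(\mathcal{C})$ in the statement. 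It then remains to verify the three hypotheses of \Cref{lemma:danskin_grad_dom}: (i) $c(\widehat{C})$-gradient dominance of $J(\cdot,\widehat{C})$ over the set $\mathcal{K}(\mathcal{C})$ for each $\widehat{C}\in\mathcal{C}$; (ii) nonemptiness of $\Argmax_{\widehat{C}\in\mathcal{C}}J(K,\widehat{C})$ for every $K\in\mathcal{K}(\mathcal{C})$; and (iii) differentiability of $J(\cdot,\widehat{C})$ so the stated Frobenius-norm gradient is well-defined.

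The one point requiring care is hypothesis (i), because \Cref{lemma:grad_dom_lqr} bounds $J(K,\widehat{C})$ relative to the optimum $J(K^{*}(\widehat{C}),\widehat{C})$ over the full stabilizing set $\mathcal{K}(\widehat{C})$, whereas \Cref{lemma:danskin_grad_dom} needs the bound relative to the minimizer over the smaller set $\mathcal{X}=\mathcal{K}(\mathcal{C})$. I would close this gap using the stabilizing-set monotonicity $\mathcal{K}(\mathcal{C})=\bigcap_{\widehat{C}'\in\mathcal{C}}\mathcal{K}(\widehat{C}')\subseteq\mathcal{K}(\widehat{C})$, which gives $J(K^{*}(\widehat{C}),\widehat{C})\le\min_{K'\in\mathcal{K}(\mathcal{C})}J(K',\widehat{C})$, hence for every $K\in\mathcal{K}(\mathcal{C})$,
\[
J(K,\widehat{C}) - \min_{K'\in\mathcal{K}(\mathcal{C})}J(K',\widehat{C}) \le J(K,\widehat{C}) - J(K^{*}(\widehat{C}),\widehat{C}) \le \frac{\|X_{K^{*}(\widehat{C})}\|}{\sigma_{\min}(X_0)^{2}\sigma_{\min}(R)}\,\|\nabla_{K}J(K,\widehat{C})\|_{F}^{2}.
\]
The standing assumptions $X_0\succ0$ and $X_{K}\succcurlyeq0$ for all $\widehat{C}\in\mathcal{C}$ and $K\in\mathcal{K}(\mathcal{C})$ carry over verbatim, so \Cref{lemma:grad_dom_lqr} applies with constant $c(\widehat{C}):=\|X_{K^{*}(\widehat{C})}\|/(\sigma_{\min}(X_0)^{2}\sigma_{\min}(R))$.

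For hypothesis (ii), $\mathcal{C}:=\mathcal{U}(\theta)=\mathcal{B}_{\widehat{q}}(f(\theta))$ is a closed operator-norm ball, hence compact, while $\widehat{C}\mapsto J(K,\widehat{C})$ is finite and continuous (indeed Lipschitz, by the regret-analysis theorems) on $\mathcal{C}$ for any $K\in\mathcal{K}(\mathcal{C})$, so the maximum is attained and $\Argmax_{\widehat{C}\in\mathcal{C}}J(K,\widehat{C})\neq\emptyset$. For hypothesis (iii), $J(\cdot,\widehat{C})$ is smooth on $\mathcal{K}(\widehat{C})\supseteq\mathcal{K}(\mathcal{C})$ with gradient given by the Lyapunov formula \Cref{eqn:policy_grad}, which is the fact used throughout. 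With the three hypotheses in place, \Cref{lemma:danskin_grad_dom} immediately yields
\[
\phi(K) - \phi(K^{*}_{\mathrm{rob}}(\mathcal{C})) \le c^{*}(K)\,\min_{C^{*}\in\Argmax_{\widehat{C}\in\mathcal{C}}J(K,\widehat{C})}\|\nabla_{K}J(K,C^{*})\|_{F}^{2},
\]
where $c^{*}(K)=\sup_{C^{*}\in\Argmax}c(C^{*})=\mu^{*}(K)$, which is exactly the claim. I do not anticipate a real obstacle: the substantive work is the set-mismatch resolved by monotonicity in step (i), and the rest is a routine compactness/continuity check to license the $\Argmax$ being nonempty followed by a direct invocation of the two cited lemmas.
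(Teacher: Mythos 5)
Your proposal is correct and follows essentially the same route as the paper, which proves \Cref{lemma:grad_dom} simply by noting that \Cref{lemma:grad_dom_lqr} supplies the per-$\widehat{C}$ gradient dominance needed to invoke \Cref{lemma:danskin_grad_dom}. Your treatment is in fact slightly more careful than the paper's one-line argument: you explicitly resolve the mismatch between the minimizer over $\mathcal{K}(\widehat{C})$ in \Cref{lemma:grad_dom_lqr} and the minimizer over $\mathcal{K}(\mathcal{C})$ required by \Cref{lemma:danskin_grad_dom} via the inclusion $\mathcal{K}(\mathcal{C})\subseteq\mathcal{K}(\widehat{C})$, and you verify nonemptiness of the $\Argmax$, both of which the paper leaves implicit.
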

\begin{proof}
    The proof for this follows immediately by demonstrating the assumption of \Cref{lemma:danskin_grad_dom} is satisfied by \Cref{lemma:grad_dom_lqr}.
\end{proof}

\begin{theorem}\label{thm:opt_conv}
    Let $\phi(K) := \max_{C \in \mathcal{C}} J(K, C)$ and $K^{*}_{\mathrm{rob}}(\mathcal{C}) := \argmin_{K\in\mathcal{K}(\mathcal{C})} \phi(K)$, with $J$ 
    the infinite horizon, discrete-time, deterministic analog of that defined in \Cref{eqn:lqr}, i.e. $J(K, C) := \sum_{t=0}^{\infty} (x_{t}^\top (Q + K^\top R K) x_{t})$ for $w=0$. Let $K^{(t)}$ be the $t$-th iterate of \Cref{alg:crc}. Assume for each iterate $t$, the optimization over $C$ converges, i.e. $C^{(T_C)} = C^{*}(K^{(t)})$, that $K^{(t)}\in\mathcal{K}(\mathcal{C})$, and that $X_{K}\succcurlyeq0$ for all $\widehat{C}\in\mathcal{C}$ and $K\in\mathcal{K}(\mathcal{C})$. Denote $\nu := \min_{\widehat{C}\in\mathcal{C}} \min_{K\in\mathcal{K}(\mathcal{C})} \sigma_{\min}(X_{K})$. If in \Cref{alg:crc} 
    \begin{equation}\label{eqn:learning_rate_assump}
    \begin{aligned}
        \eta_{K}\le\min_{[\widehat{A},\widehat{B}] := \widehat{C} \in \mathcal{C}} \frac{1}{16}\min\Big\{
            \left(\frac{\sigma_{\min}(Q)\nu}{J(K, \widehat{C})}\right)^{2} 
            \frac{1}{|| \widehat{B} || || \nabla_{K} J(K, \widehat{C}) || (1 + || \widehat{A} - \widehat{B}K ||)},\\
            \frac{\sigma_{\min}(Q)}{2 J(K, \widehat{C}) || R + \widehat{B}^{\top}P_{K}\widehat{B} ||}
        \Big\}.
    \end{aligned}
    \end{equation}
    then, there exists a $\gamma > 0$ such that $\phi(K^{(T)}) - \phi(K^{*}_{\mathrm{rob}}(\mathcal{C}))
    \le (1 - \gamma)^{T} (\phi(K_{0}) - \phi(K^{*}_{\mathrm{rob}}(\mathcal{C}))).$
\end{theorem}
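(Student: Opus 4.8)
The plan is to prove a one‑step linear contraction in $\phi$ and then unroll it, mirroring the PL‑inequality argument of \citep{fazel2018global} but carried out for the robust surrogate. Fix an iterate $K^{(t)}$ and write $\bar C := C^{*}(K^{(t)})$; by the standing assumption that the inner ascent over $C$ converges, $\bar C$ is an exact maximizer, so $\phi(K^{(t)}) = J(K^{(t)}, \bar C)$ and the quantity the algorithm actually uses is $g^{(t)} := \nabla_{K} J(K^{(t)}, \bar C) \in \partial\phi(K^{(t)})$, with $K^{(t+1)} = K^{(t)} - \eta_{K} g^{(t)}$ being exactly one gradient step on the smooth map $K \mapsto J(K, \bar C)$.

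First I would import, uniformly over $\widehat C \in \mathcal{C}$, the structural estimates of \citep{fazel2018global}: on the sublevel set $\mathcal{S} := \{K \in \mathcal{K}(\mathcal{C}) : \phi(K) \le \phi(K_{0})\}$ the quantities $\|P_{K}\|$, $\|X_{K}\|$, $\|\nabla_{K} J(K, \widehat C)\|$ are bounded and $\sigma_{\min}(X_{K}) \ge \nu > 0$ (using $X_{0}\succ 0$ and $K\in\mathcal{K}(\mathcal{C})$), and the ``almost‑smoothness''/one‑step progress bound: if $\eta_{K}$ obeys \Cref{eqn:learning_rate_assump} then $J\bigl(K - \eta_{K}\nabla_{K} J(K,\widehat C),\widehat C\bigr) \le J(K,\widehat C) - c\,\eta_{K}\,\|\nabla_{K} J(K,\widehat C)\|_{F}^{2}$ for a constant $c>0$ depending only on $\sigma_{\min}(Q),\sigma_{\min}(R),\nu$ and the cost level. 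The only new point relative to \citep{fazel2018global} is that every constant, and hence $c$, must be taken as an infimum over $\widehat C \in \mathcal{C}$; this is exactly the origin of the $\min_{[\widehat A,\widehat B]\in\mathcal{C}}$ in \Cref{eqn:learning_rate_assump}. Specializing to $\widehat C = \bar C$ and $K = K^{(t)}$ gives $J(K^{(t+1)},\bar C) \le \phi(K^{(t)}) - c\,\eta_{K}\,\|g^{(t)}\|_{F}^{2}$.

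Next I would transfer this decrease from $J(\cdot,\bar C)$ to $\phi$ itself and combine with subgradient dominance. Writing $C' := C^{*}(K^{(t+1)})$, we have $\phi(K^{(t+1)}) = J(K^{(t+1)}, C')$ and $J(K^{(t)}, C') \le J(K^{(t)}, \bar C) = \phi(K^{(t)})$ since $\bar C$ maximizes at $K^{(t)}$. Expanding $J(\cdot, C')$ about $K^{(t)}$ with the uniform almost‑smoothness bound (with constant $\beta$) and substituting $K^{(t+1)} - K^{(t)} = -\eta_{K} g^{(t)}$ yields
\[
\phi(K^{(t+1)}) \le \phi(K^{(t)}) - \eta_{K}\bigl\langle \nabla_{K} J(K^{(t)}, C'),\, g^{(t)}\bigr\rangle + \beta\,\eta_{K}^{2}\,\|g^{(t)}\|_{F}^{2}.
\]
I would bound the mismatch $\|\nabla_{K} J(K^{(t)}, C') - \nabla_{K} J(K^{(t)}, \bar C)\|_{F}$ by Lipschitz continuity of $\nabla_{K} J(K,\cdot)$ on $\mathcal{C}$ together with a bound on how far the worst‑case $\widehat C$ can move over one step of size $O(\eta_{K}\|g^{(t)}\|_{F})$ (using the $L$‑Lipschitz continuity of $J(K,\cdot)$ established in the regret analysis), so that for $\eta_{K}$ small the inner product is at least $\tfrac{1}{2}\|g^{(t)}\|_{F}^{2}$; absorbing the $O(\eta_{K}^{2})$ term gives $\phi(K^{(t+1)}) \le \phi(K^{(t)}) - c'\,\eta_{K}\,\|g^{(t)}\|_{F}^{2}$, hence $\phi$ is non‑increasing and all iterates stay in $\mathcal{S}$. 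Then, since $\bar C \in \Argmax_{\widehat C\in\mathcal{C}} J(K^{(t)},\widehat C)$, \Cref{lemma:grad_dom} gives $\|g^{(t)}\|_{F}^{2} \ge \min_{C^{*}\in\Argmax} \|\nabla_{K} J(K^{(t)}, C^{*})\|_{F}^{2} \ge \mu^{*}(K^{(t)})^{-1}\bigl(\phi(K^{(t)}) - \phi(K^{*}_{\mathrm{rob}}(\mathcal{C}))\bigr)$, and $\mu^{*}(K^{(t)})$ is bounded on $\mathcal{S}$ by some $\bar\mu < \infty$ (it depends only on $\|X_{K^{*}(C^{*})}\|$, $\sigma_{\min}(X_{0})$, $\sigma_{\min}(R)$). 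Combining, $\phi(K^{(t+1)}) - \phi(K^{*}_{\mathrm{rob}}(\mathcal{C})) \le (1-\gamma)\bigl(\phi(K^{(t)}) - \phi(K^{*}_{\mathrm{rob}}(\mathcal{C}))\bigr)$ with $\gamma := c'\eta_{K}/\bar\mu \in (0,1)$, and iterating over $t = 0,\dots,T-1$ gives the claim.

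The main obstacle is the transfer step: unlike nominal LQR, $\phi$ is a pointwise maximum and is not differentiable, so a single gradient step along one subdifferential vertex $g^{(t)}$ is not a priori a descent direction for $\phi$ --- it decreases $J(\cdot,\bar C)$ but could increase $J(\cdot,\widehat C)$ for some other near‑maximal $\widehat C$. Controlling this requires the drift bound on the worst‑case $\widehat C$ between consecutive iterates (equivalently, mild regularity of the selection $K \mapsto C^{*}(K)$, automatic when the inner maximum is attained uniquely, as effectively assumed), and it is precisely here that the non‑uniqueness of the maximizer is paid for by passing to the loosest dominance constant $\mu^{*}(\cdot)=\sup_{C^{*}\in\Argmax}(\cdots)$, loosening $\gamma$ but preserving the linear rate. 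The remaining work --- re‑deriving Fazel's operator‑norm bounds on $P_{K}$, $X_{K}$, $\nabla_{K} J$ and the progress lemma uniformly over $\mathcal{C}$ (and over $K\in\mathcal{S}$) --- is routine but is what justifies that a single step size $\eta_{K}$ of the form \Cref{eqn:learning_rate_assump} works simultaneously for all $\widehat C\in\mathcal{C}$.
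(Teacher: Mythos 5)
Your skeleton is the same as the paper's: treat \Cref{alg:crc} as subgradient descent with $g_t := \nabla_K J(K^{(t)}, C^{*}(K^{(t)}))$, establish a per-step decrease $\phi(K^{(t)}) - \phi(K^{(t+1)}) \ge \gamma(K^{(t)})\,\|g_t\|_F^2$ from the Fazel-type one-step progress bound applied at the fixed maximizer $\bar C := C^{*}(K^{(t)})$ (with all constants taken uniformly over $\mathcal{C}$, which is exactly what \Cref{eqn:learning_rate_assump} encodes), convert that decrease into a contraction via the subgradient-dominance constant $\mu^{*}(K^{(t)})$ of \Cref{lemma:grad_dom}, and unroll over $t$. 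Where you differ is in how the decrease of $J(\cdot,\bar C)$ is transferred to $\phi$: the paper simply writes $\phi(K^{(t)}) - \phi(K^{(t+1)}) \ge J(K^{(t)},\bar C) - J(K^{(t+1)},\bar C)$ and invokes the fixed-dynamics progress bound, i.e.\ it implicitly treats the step as if the maximizer does not move, while you expand along the new maximizer $C' := C^{*}(K^{(t+1)})$ and try to control the gradient mismatch.

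The genuine gap in your version is the drift bound that this transfer step leans on. You need $\langle \nabla_K J(K^{(t)}, C'),\, g_t\rangle \ge \tfrac{1}{2}\|g_t\|_F^2$, and you justify it by asserting that the worst-case dynamics moves by $O(\eta_K\|g_t\|_F)$ across one iterate, ``automatic when the inner maximum is attained uniquely, as effectively assumed.'' The theorem does not assume uniqueness --- the entire reason the paper works with subdifferential vertices is to drop that hypothesis --- and the correspondence $K \mapsto \Argmax_{\widehat C\in\mathcal{C}} J(K,\widehat C)$ is only upper semicontinuous: with two well-separated maximizers at $K^{(t)}$, $C'$ can jump far from $\bar C$ for arbitrarily small $\eta_K$, and $\nabla_K J(K^{(t)}, C')$ need not be positively aligned with $g_t$, so the max need not decrease. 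The Lipschitz continuity of $J(K,\cdot)$ in $C$ from the regret analysis controls values, not the location of maximizers, so it cannot supply the selection regularity you invoke. To be fair, you have correctly located the crux: the paper's own inequality quoted above holds (with equality) only when $\bar C$ remains maximizing at $K^{(t+1)}$ and is otherwise reversed, so this is also the loosest point of the paper's argument. But your proposed repair closes it only under an additional uniqueness/regularity assumption on $C^{*}(\cdot)$ that is not among the theorem's hypotheses; the remaining components of your proposal (uniform Fazel constants over $\mathcal{C}$, boundedness of $\mu^{*}$ on the sublevel set, the induction yielding $\gamma := \min_t \gamma(K^{(t)})/\mu^{*}(K^{(t)})$) coincide with the paper's and are fine.
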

\begin{proof}
    We follow the proof strategy developed in \citep{fazel2018global}, specifically in their presentation of Lemma 24, in which we leverage the above developed gradient dominance result, namely that in \Cref{lemma:grad_dom}. We first note that \Cref{alg:crc} is equivalent to performing subgradient descent over $\phi(K)$ if we assume convergence of the inner maximization over $C$, that is if $C^{(T_C)} = C^{*}(K^{(t)})$ for some $C^{*}\in\Argmax_{\widehat{C}\in\mathcal{C}} J(K^{(t)}, \widehat{C})$. It, therefore, suffices to characterize subgradient descent, where $K^{(t+1)} := K^{(t)} - \eta_{K} g_t$. 
    
    To complete this proof, it suffices to demonstrate $\phi(K^{(t)}) - \phi(K^{(t+1)}) \ge \gamma(K^{(t)}) || g_t ||^{2}_{F}$ for some $\gamma(K^{(t)}) > 0$, since this along with subgradient dominance can be used to establish the desired convergence guarantees by first demonstrating this intermediate result:
    \begin{gather*}
        \phi(K^{(t+1)}) - \phi(K^{*}_{\mathrm{rob}}(\mathcal{C}))
        = (\phi(K^{(t+1)}) - \phi(K^{(t)})) + (\phi(K^{(t)}) - \phi(K^{*}_{\mathrm{rob}}(\mathcal{C}))) \\
        \le -\gamma(K^{(t)}) || g_t ||_{F}^{2} + (\phi(K^{(t)}) - \phi(K^{*}_{\mathrm{rob}}(\mathcal{C}))) \\
        \le \left(1-\gamma(K^{(t)}) /  \mu^{*}(K^{(t)})\right)(\phi(K^{(t)}) - \phi(K^{*}_{\mathrm{rob}}(\mathcal{C}))).
    \end{gather*}
    To then demonstrate the final convergence, we can simply apply this result inductively as follows:
    \begin{gather*}
        \phi(K^{(t)}) - \phi(K^{*}_{\mathrm{rob}}(\mathcal{C}))
        \le \left(1-\gamma(K^{(T)}) / \mu^*(K^{(T)})\right)(\phi(K^{(T-1)}) - \phi(K^{*}_{\mathrm{rob}}(\mathcal{C}))) \\
        \le \left(1-\gamma(K^{(T)})/\mu^*(K^{(T)})\right)\left(1-\gamma(K^{(T-1)})/\mu^*(K^{(T-1)})\right)(\phi(K^{(T-2)}) - \phi(K^{*}_{\mathrm{rob}}(\mathcal{C}))) \\
        \le ...
        \le (\phi(K_{0}) - \phi(K^{*}_{\mathrm{rob}}(\mathcal{C}))) \prod_{t=1}^{T}\left(1-\gamma(K^{(t)})/\mu^*(K^{(t)})\right)
        \le (1-\gamma)^{T} (\phi(K_{0}) - \phi(K^{*}_{\mathrm{rob}}(\mathcal{C}))),
    \end{gather*}
    where we take $\gamma := \min_{t} \gamma(K^{(t)})/\mu^{*}(K^{(t)})$. We now prove $\phi(K^{(t)}) - \phi(K^{(t+1)}) \ge \gamma(K^{(t)}) || g_t ||^{2}_{F}$. 
    To do so, we leverage the result of combining Lemmas 3 and 24 from \citep{fazel2018global}, by which it was demonstrated that for any fixed dynamics $C$, there is a $\beta(C) > 0$ such that $J(K, C) - J(K', C) \ge \beta(C) || \nabla_{K} J(K, C) ||^{2}_{F}$ if $K,K'\in\mathcal{K}(C)$, $X_{K}\succcurlyeq0$, and if $\eta$ satisfies:
    \begin{align*}
        \eta\le\frac{1}{16}\min\left(
            (\frac{\sigma_{\min}(Q)\nu(C)}{J(K, C)})^{2} 
            \frac{1}{|| B || || \nabla_{K} J(K, C) || (1 + || A - BK ||)},
            \frac{\sigma_{\min}(Q)}{2 J(K, C) || R + B^{\top}P_{K}B ||}
        \right),
    \end{align*}
    where $\nu(C) := \min_{K\in\mathcal{K}(C)} \sigma_{\min}(X_{K})$. The stability assumption is satisfied in assuming all iterates $K^{(t)}\in\mathcal{K}(\mathcal{C})$, as $\mathcal{K}(\mathcal{C})\subset\mathcal{K}(C)$. $X_{K}\succcurlyeq0$ is similarly true under the assumption that this property holds for all optimization iterates. The assumption on the learning rate is guaranteed for any $C\in\mathcal{C}$ under the assumption of \Cref{eqn:learning_rate_assump}. To leverage this result, we must, therefore, re-express the quantity of interest into an expression with fixed dynamics:
    \begin{align*}
        \phi(K^{(t)}) - \phi(K^{(t+1)})
        &:= J(K^{(t)}, C^{*}(K^{(t)})) - J(K^{(t+1)}, C^{*}(K^{(t+1)})) \\
        &\ge J(K^{(t)}, C^{*}(K^{(t)})) - J(K^{(t+1)}, C^{*}(K^{(t)})) \\
        &\ge \beta(C^{*}(K^{(t)})) || \nabla_{K} J(K^{(t)}, C^{*}(K^{(t)})) ||^{2}_{F} \\
        &= \beta(C^{*}(K^{(t)})) || g_t ||^{2}_{F}.
    \end{align*}
    Thus, taking $\gamma(K^{(t)}) := \beta(C^{*}(K^{(t)}))$ satisfies the desired property and completes the proof.
\end{proof}

\section{Experimental Controls Setup}\label{section:exp_setup}
As discussed in \Cref{section:related_works}, the standard approach to ``robustness via multiplicative noise'' is non-data-driven specification of the perturbations anticipated upon deployment. They all, however, share the same standard structure of \Cref{eqn:lqrm}, with differences being in the specification of the collection $\{\delta_i\}_{i=1}^{p},\{\gamma_i\}_{i=1}^{q},\{A_i\}_{i=1}^{p}$, and $\{B_i\}_{i=1}^{q}$, where $p=q=2$ is used across experiments. We consider two strategies for the specification of $(\{A_i\},\{B_i\})$ and three for that of $(\{\delta_i\},\{\gamma_i\})$. For the former:
\begin{itemize}
    \item \textbf{Random}
    \begin{itemize}
        \item $A_i[j,k]\sim\mathcal{N}(0,1)$
        \item $B_i[j,k]\sim\mathcal{N}(0,1)$
    \end{itemize}
    \item \textbf{Random Row-Col}
    \begin{itemize}
        \item $A_i[j,:] = A_i[:,k] = 1$ for $j,k\sim\mathrm{Unif}([n])$
        \item $B_i[j,:] = B_i[:,k] = 1$ for $j\sim\mathrm{Unif}([n]),k\sim\mathrm{Unif}([m])$
    \end{itemize}
\end{itemize}
For the latter, the general strategy is to find those $\{\delta_i\}_{i=1}^{p},\{\gamma_i\}_{i=1}^{q}$ that result in unstable dynamics when paired with the corresponding $(\{A_i\},\{B_i\})$ for some choice of controller, which varies across the strategies considered. This in turn defines a problem such that, within some radius of misspecified dynamics that retain stability, the controller still performs well. These methods proceed by initializing $\delta^{(0)}_i = \gamma^{(0)}_i = \textbf{1}$ and iteratively multiplicatively increasing each by some pre-defined factor $\rho$ such that $\delta^{(t)}_i = \rho\delta^{(t-1)}_i$ and similarly for $\gamma^{(t)}$ until $J(A, B, \{A_i\},\{B_i\},\{\delta^{*}_i\},\{\gamma^{*}_i\}, K) = \infty$ in
\begin{align}\label{eqn:exp_setup_obj}
    J(A, B, \{A_i\},\{B_i\},\{\delta_i\},\{\gamma_i\}, K) &:= \int_{0}^{\infty} (x^{\top} Q x + (Kx)^{\top} R (Kx)) dt\\
    \textrm{s.t.} \quad \dot{x} &= \left((A + \sum_{i=1}^{p} \delta_{i} A_i) - (B + \sum_{i=1}^{q} \gamma_{i} B_i)K\right)x \nonumber.
\end{align}
The problem specifications, therefore, vary in the $K$ used as the stopping criterion of \Cref{eqn:exp_setup_obj} and whether $\{\delta^{*}_i\},\{\gamma^{*}_i\}$ are modified in the final specification as follows:
\begin{itemize}
    \item \textbf{Critical}: Consider $K^{(t)} := \argmin_{K} J(A, B, \{A_i\},\{B_i\},\{\delta_i^{(t)}\},\{\gamma_i^{(t)}\}, K)$ in each iterate; Take $\{\delta_i := \delta^{*}_i\},\{\gamma_i := \gamma^{*}_i\}$
    \item \textbf{Open-Loop Mean-Square Stable (Weak)}: Consider $K := \textbf{0}$; Take $\{\delta_i := \nu\delta^{*}_i\},\{\gamma_i := \nu\gamma^{*}_i\}$ for some $\nu\in(0,1)$
    \item \textbf{Open-Loop Mean-Square Unstable}: Consider $K := \textbf{0}$; Take $\{\delta_i := \delta^{*}_i\},\{\gamma_i := \gamma^{*}_i\}$
\end{itemize}

All prediction models $\widehat{f} : \Theta\rightarrow(A,B)$ were multi-layer perceptrons implemented in PyTorch \citep{paszke2019pytorch} with optimization done using Adam \citep{kingma2014adam} with a learning rate of $10^{-3}$ over 1,000 training steps. Training such models required roughly 10 minutes using an Nvidia RTX 2080 Ti GPU for each experimental setup. Running the robust control optimization algorithm took roughly one hour for 1,000 design trials.

\section{Experimental Dynamical Systems Setup}\label{section:exp_system_descriptions}
We consider the following dynamical systems in the experiments. Note that parameters were drawn from normal distributions centered on the nominally reported values from the respective papers these dynamics were considered from.

\subsection{Aircraft Control}\label{section:airfoil_task}
We consider the experimental setup studied in \citep{chrif2014aircraft}, in which optimal control is sought on the deflection angles of an aircraft.
In particular, we assume the dynamics are given by the following:
\begin{gather*}
    A = \begin{bmatrix}
        \gamma_{\beta} & \gamma_{p} & \gamma_{r} & 1 \\
        L_{\beta} & L_{p} & L_{r} & 0 \\
        N_{\beta} & N_{p} & N_{r} & 0 \\
        0 & 1 & 0 & 0
    \end{bmatrix}
    \quad
    B = \begin{bmatrix}
        \gamma_{\delta_{r}} & \gamma_{\delta_{a}} \\
        L_{\delta_{r}} & L_{\delta_{a}} \\
        N_{\delta_{r}} & N_{\delta_{a}} \\
        0 & 0
    \end{bmatrix},
    \qquad \theta := [\gamma, L, N]\in\mathbb{R}^{15}
\end{gather*}

The parameter sampling distributions are given in \Cref{table:aircraft_params}.

\begin{table}[h!]
\centering
\caption{\label{table:aircraft_params} Sampling of parameters for aircraft control task.}
\begin{tabular}{|c|c|c|l|}
\hline
\textbf{Parameter Group} & \textbf{Symbols} & \textbf{Distribution} & \textbf{Hyperparameter Sampling} \\
\hline
$\gamma$ coefficients & $\gamma_{\beta}, \gamma_p, \gamma_r, \gamma_{\delta_r}, \gamma_{\delta_a}$ & $\mathcal{N}(\mu_{\gamma}, \Sigma_{\gamma})$ &
$\mu_{\gamma} \sim \mathcal{U}([0,1]^5)$, $\Sigma_{\gamma} = A A^\top,\ A \sim \mathcal{U}([0,1]^{5\times5})$ \\
\hline
$L$ coefficients & $L_{\beta}, L_p, L_r, L_{\delta_r}, L_{\delta_a}$ & $\mathcal{N}(\mu_L, \Sigma_L)$ &
$\mu_L \sim \mathcal{U}([0,1]^5)$, $\Sigma_L = A A^\top,\ A \sim \mathcal{U}([0,1]^{5\times5})$ \\
\hline
$N$ coefficients & $N_{\beta}, N_p, N_r, N_{\delta_r}, N_{\delta_a}$ & $\mathcal{N}(\mu_N, \Sigma_N)$ &
$\mu_N \sim \mathcal{U}([0,1]^5)$, $\Sigma_N = A A^\top,\ A \sim \mathcal{U}([0,1]^{5\times5})$ \\
\hline
\end{tabular}
\end{table}


\subsection{Load Positioning Control}\label{section:load_pos_task}
We consider the load-positioning system of \citep{ahmadi2023lqr,jiang2016iterative}. In this case, the dynamics are given by:
\begin{gather*}
    A = \begin{bmatrix}
        0 & 1 & 0 & 0 \\
        0 & -\frac{d_{L}}{m_{L}}-\frac{d_{L}}{m_{B}} & \frac{k_{B}}{m_{B}} & \frac{d_{B}}{m_{B}} \\
        0 & 0 & 0 & 1 \\
        0 & \frac{d_{L}}{m_{B}} & -\frac{k_{B}}{m_{B}} & -\frac{d_{B}}{m_{B}}
    \end{bmatrix}
    \quad
    B = \begin{bmatrix}
        0 \\
        \frac{1}{m_{L}} + \frac{1}{m_{B}} \\
        0 \\
        -\frac{1}{m_{B}}
    \end{bmatrix},
    \qquad \theta := [m_{B},m_{L},d_{L},k_{B},d_{B}]\in\mathbb{R}^{5}
\end{gather*}

The parameter sampling distributions are given in \Cref{table:load_pos_params}.

\begin{table}[h!]
\centering
\caption{\label{table:load_pos_params} Sampling of parameters for load positioning task.}
\begin{tabular}{|c|c|c|c|}
\hline
\textbf{Parameter} & \textbf{Symbol} & \textbf{Distribution} & \textbf{Hyperparameter Sampling} \\
\hline
Mass of body       & $m_B$  & $m_B = 1 / u$ & $u \sim \mathcal{U}(0.04,\ 0.0667)$ \\
Mass of load       & $m_L$  & $m_L = 1 / u$ & $u \sim \mathcal{U}(0.3333,\ 1.0)$ \\
Stiffness of body  & $k_B$  & $k_B = u \cdot m_B$ & $u \sim \mathcal{U}(0.4,\ 1.3333)$ \\
Damping of body    & $d_B$  & $d_B = u \cdot m_B$ & $u \sim \mathcal{U}(0.004,\ 0.0667)$ \\
\hline
\end{tabular}
\end{table}


\subsection{Furuta Pendulum}\label{section:furuta_task}
We also consider the Furuta pendulum dynamical system given in \citep{arulmozhi2022kalman}, in which the system dynamics were specified by
\begin{gather*}
    A = \frac{1}{J_T} \begin{bmatrix}
    0 & 0 & J_T & 0 \\
    0 & \frac{1}{4} M_p L_p^2 L_r g & -(J_p + \frac{1}{4} m_p L_p^2) D_r & \frac{1}{2} m_p L_p L_r D_p \\
    0 & -\frac{1}{2} m_p L_p g (J_r + m_p L_r^2) & \frac{1}{2} m_p L_p L_r D_r & -(J_r + m_p L_r^2) D_p
    \end{bmatrix}\qquad
    B = \frac{1}{J_T} \begin{bmatrix}
    0 \\
    0 \\
    J_p + \frac{1}{4} m_p L_p^2 \\
    -\frac{1}{2} m_p L_p L_r
    \end{bmatrix}\\
    \theta := [M_p, m_p, L_p, L_r, J_T, J_p, J_r, D_p, D_r]\in\mathbb{R}^{9}
\end{gather*}

The parameter sampling distributions are given in \Cref{table:pendulum_params}.

\begin{table}[h!]
\centering
\caption{\label{table:pendulum_params} Sampling of parameters for Furuta pendulum task.}
\begin{tabular}{|c|c|c|l|}
\hline
\textbf{Parameter} & \textbf{Symbol} & \textbf{Distribution} & \textbf{Hyperparameter Values} \\
\hline
Pendulum mass & $M_p$ & $|\mathcal{N}(\mu_{M_p}, \sigma_{M_p}^2)|$ & $\mu_{M_p} = 0.024,\ \sigma_{M_p} \sim \mathcal{U}(0,1)$ \\
Rotor mass & $m_p$ & $|\mathcal{N}(\mu_{m_p}, \sigma_{m_p}^2)|$ & $\mu_{m_p} = 0.095,\ \sigma_{m_p} \sim \mathcal{U}(0,1)$ \\
Pendulum length & $L_p$ & $|\mathcal{N}(\mu_{L_p}, \sigma_{L_p}^2)|$ & $\mu_{L_p} = 0.129,\ \sigma_{L_p} \sim \mathcal{U}(0,1)$ \\
Rotor length & $L_r$ & $|\mathcal{N}(\mu_{L_r}, \sigma_{L_r}^2)|$ & $\mu_{L_r} = 0.085,\ \sigma_{L_r} \sim \mathcal{U}(0,1)$ \\
Total inertia & $J_T$ & $|\mathcal{N}(\mu_{J_T}, \sigma_{J_T}^2)|$ & $\mu_{J_T} = f(\mu_{m_p}, \mu_{L_r}, \mu_{J_r}, \mu_{J_p})$, $\sigma_{J_T} \sim \mathcal{U}(0,1)$ \\
Pendulum inertia & $J_p$ & $|\mathcal{N}(\mu_{J_p}, \sigma_{J_p}^2)|$ & $\mu_{J_p} = \frac{M_p L_p^2}{12},\ \sigma_{J_p} \sim \mathcal{U}(0,1)$ \\
Rotor inertia & $J_r$ & $|\mathcal{N}(\mu_{J_r}, \sigma_{J_r}^2)|$ & $\mu_{J_r} = \frac{m_p L_r^2}{12},\ \sigma_{J_r} \sim \mathcal{U}(0,1)$ \\
Pendulum damping & $D_p$ & $|\mathcal{N}(\mu_{D_p}, \sigma_{D_p}^2)|$ & $\mu_{D_p} = 0.0005,\ \sigma_{D_p} \sim \mathcal{U}(0,1)$ \\
Rotor damping & $D_r$ & $|\mathcal{N}(\mu_{D_r}, \sigma_{D_r}^2)|$ & $\mu_{D_r} = 0.0015,\ \sigma_{D_r} \sim \mathcal{U}(0,1)$ \\
\hline
\end{tabular}
\end{table}

\newpage
\subsection{DC Microgrids}\label{section:dc_task}
We additionally consider the LQR model of DC microgrids given in \citep{liu2023novel}, in which the system dynamics were specified by
\begin{gather*}
A = \begin{bmatrix}
\frac{2(-u_0d - NK_2S)}{V_sd} & 0 & \frac{-2NK_4S}{V_sd} & \frac{-4NK_5S}{V_sd} & \frac{2u_0}{V_s} & 0 & 0 & 0 & 0 \\
0 & \frac{2(-u_0d - NK_3S)}{V_sd} & \frac{4NK_4S}{V_sd} & \frac{6NK_5S}{V_sd} & 0 & \frac{2u_0}{V_s} & 0 & 0 & 0 \\
\frac{6NK_2S}{V_sd} & \frac{4NK_3S}{V_sd} & \frac{2(-u_0d - NK_4S)}{V_sd} & 0 & 0 & 0 & \frac{2u_0}{V_s} & 0 & 0 \\
\frac{-4NK_2S}{V_sd} & \frac{-2NK_3S}{V_sd} & 0 & \frac{2(-u_0d - NK_5S)}{V_sd} & 0 & 0 & 0 & \frac{2u_0}{V_s} & 0 \\
\frac{u_0}{V_t} & 0 & 0 & 0 & \frac{-u_0}{V_t} & 0 & 0 & 0 & 0 \\
0 & \frac{u_0}{V_t} & 0 & 0 & 0 & \frac{-u_0}{V_t} & 0 & 0 & 0 \\
0 & 0 & \frac{u_0}{V_t} & 0 &  & 0 & \frac{-u_0}{V_t} & 0 & 0 \\
0 & 0 & 0 & \frac{u_0}{V_t} & 0 & 0 & 0 & \frac{-u_0}{V_t} & 0 \\
\frac{NRT}{FC_2^c} & \frac{-NRT}{FC_3^c} & \frac{NRT}{FC_4^c} & \frac{NRT}{FC_5^c} & 0 & 0 & 0 & 0 & 0
\end{bmatrix}
B = \begin{bmatrix}
\frac{C_2^t - C_2^c}{V_s/2} \\ \frac{C_3^t - C_3^c}{V_s/2} \\ \frac{C_4^t - C_4^c}{V_s/2} \\ \frac{C_5^t - C_5^c}{V_s/2} \\ \frac{C_2^c - C_2^t}{V_t}   \\ \frac{C_3^c - C_3^t}{V_t}   \\ \frac{C_4^c - C_4^t}{V_t}   \\ \frac{C_5^c - C_5^t}{V_t}   \\ 0
\end{bmatrix}\\
\theta := [V_s, V_t, S, d, N, K_2, K_3, K_4, K_5, C^c_2, C^c_3, C^c_4, C^c_5, C^t_2, C^t_3, C^t_4, C^t_5]\in\mathbb{R}^{17}
\end{gather*}

The parameter sampling distributions are given in \Cref{table:dc_params}.

\begin{table}[H]
\centering
\caption{\label{table:dc_params} Sampling of parameters for DC microgrids task.}
\resizebox{\textwidth}{!}{%
\begin{tabular}{|c|c|c|l|}
\hline
\textbf{Parameter} & \textbf{Symbol} & \textbf{Distribution} & \textbf{Hyperparameter Values} \\
\hline
Source voltage           & $V_s$       & $\mathcal{N}(\mu_{V_s}, \sigma_{V_s}^2)$ & $\mu_{V_s} = 40,\ \sigma_{V_s} = 26.67$ \\
Terminal voltage         & $V_t$       & $\mathcal{N}(\mu_{V_t}, \sigma_{V_t}^2)$ & $\mu_{V_t} = 500,\ \sigma_{V_t} = 333.33$ \\
Surface area             & $S$         & $\mathcal{N}(\mu_S, \sigma_S^2)$         & $\mu_S = 24,\ \sigma_S = 16.00$ \\
Diffusion coefficient    & $d$         & $\mathcal{N}(\mu_d, \sigma_d^2)$         & $\mu_d = 1.27\times10^{-3},\ \sigma_d = 8.47\times10^{-4}$ \\
Number of layers         & $N$         & $\mathcal{N}(\mu_N, \sigma_N^2)$         & $\mu_N = 37,\ \sigma_N = 24.67$ \\
Reaction rate constant 2 & $K_2$       & $\mathcal{N}(\mu_{K_2}, \sigma_{K_2}^2)$ & $\mu_{K_2} = 8.768\times10^{-10},\ \sigma_{K_2} = 5.845\times10^{-10}$ \\
Reaction rate constant 3 & $K_3$       & $\mathcal{N}(\mu_{K_3}, \sigma_{K_3}^2)$ & $\mu_{K_3} = 3.222\times10^{-10},\ \sigma_{K_3} = 2.148\times10^{-10}$ \\
Reaction rate constant 4 & $K_4$       & $\mathcal{N}(\mu_{K_4}, \sigma_{K_4}^2)$ & $\mu_{K_4} = 6.825\times10^{-10},\ \sigma_{K_4} = 4.550\times10^{-10}$ \\
Reaction rate constant 5 & $K_5$       & $\mathcal{N}(\mu_{K_5}, \sigma_{K_5}^2)$ & $\mu_{K_5} = 5.897\times10^{-10},\ \sigma_{K_5} = 3.931\times10^{-10}$ \\
Capacitance cell 2 (cathode) & $C^c_2$ & $\mathcal{N}(\mu_{C^c_2}, \sigma_{C^c_2}^2)$ & $\mu_{C^c_2} = 1.0,\ \sigma_{C^c_2} = 0.667$ \\
Capacitance cell 3 (cathode) & $C^c_3$ & $\mathcal{N}(\mu_{C^c_3}, \sigma_{C^c_3}^2)$ & $\mu_{C^c_3} = 1.0,\ \sigma_{C^c_3} = 0.667$ \\
Capacitance cell 4 (cathode) & $C^c_4$ & $\mathcal{N}(\mu_{C^c_4}, \sigma_{C^c_4}^2)$ & $\mu_{C^c_4} = 1.0,\ \sigma_{C^c_4} = 0.667$ \\
Capacitance cell 5 (cathode) & $C^c_5$ & $\mathcal{N}(\mu_{C^c_5}, \sigma_{C^c_5}^2)$ & $\mu_{C^c_5} = 1.0,\ \sigma_{C^c_5} = 0.667$ \\
Capacitance cell 2 (total)   & $C^t_2$ & $\mathcal{N}(\mu_{C^t_2}, \sigma_{C^t_2}^2)$ & $\mu_{C^t_2} = 1.0,\ \sigma_{C^t_2} = 0.667$ \\
Capacitance cell 3 (total)   & $C^t_3$ & $\mathcal{N}(\mu_{C^t_3}, \sigma_{C^t_3}^2)$ & $\mu_{C^t_3} = 1.0,\ \sigma_{C^t_3} = 0.667$ \\
Capacitance cell 4 (total)   & $C^t_4$ & $\mathcal{N}(\mu_{C^t_4}, \sigma_{C^t_4}^2)$ & $\mu_{C^t_4} = 1.0,\ \sigma_{C^t_4} = 0.667$ \\
Capacitance cell 5 (total)   & $C^t_5$ & $\mathcal{N}(\mu_{C^t_5}, \sigma_{C^t_5}^2)$ & $\mu_{C^t_5} = 1.0,\ \sigma_{C^t_5} = 0.667$ \\
\hline
\end{tabular}
}
\end{table}

\newpage
\subsection{Nuclear Plant}\label{section:nuclear_task}
We finally consider the terminal sliding-mode control of a nuclear plant from \citep{kirgni2023lqr}, given by: 
\begin{gather*}
A = 
\begin{bmatrix}
-\frac{\beta}{\Lambda} & \frac{\beta_1}{\Lambda} & \frac{\beta_2}{\Lambda} & \frac{\beta_3}{\Lambda} & \frac{\alpha_f \theta}{\Lambda} & \frac{\alpha_c \theta}{2 \Lambda} & -\frac{\sigma_X \theta}{\nu \Sigma_f \Lambda} & 0 \\
\lambda_1 & -\lambda_1 & 0 & 0 & 0 & 0 & 0 & 0 \\
\lambda_2 & 0 & -\lambda_2 & 0 & 0 & 0 & 0 & 0 \\
\lambda_3 & 0 & 0 & -\lambda_3 & 0 & 0 & 0 & 0 \\
\frac{\epsilon_f P_0}{\mu_f} & 0 & 0 & 0 & -\frac{\Omega}{\mu_f} & \frac{\Omega}{\mu_f} & 0 & 0 \\
\frac{(1-\epsilon_f) P_0}{\mu_c} & 0 & 0 & 0 & \frac{\Omega}{\mu_c} & \frac{2M + \Omega}{2\mu_c} & 0 & 0 \\
(\gamma_X \Sigma_f - \sigma_X X_0)\phi_0 P_0 & 0 & 0 & 0 & 0 & 0 & -\left(\lambda_X + \phi_0 P_0 \theta\right) & \lambda_I \\
\gamma_I \Sigma_f \phi_0 P_0 & 0 & 0 & 0 & 0 & 0 & 0 & -\lambda_I
\end{bmatrix}
\quad
B = 
\begin{bmatrix}
-\frac{\theta}{\Lambda} \\
0 \\
0 \\
0 \\
0 \\
0 \\
0 \\
0
\end{bmatrix}\\
\theta := [\alpha_c, \alpha_f, \beta, \beta_1, \beta_2, \beta_3, \Lambda, \lambda_I, \lambda_X, \lambda_1, \lambda_2, \lambda_3, \mu_f, \mu_c, \gamma_X, \gamma_I, \sigma_X, \Sigma_f, \nu, \epsilon_f, \Omega, M, \theta, P_0, \phi_0, X_0]\in\mathbb{R}^{26}
\end{gather*}

The parameter sampling distributions are given in \Cref{table:fusion_params}.

\begin{table}[H]
\centering
\caption{\label{table:fusion_params} Sampling of parameters for nuclear plant task. Note that some of the parameters, specifically $\theta$, $\mu_c$, $\nu$, $\Omega$, $M$, $\phi_0$, $X_0$, were not ascribed values in the paper from which the dynamics were provided. These were assumed to be in normalized units for the simulation.}
\resizebox{\textwidth}{!}{%
\begin{tabular}{|c|c|c|l|}
\hline
\textbf{Parameter} & \textbf{Symbol} & \textbf{Distribution} & \textbf{Hyperparameter Values} \\
\hline
Coolant reactivity coefficient       & $\alpha_c$     & $\mathcal{N}(\mu_{\alpha_c}, \sigma_{\alpha_c}^2)$ & $\mu = -2.0,\ \sigma = 2.0$ \\
Fuel reactivity coefficient          & $\alpha_f$     & $\mathcal{N}(\mu_{\alpha_f}, \sigma_{\alpha_f}^2)$ & $\mu = -14.0,\ \sigma = 14.0$ \\
Total delayed neutron fraction       & $\beta$        & $\mathcal{N}(\mu_{\beta}, \sigma_{\beta}^2)$       & $\mu = 0.0065,\ \sigma = 0.0065$ \\
Delayed neutron precursor (group 1)  & $\beta_1$      & $\mathcal{N}(\mu_{\beta_1}, \sigma_{\beta_1}^2)$   & $\mu = 0.00021,\ \sigma = 0.00021$ \\
Delayed neutron precursor (group 2)  & $\beta_2$      & $\mathcal{N}(\mu_{\beta_2}, \sigma_{\beta_2}^2)$   & $\mu = 0.00225,\ \sigma = 0.00225$ \\
Delayed neutron precursor (group 3)  & $\beta_3$      & $\mathcal{N}(\mu_{\beta_3}, \sigma_{\beta_3}^2)$   & $\mu = 0.00404,\ \sigma = 0.00404$ \\
Prompt neutron lifetime              & $\Lambda$      & $\mathcal{N}(\mu_{\Lambda}, \sigma_{\Lambda}^2)$   & $\mu = 2.1,\ \sigma = 2.1$ \\
Iodine decay constant                & $\lambda_I$    & $\mathcal{N}(\mu_{\lambda_I}, \sigma_{\lambda_I}^2)$ & $\mu = 10.0,\ \sigma = 10.0$ \\
Xenon decay constant                 & $\lambda_X$    & $\mathcal{N}(\mu_{\lambda_X}, \sigma_{\lambda_X}^2)$ & $\mu = 2.9,\ \sigma = 2.9$ \\
Decay const. neutron prec. group 1   & $\lambda_1$    & $\mathcal{N}(\mu_{\lambda_1}, \sigma_{\lambda_1}^2)$ & $\mu = 0.0124,\ \sigma = 0.0124$ \\
Decay const. neutron prec. group 2   & $\lambda_2$    & $\mathcal{N}(\mu_{\lambda_2}, \sigma_{\lambda_2}^2)$ & $\mu = 0.0369,\ \sigma = 0.0369$ \\
Decay const. neutron prec. group 3   & $\lambda_3$    & $\mathcal{N}(\mu_{\lambda_3}, \sigma_{\lambda_3}^2)$ & $\mu = 0.632,\ \sigma = 0.632$ \\
Fuel heat capacity                   & $\mu_f$        & $\mathcal{N}(\mu_{\mu_f}, \sigma_{\mu_f}^2)$       & $\mu = 0.0263,\ \sigma = 0.0263$ \\
Coolant heat capacity                & $\mu_c$        & $\mathcal{N}(\mu_{\mu_c}, \sigma_{\mu_c}^2)$       & $\mu = 1.0,\ \sigma = 1.0$ \\
Fission yield (xenon)                & $\gamma_X$     & $\mathcal{N}(\mu_{\gamma_X}, \sigma_{\gamma_X}^2)$ & $\mu = 0.003,\ \sigma = 0.003$ \\
Fission yield (iodine)               & $\gamma_I$     & $\mathcal{N}(\mu_{\gamma_I}, \sigma_{\gamma_I}^2)$ & $\mu = 0.059,\ \sigma = 0.059$ \\
Xenon absorption cross-section       & $\sigma_X$     & $\mathcal{N}(\mu_{\sigma_X}, \sigma_{\sigma_X}^2)$ & $\mu = 3.5\times 10^{-18},\ \sigma = 3.5\times 10^{-18}$ \\
Fission cross-section                & $\Sigma_f$     & $\mathcal{N}(\mu_{\Sigma_f}, \sigma_{\Sigma_f}^2)$ & $\mu = 0.3358,\ \sigma = 0.3358$ \\
Neutrons per fission                 & $\nu$          & $\mathcal{N}(\mu_{\nu}, \sigma_{\nu}^2)$           & $\mu = 1.0,\ \sigma = 1.0$ \\
Power deposition fraction in fuel    & $\epsilon_f$   & $\mathcal{N}(\mu_{\epsilon_f}, \sigma_{\epsilon_f}^2)$ & $\mu = 0.92,\ \sigma = 0.92$ \\
Heat transfer coefficient            & $\Omega$       & $\mathcal{N}(\mu_{\Omega}, \sigma_{\Omega}^2)$     & $\mu = 1.0,\ \sigma = 1.0$ \\
Coolant mass                         & $M$            & $\mathcal{N}(\mu_M, \sigma_M^2)$                   & $\mu = 1.0,\ \sigma = 1.0$ \\
Control reactivity                   & $\theta$       & $\mathcal{N}(\mu_{\theta}, \sigma_{\theta}^2)$     & $\mu = 1.0,\ \sigma = 1.0$ \\
Nominal reactor power                & $P_0$          & $\mathcal{N}(\mu_{P_0}, \sigma_{P_0}^2)$           & $\mu = 3.0,\ \sigma = \sqrt{3.0}$ \\
Neutron flux                         & $\phi_0$       & $\mathcal{N}(\mu_{\phi_0}, \sigma_{\phi_0}^2)$     & $\mu = 1.0,\ \sigma = 1.0$ \\
Nominal xenon conc.                  & $X_0$          & $\mathcal{N}(\mu_{X_0}, \sigma_{X_0}^2)$           & $\mu = 1.0,\ \sigma = 1.0$ \\
\hline
\end{tabular}
}
\end{table}

\newpage
\section{Additional Experimental Results}\label{section:addn_exp_results}
\subsection{Raw Results}
We here provide the raw regrets from \Cref{section:experiments_regret} in \Cref{table:regret} and the proportion of stabilized dynamics in \Cref{table:stab_perc}.

\begin{table}[H]
\caption{\label{table:regret} Each of the results below are median normalized regrets over 1,000 i.i.d. test samples with median absolute deviations in parentheses. For clarity, we have not reported any cases with $>80\%$ unstable cases (see \Cref{table:stab_perc} for respective percentages).}
\centering
\resizebox{\textwidth}{!}{%
\begin{tabular}{lccccc}
\toprule
 & Airfoil & Load Positioning & Furuta Pendulum & DC Microgrids & Fusion Plant \\
\midrule
Random Critical & --- & --- & --- & --- & --- \\
Random OL MSS (Weak) & 0.091 (0.045) & --- & --- & --- & --- \\
Random OL MSUS & --- & --- & --- & --- & --- \\
Row-Col Critical & --- & --- & --- & --- & --- \\
Row-Col OL MSS (Weak) & 0.101 (0.063) & --- & --- & --- & --- \\
Row-Col OL MSUS & 0.104 (0.066) & --- & --- & --- & --- \\
CPC & \textbf{0.085 (0.058)} & \textbf{0.033 (0.023)} & \textbf{0.002 (0.002)} & \textbf{0.000 (0.000)} & \textbf{0.011 (0.011)} \\
Shared Lyapunov & 0.349 (0.221) & 0.358 (0.255) & 0.055 (0.039) & 0.000 (0.000) & 0.030 (0.027) \\
Auxiliary Stabilizer & 0.322 (0.202) & 0.343 (0.256) & 0.048 (0.036) & 0.000 (0.000) & 0.029 (0.027) \\
$\mathcal{H}_{\infty}$ & 0.288 (0.188) & 0.087 (0.060) & 0.063 (0.045) & 0.012 (0.010) & 0.035 (0.032) \\
\bottomrule
\end{tabular}
}
\end{table}

\begin{table}[H]
\caption{\label{table:stab_perc} Percentages of cases with unstable robust control over 1,000 i.i.d. test samples.} 
\centering
\resizebox{\textwidth}{!}{%
\begin{tabular}{lccccc}
\toprule
 & Airfoil & Load Positioning & Furuta Pendulum & DC Microgrids & Fusion Plant \\
\midrule
Random Critical & 1.000 & 1.000 & 1.000 & 1.000 & 1.000 \\
Random OL MSS (Weak) & 0.783 & 1.000 & 0.920 & 1.000 & 0.987 \\
Random OL MSUS & 0.825 & 1.000 & 0.961 & 1.000 & 0.990 \\
Row-Col Critical & 0.998 & 1.000 & 1.000 & 1.000 & 1.000 \\
Row-Col OL MSS (Weak) & 0.200 & 1.000 & 0.948 & 1.000 & 0.960 \\
Row-Col OL MSUS & 0.210 & 1.000 & 0.951 & 1.000 & 0.963 \\
CPC & 0.088 & 0.251 & 0.174 & 0.009 & 0.643 \\
Shared Lyapunov & 0.093 & 0.229 & 0.141 & 0.008 & 0.561 \\
Auxiliary Stabilizer & 0.087 & 0.223 & 0.142 & 0.007 & 0.556 \\
$\mathcal{H}_{\infty}$ & 0.081 & 0.236 & 0.142 & 0.007 & 0.570 \\
\bottomrule
\end{tabular}
}
\end{table}

\subsection{Method Timings}\label{section:method_timing}
CPC is more computationally expensive than alternatives. This pairs well with the anticipated use cases, namely in engineering design workflows involving UCCD, i.e. where the control problem is solved \textit{offline}.

\begin{table}[H]
  \caption{Comparison of average method timing (to convergence) across tasks as measured over 10 trials for each experimental setup.}
  \centering
  \begin{tabular}{lccccc}
    \hline
    \textbf{Method} & \textbf{Airfoil} & \textbf{Load Position} & \textbf{Pendulum} & \textbf{Battery} & \textbf{Fusion} \\
    \hline
    $\mathcal{H}_{\infty}$      & 0.17 & 0.14 & 0.16 & 0.19 & 0.23 \\
    Shared Lyapunov            & 2.68 & 2.63 & 2.53 & 1.13 & 2.12 \\
    Auxiliary Stabilizer       & 1.70 & 1.75 & 1.85 & 1.01 & 1.41 \\
    Random Critical            & 7.50 & 1.74 & 6.78 & 7.02 & 10.50 \\
    Random OL MSS (Weak)       & 6.72 & 1.34 & 4.94 & 7.36 & 6.39 \\
    Random OL MSUS             & 6.63 & 2.18 & 7.25 & 15.21 & 7.11 \\
    Row-Col Critical           & 5.37 & 2.32 & 5.51 & 5.48 & 5.44 \\
    Row-Col OL MSS (Weak)      & 4.35 & 2.34 & 4.13 & 1.00 & 2.88 \\
    Row-Col OL MSUS            & 3.43 & 1.84 & 4.77 & 4.54 & 3.18 \\
    CRC                        & 13.03 & 13.44 & 12.47 & 12.19 & 10.88 \\
    \hline
  \end{tabular}
  \label{tab:comparison}
\end{table}

\end{document}